

\documentclass[12pt]{amsart}
\usepackage{extsizes}
\usepackage[applemac]{inputenc}
\usepackage[titletoc]{appendix}
\usepackage{pxfonts}

\usepackage[backend=biber, style=numeric, date=year, url=false, doi=false, isbn=false, eprint=true, firstinits=true, maxcitenames=5, maxbibnames=5]{biblatex}
\AtEveryBibitem{%
  \ifentrytype{online}{%
  }{%
    \clearfield{eprint}%
    \clearfield{urldate}%
  }%
}

\makeatletter
\def\blx@maxline{77}
\makeatother

\addbibresource{BVBFV_notes.bib}
\usepackage{blindtext}
\usepackage{amssymb}
\usepackage{amsfonts}
\usepackage{amsthm}
\usepackage{tikz}
\usepackage[margin=1.0in]{geometry}
\usepackage{caption}
\usepackage{color}
\usepackage{graphicx}
\usepackage{xcolor}
\usepackage{shadethm}
\usepackage{subfigure}
\usepackage[mathcal]{euscript}
\usepackage{epigraph}
\usepackage[T1]{fontenc}

\usepackage{tgpagella}


 
\usepackage{enumerate}
\usepackage{verbatim}
\usetikzlibrary{trees}
\usetikzlibrary{decorations.markings}
\usetikzlibrary{arrows}
\usepackage{mathrsfs}
\usepackage{xparse}
\usepackage{tikz-cd}

\usetikzlibrary{positioning,arrows,patterns}
\usetikzlibrary{decorations.markings}
\usetikzlibrary{calc}
\tikzset{
  photon/.style={decorate, decoration={snake}, draw=black},
  fermion/.style={draw=black, postaction={decorate},decoration={markings,mark=at position .55 with {\arrow{>}}}},
  vertex/.style={draw,shape=circle,fill=black,minimum size=5pt,inner sep=0pt},
particle/.style={thick,draw=black},
particle2/.style={thick,draw=blue},
avector/.style={thick,draw=black, postaction={decorate},
    decoration={markings,mark=at position 1 with {\arrow[black]{triangle 45}}}},
gluon/.style={decorate, draw=black,
    decoration={coil,aspect=0}}
 }
\NewDocumentCommand\semiloop{O{black}mmmO{}O{above}}
{%
\draw[#1] let \p1 = ($(#3)-(#2)$) in (#3) arc (#4:({#4+180}):({0.5*veclen(\x1,\y1)})node[midway, #6] {#5};)
}



\usepackage{url}
\usepackage{hyperref}
\hypersetup{colorlinks=true, linktocpage, linkcolor=blue, urlcolor=cyan,citecolor=red}



\newtheoremstyle{own}
  {3pt}
  {3pt}
  {\sffamily}
  {0pt}
  {\bfseries}
  {.}
  {5pt plus 1pt minus 1pt}
  {}

\theoremstyle{plain}
\newtheorem{thm}{Theorem}[subsection]

\theoremstyle{definition}
\newtheorem{defn}{Definition}[subsection]
\newtheorem{prop}{Proposition}[subsection]
\newtheorem{lem}{Lemma}[subsection]
\newtheorem{cor}{Corollary}[subsection]

\theoremstyle{definition}
\newtheorem{exe}{Exercise}[subsection]
\newtheorem{ex}{Example}[subsection]

\theoremstyle{remark}
\newtheorem{rem}{Remark}[subsection]

\newtheorem{notation}{Notation}[subsection]

\newcommand{\R}{\mathbb{R}}

\newcommand{\N}{\mathbb{N}}

\newcommand{\E}{\mathbb{E}}



\newcommand{\ndash}{\nobreakdash-\hspace{0pt}}

\newcommand{\ii}{{\mathrm{i}}}
\newcommand{\dd}{{\mathrm{d}}} 
\newcommand{\DD}{{\mathrm{D}}} 
\newcommand{\testleftlong}{\longleftarrow\!\shortmid}

\DeclareMathOperator{\Mat}{Mat}
\DeclareMathOperator{\GL}{GL}

\DeclareMathOperator{\Path}{\text{Path}}
\DeclareMathOperator{\Sym}{\text{Sym}}

\DeclareMathOperator{\ev}{ev}
\newcommand{\id}{\mathrm{id}}

\DeclareMathOperator{\gh}{gh}

\DeclareMathOperator{\Ber}{Ber}

\DeclareMathOperator{\Cob}{\bf Cob}
\DeclareMathOperator{\Vect}{\bf Vect}

\DeclareMathOperator{\Hom}{Hom}

\DeclareMathOperator{\Dens}{Dens}


%
%
%


\newcommand{\LL}{\mathsf{L}}

\newcommand{\bbZ}{{\mathbb{Z}}}

\newcommand{\de}{\partial}

\newcommand{\calA}{\mathcal{A}}
\newcommand{\calB}{\mathcal{B}}
\newcommand{\calH}{\mathcal{H}}
\newcommand{\calS}{\mathcal{S}}
\newcommand{\calC}{\mathcal{C}}

\newcommand{\calG}{\mathcal{G}}

\newcommand{\calO}{\mathcal{O}}
\newcommand{\calL}{\mathcal{L}}
\newcommand{\calM}{\mathcal{M}}

\newcommand{\calN}{\mathcal{N}}
\newcommand{\sfL}{\mathsf{L}}

\newcommand{\calE}{\mathcal{E}}
\newcommand{\calP}{\mathcal{P}}
\newcommand{\calF}{\mathcal{F}}

\newcommand{\btpsi}{\boldsymbol{\widetilde{\psi}}}

\newcommand{\sfe}{{\mathsf{e}}}

\newcommand{\frg}{{\mathfrak{g}}}

\newcommand{\sfx}{{\mathsf{x}}}


\def\gpd{\,\lower1pt\hbox{$\longrightarrow$}\hskip-.24in\raise2pt
               \hbox{$\longrightarrow$}\,}

\let\Tilde=\widetilde
\let\Bar=\overline

\let\Hat=\widehat

\newcommand{\hateta}{\widehat{\boldsymbol\eta}}
\newcommand{\hatX}{\widehat{\mathsf{X}}}

\DeclareMathOperator{\dr}{d}
\DeclareMathOperator{\Map}{Map}

\newcommand{\I}{\mathrm{i}}

\newcommand{\ee}{\textnormal{e}}

\newcommand{\calV}{\mathcal{V}}

\newcommand{\Z}{{\mathbb{Z}}}
\newcommand{\calY}{\mathcal{Y}}

\setlength{\parindent}{0pt}

\makeatletter

\pgfdeclareshape{genus}{
 \anchor{center}{\pgfpointorigin}
\backgroundpath{
    \begingroup
    \tikz@mode
    \iftikz@mode@fill

         \pgfpathmoveto{\pgfqpoint{-0.78cm}{-.17cm}}
     \pgfpathcurveto %
            {\pgfpoint{-0.35cm}{-.44cm}}
        {\pgfpoint{0.35cm}{-.44cm}}
        {\pgfpoint{.78cm}{-0.17cm}} 
     \pgfpathmoveto{\pgfqpoint{-0.78cm}{-0.17cm}}
     \pgfpathcurveto %
            {\pgfpoint{-0.25cm}{.25cm}}
        {\pgfpoint{.25cm}{.25cm}}
        {\pgfpoint{0.78cm}{-0.17cm}}
        \pgfusepath{fill}
        \fi

        \iftikz@mode@draw
     \pgfpathmoveto{\pgfqpoint{-1cm}{0cm}}
     \pgfpathcurveto %
            {\pgfpoint{-0.5cm}{-.5cm}}
        {\pgfpoint{0.5cm}{-.5cm}}
        {\pgfpoint{1cm}{0cm}}

     \pgfpathmoveto{\pgfqpoint{-0.75cm}{-0.15cm}}
     \pgfpathcurveto %
            {\pgfpoint{-0.25cm}{.25cm}}
        {\pgfpoint{.25cm}{.25cm}}
        {\pgfpoint{0.75cm}{-0.15cm}}
              \pgfusepath{stroke}
        \fi
        \endgroup
    }
    }

    \makeatother

\begin{document}

\title[Introduction to the BV-BFV formalism]{Introduction to the BV-BFV formalism}
\author[A. S. Cattaneo]{Alberto S. Cattaneo}
\author[N. Moshayedi]{Nima Moshayedi}
\address{Institut f\"ur Mathematik\\ Universit\"at Z\"urich\\ 
Winterthurerstrasse 190
CH-8057 Z\"urich}
\email[A.~S.~Cattaneo]{cattaneo@math.uzh.ch}
\address{Institut f\"ur Mathematik\\ Universit\"at Z\"urich\\ 
Winterthurerstrasse 190
CH-8057 Z\"urich}
\email[N.~Moshayedi]{nima.moshayedi@math.uzh.ch}

\maketitle

\emph{
\begin{verse}
\begin{flushright}
{\hypersetup{hidelinks}
\ldots then like  \href{https://youtu.be/BhPtkIMEnjk}{hedgehogs} which\\
Lie tumbling in my barefoot way, and mount\\
Their pricks at my footfall\ldots\\
\textsc{W. Shakespeare,}  \normalfont{The Tempest, Act II, Scene 2}
}
\end{flushright}
\end{verse}}

\begin{abstract}
These notes give an introduction to the mathematical framework of the Batalin--Vilkovisky and Batalin--Fradkin--Vilkovisky formalisms. Some of the presented content was given as a mini course by the first author at the 2018 QSPACE conference in Benasque.
\end{abstract}

\tableofcontents

\begin{changemargin}{1.8cm}{1.8cm}
\emph{
\\
Theoria quantica camporum iuxta Feynmanum per integralem functionalem definiri potest. Extantibus symmetriis degenerata integratio emendanda est. Cohomologica methodus a Batalino Vilkoviskyoque inventa au\-xilium fert.
\\
Profecto, regularitatis assumptionibus fictis, spatium camporum impari symplecticae varietati secumque commutanti functionali classica actio extendi possunt. Quae solutio aequationis cardinalis classicae vocatur. 
\\
Quanticae aequationi tamen mensurae compatibilis opus est, quae operatorem laplacianum batalinovilkoviskyanum iniicit. Si quanticae aequationis cardinalis in lagrangiana subvarietate solutio integratur, per deformationes invariantem theorema egregium batalinovilkoviskyanum praedicat proventum. Dissipata degeneratio non dissipata symmetria.
\\
\\
\textsc{C. Ascanius Erinaceatus}, \normalfont{De aequatione cardinali batalinovilkoviskyana}
}
\end{changemargin}

\section{Introduction}
The concept of gauge theory has become an important concept in theoretical and mathematical physics over the last decades. One can actually derive important dynamical aspects by considering symmetry arguments, comig from the concept of gauge invariance. There are several mathematical constructions for the quantization of field theories, such as the Feynman path integral approach (\cite{Feynman1942,Feynman1949,Feynman1950,FH}), which sums up the amplitudes associated to different paths connecting a given initial and final state to get the transition amplitude. The advantage of the path integral approach is that all symmetries of the given theory are explicitly (though formally) preserved. Even though the path integral gives a nice way of a field quantization, it is mathematically not rigorously defined. However, one can still generate, in an unambiguous way, a perturbative expansion around some classical field configuration in terms of so-called ``Feynman diagrams'' (see e.g. \cite{P} for a mathematical description). This type of representation can only be defined for non-gauge theories, since it crucially relies on regularity conditions that exclude the possibility of gauge invariance. Nevertheless, there exists a way out of this problem. One approach to do so was given by Faddeev and Popov in $1967$ by introducing the concept of what is called ``\emph{ghost fields}''. In this way the local symmetry of the gauge theory is replaced by a global symmetry in an extended, graded configuration space, called BRST symmetry (\cite{BRS1,BRS2,BRS3,Tyutin1976}). The resulting process is called the quantization procedure in the BRST formalism.

\vspace{0.5cm}

The focus of these notes lies in a special formalism of gauge theory developed by Batalin and Vilkovisky during the 1970's and 1980's (\cite{BV1,BV2,BV3}); see also \cite{S,Henneaux1994,Anselmi1,Catt1,Cqr,Cost,KR,Mn2} and references therein. A similar approach was developed by Batalin, Fradkin and Vilkovisky to path integrals for theories with gauge freedom in the Hamiltonian approach, whereas the formalism provided by Batalin and Vilkovisky was given in the Lagrangian setting using a Lagrangian modified by Faddeev--Popov ghost and their BRST variations, which contribute to the path integral a factor which nicely divides out the gauge redundancy. This formalism is known today as the BV formalism. The methods of gauge fixing was extended by Batalin, Fradkin, Fradkina and Vilkovisky in a series of papers \cite{BF1,BF2,FV1,FV2}. The Hamiltonian version is known today as the BFV formalism (see \cite{BF1,BF2,FV1,FV2} and also \cite{Sta,Sch}). The BV and BFV formalisms have been developed through time by the work of many different people. In \cite{CMR1} Cattaneo, Mnev and Reshetikhin studied for the first time the classical BV framework for gauge theories on spacetime manifolds with boundary and extended everything to manifolds with corners. Recently, in \cite{CMR2,CMR3}, they introduced a general perturbative quantization scheme for gauge theories on manifolds with boundary, compatible with cutting and gluing, in the cohomological symplectic formalism, which we call the BV-BFV formalism. Different types of theories are already explicitly described for this procedure, such as e.g. abelian $BF$ theory (\cite{GH,S,Mn}), different AKSZ theories (\cite{AKSZ}) (e.g. Chern--Simons theory with a splitting of the Lie algebra (\cite{AS,AS2,Witten1989}), the Poisson Sigma Model (\cite{CF1,CF2,CF4,CF5}), or non-abelian $BF$ theories (\cite{GH,Mn,CMR2})), and $2D$ Yang--Mills theory (\cite{IM}). 

\subsection*{Acknowledgements}
This research was (partly) supported by the NCCR SwissMAP, funded by the Swiss National Science Foundation, and by the
COST Action MP1405 QSPACE, supported by COST (European Cooperation in Science and Technology). We acknowledge partial support of SNF grant No. 200020\_172498/1. N. M. acknowledges partial support by the Forschungskredit of the University of Zurich, grant no. FK-18-095.

\vspace{0.2cm}

We thank Ivan Contreras, Pavel Mnev and Nicolai Reshetikhin for useful comments.

\section{Motivation}

\subsection{Classical Mechanics}

In classical mechanics we consider an action functional\footnote{In the physics literature, it is common to denote the time-derivatives by ``dots'', i.e. $\frac{\dd}{\dd t}q(t)=\dot{q}(t)$.} $$S(q)=\int_{t_0}^{t_1} L(q(t),\dot{q}(t))\dd t,$$ 

\begin{center}
\begin{figure}[h!]

\begingroup%
  \makeatletter%
  \providecommand\color[2][]{%
    \errmessage{(Inkscape) Color is used for the text in Inkscape, but the package 'color.sty' is not loaded}%
    \renewcommand\color[2][]{}%
  }%
  \providecommand\transparent[1]{%
    \errmessage{(Inkscape) Transparency is used (non-zero) for the text in Inkscape, but the package 'transparent.sty' is not loaded}%
    \renewcommand\transparent[1]{}%
  }%
  \providecommand\rotatebox[2]{#2}%
  \ifx\svgwidth\undefined%
    \setlength{\unitlength}{114.30579976bp}%
    \ifx\svgscale\undefined%
      \relax%
    \else%
      \setlength{\unitlength}{\unitlength * \real{\svgscale}}%
    \fi%
  \else%
    \setlength{\unitlength}{\svgwidth}%
  \fi%
  \global\let\svgwidth\undefined%
  \global\let\svgscale\undefined%
  \makeatother%
  \begin{picture}(1,0.67732551)%
    \put(0,0){\includegraphics[width=\unitlength]{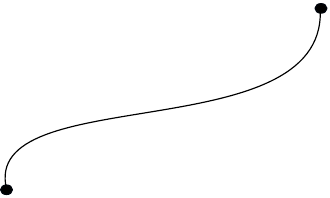}}%
    \put(0.56948513,0.21842524){\color[rgb]{0,0,0}\makebox(0,0)[lb]{\smash{$q$}}}%
    \put(0.09700054,0.01566911){\color[rgb]{0,0,0}\makebox(0,0)[lb]{\smash{$q(t_0)$}}}%
    \put(0.73540662,0.62451362){\color[rgb]{0,0,0}\makebox(0,0)[lb]{\smash{$q(t_1)$}}}%
  \end{picture}%
\endgroup%

\caption{The path of least action, i.e. the solution to $\delta S=0$, between two points $x=q(t_0)$ and $y=q(t_1)$ in space-time.}
\label{least_action_path}
\end{figure}
\end{center}

where $L(q,\dot{q})=\frac{1}{2}m\|\dot{q}\|^2-V(q)$ is called the \emph{Lagrangian} function of the paths $q\colon [t_0,t_1]\to\R^n$ with some function $V\in C^\infty(\R^n)$ depending on $q$, called the \emph{potential energy}. We denote by $\Path_{(x,y)}^{[t_0,t_1]}(\R^n)$ the space of all such paths with $q(t_0)=x$ and $q(t_1)=y$. By considering the methods of variational calculus, one can show that the solutions of the equation $\delta S=0$ for fixed endpoints (i.e. the extremal points of $S$) give us the classical trajectory of the particle with mass $m\in\R^+$. The equations following from $\delta S=0$ are called the Euler--Lagrange equations (EL), and they are exactly the equations of motion obtained from Newtonian mechanics. Netwon's equations of motion appear from the law $F=ma(t)=m\ddot{q}(t)$ (read it ``force equals mass times acceleration''). To see this, we recall that the \emph{momentum} in physics is given by $p=mv$, where $v$ denotes the velocity of the particle with mass $m$. Then, by the fact that $v=\dot{q}$, one considers the coordinates $\dot{q}=\frac{p}{m}$ and $\dot{p}=-\nabla V$, where $\nabla$ denotes the gradient operator. The \emph{Hamiltonian} approach considers the space with these coordinates to be the classical phase space (classical space of \emph{states}) given by $T^*\R^n\ni(q,p)$ endowed with a \emph{symplectic form}\footnote{we will not always write $\land$ between forms but secretly always mean the exterior product between them, i.e. for two differential forms $\alpha,\beta$, we have $\alpha\beta=\alpha\land \beta$.} (see Subsection \ref{sympl_mnf}) given by $$\omega=\sum_{i=1}^n\dd q^{i}\dd p_i.$$ Moreover, one considers a \emph{total energy function} (or a \emph{Hamiltonian function}) $H(q,p)=\frac{\|p\|^2}{2m}+V$, where $V$ is again a potential energy function. In the physics literature, the first term of $H$ is called the \emph{kinetic energy}. This function is said to be \emph{Hamiltonian} if there is a vector field $X_H$ such that $$\iota_{X_H}\omega=-\dd H,$$ where $\iota$ denotes the \emph{contraction} map (also called \emph{interior derivative}). The vector field $X_H$ is called the \emph{Hamiltonian vector field} of $H$. In the case at hand, since $\omega$ is nondegenerate, every function is Hamiltonian and its Hamiltonian vector field is uniquely determined. For $H$ being the total energy function and the canonical symplectic form on the cotangent space, we get the following Hamiltonian vector field: A vector field on $T^*\R^n$ has the form general form $X=X^{i}\partial_{q^{i}}+X_i\partial_{p_i}$. Thus, applying the equation for being the Hamiltonian vector field of $H$ we get $-\dd H=X_i\dd q^{i}+X^{i}\dd p_i=\iota_X\omega$. Now since $\dd H=\partial_i V\dd q^{i}+\frac{p_i}{m}$, we get the coefficients of the vector field to be $X_i=-\partial_i V$ and $X^{i}=\frac{p_i}{m}$. Hence, we get the Hamiltonian vector field 
$$X_H=-\partial_i V\partial_{q_i}+\frac{p_i}{m}\partial_{p_i}.$$
Naturally, $X_H$ induces a \emph{Hamiltonian flow} $T^*\R^n\to T^*\R^n$.

\subsection{Quantization}
An approach of quantization of the above is to associate to $T^*\R^n$ the space of square integrable functions $L^2(\R^n)$ on $\R^n$. The Hamiltonian flow can then be replaced by a linear map $$\ee^{\frac{\I}{\hbar}\widehat{H}}\colon L^2(\R^n)\to L^2(\R^n),$$ where $\widehat{H}:=-\frac{\hbar^2}{2m}\Delta+V$ denotes the \emph{Hamilton operator}, which is the canonical quantization of the classical Hamiltonian function, where $\Delta=\sum_{1\leq j\leq n}(\partial_{x^j})^2$ denotes the Laplacian. Note that the space of states is now given by a Hilbert space $\calH_0$ and the observables as operators on $\calH_0$. One can show that the action of this operator can be expressed as an integral of the form 
\[
\left(\ee^{\frac{\I}{\hbar}\widehat{H}}\psi\right)(x)=\int K(x,y)\psi(y)\dd y,
\]
for $\psi\in \calH_0$, where $K$ denotes the \emph{integral kernel} for the operator.  Feynman showed in \cite{Feynman1942} that this kernel (quantum mechanical \emph{propagator}) can be seen as a \emph{path integral}, which is given by 
\[
K(x,y)=\int_{\Path^{[t_0,t_1]}_{(x,y)}(\R^n)}\ee^{\frac{\I}{\hbar}S(q)}\mathscr{D}q.
\]
where $S$ denotes the action of the classical system and $\mathscr{D}$ a measure on the path space. 

Since $\mathscr{D}$ is suppose to be a ``measure'' on an infinite-dimensional space, it is mathematically ill-defined. However, one can still make sense of such an integral in several ways; one of them is by considering its \emph{perturbative expansion} in formal power series with \emph{Feynman diagrams} as coefficients. This procedure is mathematically well-defined.

\begin{center}
\begin{figure}[h!]

\begingroup%
  \makeatletter%
  \providecommand\color[2][]{%
    \errmessage{(Inkscape) Color is used for the text in Inkscape, but the package 'color.sty' is not loaded}%
    \renewcommand\color[2][]{}%
  }%
  \providecommand\transparent[1]{%
    \errmessage{(Inkscape) Transparency is used (non-zero) for the text in Inkscape, but the package 'transparent.sty' is not loaded}%
    \renewcommand\transparent[1]{}%
  }%
  \providecommand\rotatebox[2]{#2}%
  \ifx\svgwidth\undefined%
    \setlength{\unitlength}{131.29432958bp}%
    \ifx\svgscale\undefined%
      \relax%
    \else%
      \setlength{\unitlength}{\unitlength * \real{\svgscale}}%
    \fi%
  \else%
    \setlength{\unitlength}{\svgwidth}%
  \fi%
  \global\let\svgwidth\undefined%
  \global\let\svgscale\undefined%
  \makeatother%
  \begin{picture}(1,1.01600495)%
    \put(0,0){\includegraphics[width=\unitlength]{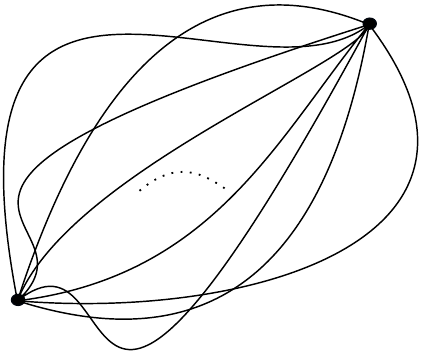}}%
    \put(-0.00369337,0.01329828){\color[rgb]{0,0,0}\makebox(0,0)[lb]{\smash{$q(t_0)$}}}%
    \put(0.81556436,0.97862504){\color[rgb]{0,0,0}\makebox(0,0)[lb]{\smash{$q(t_1)$}}}%
  \end{picture}%
\endgroup%

\caption{Illustration of the fact that all the paths between $x=q(t_0)$ and $y=q(t_1)$ are taken into account.}
\label{QM_paths}
\end{figure}
\end{center}

\subsubsection{Idea of the BV-BFV formalism}
\begin{enumerate}
\item{The quantum mechanical formalism is $1$-dimensional, whereas we would like to consider higher dimensional theories.
}
\item{We would like to take \emph{gauge symmetries} into account. This can be done by perturbation theory for a given local action $S(\phi)=\int_ML(\phi,\partial\phi,...)$, where now $L$ denotes a \emph{Lagrangian density} depending on the field $\phi$ (depending on elements in $M$) and higher derivatives. The idea is to look at the path integral $\int_{\calL\subset \calF_M} \ee^{\frac{\I}{\hbar}S(\phi)}\mathscr{D}\phi$, where $\calL\subset \calF_M\ni \phi$ denotes a subspace of the space of fields, perturb around a classical solution (formal power series in terms of Feynman graphs), and take gauge symmetries into account ($\infty$-dimensional kernel).  
}
\end{enumerate}

This procedure is called \emph{gauge fixing} (we will see that this corresponds to choosing a Lagrangian submanifold $\calL$). Another (older) approach was by the method of Faddeev--Popov ghosts, or the BRST formalism. The approach of Batalin--Vilkovisky considers the machinery of path integrals, which gives some advantages in dealing with gauge theories. The construction for manifolds with boundary uses in the quantization the approach of complexes whose differential comes from the BV construction, i.e. the quantum Hilbert space $\calH=\bigoplus_{j}\calH_j$ is given as a complex and its cohomology in degree zero is given by the original Hilbert space $\calH_0$, i.e. $H^0(\calH)=\calH_0$. 


\section{Elements of Symplectic Geometry}

In this section we want to recall some standard definitions and properties of symplectic geometry. More on symplectic geometry can be found e.g. in \cite{daSilva01}.  We will start with the local picture and continue with the global structure.

\subsection{Symplectic vector spaces} Let $V$ be a finite-dimensional vector space over $\mathbb{K}=\R$ or $\mathbb{C}$. Denote by $V^*$ the dual of $V$. An element of $V^*$ is a $\mathbb{K}$-linear map $f\colon V\to \mathbb{K}$. Let $0\leq m\leq \dim V$. Define 
\begin{multline*}
\bigwedge^mV^*:=\Big\{\phi\colon \overbrace{V\times\dotsm \times V}^{m}\to \mathbb{K}\mid \text{ $\phi$ is linear in each argument and $\phi$ is alternating,}\\
\text{ i.e. $\phi(v_1,...,v_j,v_{j+1},...,v_m)=-\phi(v_1,...,v_{j+1},v_j,...,v_m)$ for all $j=1,2,...,m-1$}\Big\}
\end{multline*}

\begin{ex}
Let $f,g\in V^*$. Then we can define $(f\land g)\in \bigwedge^2 V^*$ by 
\[
(f\land g)(v_1,v_2)=f(v_1)g(v_2)-f(v_2)g(v_1).
\]
In fact, it can be shown that all the elements of $\bigwedge^2V^*$ are finite linear combinations of such elements. Given $\omega\in \bigwedge^2 V^*$, we can define a map
\begin{align*}
\omega^\flat\colon V&\to V^*\\
v&\mapsto \omega^\flat(v),
\end{align*}
where $\omega^\flat(v)(w):=\omega(v,w)$.
\end{ex}

\begin{defn}[Symplectic vector space]
A \emph{symplectic vector space} is a pair $(V,\omega)$, where $V$ is a (finite-dimensional) vector space and $\omega\in\bigwedge^2 V^*$ such that $\omega^\flat$ is a vector space isomorphism.
\end{defn}

\begin{rem}
Since we are in the finite-dimensional setting, $\omega^\flat$ is a vector space isomorphism if and only $\omega^\flat$ is injective.
\end{rem}

\begin{exe}
\label{ex1}
Show that $\omega^\flat$ is injective if and only if $\omega(v,w)=0$ for all $w\in V$ implies $v=0$.
\end{exe}

\begin{ex}
Let $(W,\langle\enspace,\enspace\rangle)$ be an inner product space. Consider $V=W\oplus W$, with $\omega((w_1,w_2),(w_1',w_2'))=\langle w_2',w_1\rangle-\langle w_2,w_1'\rangle$. Then $(V,\omega)$ is a real symplectic vector space. More generally, $V=W\oplus W^*$ and $\omega_{can}((w,\alpha),(w',\alpha'))=\alpha'(w)-\alpha(w')$, then $(V,\omega_{can})$ is a symplectic vector space.
\end{ex}

\begin{exe}
Show that $(V,\omega_{can})$ is a symplectic vector space. Use exercise \ref{ex1}.
\end{exe}

\begin{defn}[Isotropic/Coisotropic/Lagrangian]
\label{lag_subsp}
Let $(V,\omega)$ be a symplectic vector space. Let $Y$ be a subspace of $V$. Define the symplectic orthogonal complement of $Y$ by $Y^{\perp}:=\{v\in V\mid \omega(v,y)=0,\forall y\in Y\}$. Then 
\begin{itemize}
\item{$Y$ is \emph{isotropic} if $Y\subseteq Y^\perp$,
}
\item{$Y$ is \emph{coisotropic} if $Y^\perp\subseteq Y$,}
\item{$Y$ is \emph{Lagrangian} if $Y$ is isotropic and $Y$ is symplectic if $\omega\vert_{Y\times Y}$ is nondegenerate, i.e. $Y\cap Y^\perp=\{0\}$.}
\end{itemize}
\end{defn}

\begin{ex}
If $\dim Y=1$, then $Y$ is isotropic. If $Y$ is isotropic, then $Y^\perp$ is coisotropic. If $Y$ is symplectic, then so is $Y^\perp$. Moreover, $Y^{\perp\perp}=(Y^\perp)^\perp=Y$.
\end{ex}

\begin{prop}
Let $(V,\omega)$ be a symplectic vector space. Then there is a basis $\{e_1,...,e_n,f_1,...,f_n\}$ of $V$ such that 
\begin{align*}
\omega(e_i,e_j)&=0,\\
\omega(f_i,f_j)&=0,\\
\omega(e_i,f_j)&=\delta_{ij}.
\end{align*}
for all $i,j\in\{1,2,...,n\}$. Hence, we can write $\omega=\sum_{j=1}^ne_j^*\land f_j^*$.
\end{prop}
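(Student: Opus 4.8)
The plan is to argue by induction on $\dim V$, constructing the basis two vectors at a time by a symplectic analogue of Gram--Schmidt. The base case $V=\{0\}$ is vacuous, so I assume $\dim V>0$ and that the statement holds for every symplectic vector space of strictly smaller dimension. Along the way the induction will also reveal that $\dim V$ is necessarily even, equal to $2n$.

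First I would produce an initial hyperbolic pair. Since $\omega^\flat$ is an isomorphism and $V\neq\{0\}$, the form $\omega$ cannot vanish, so I pick any $e_1\neq 0$. By the characterization of nondegeneracy in Exercise \ref{ex1}, there is some $w$ with $\omega(e_1,w)\neq 0$, and setting $f_1:=w/\omega(e_1,w)$ gives $\omega(e_1,f_1)=1$. Because $\omega(e_1,e_1)=0$ while $\omega(e_1,f_1)=1$, the vectors $e_1,f_1$ are linearly independent. On the plane $W:=\mathrm{span}\{e_1,f_1\}$ one has $\omega(e_1,e_1)=\omega(f_1,f_1)=0$ and $\omega(e_1,f_1)=1$, so $\omega\big|_{W\times W}$ is nondegenerate; that is, $W$ is a symplectic subspace in the sense of Definition \ref{lag_subsp}.

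The crux of the argument, and the step I expect to be the main obstacle, is the orthogonal decomposition $V=W\oplus W^\perp$. I would establish it in two parts. For the dimension count, consider the linear map $V\to W^*$ given by $v\mapsto \omega(v,\cdot)\big|_W$. It is surjective: any $\xi\in W^*$ extends to some $\tilde\xi\in V^*$, and since $\omega^\flat$ is onto there is $v$ with $\omega(v,\cdot)=\tilde\xi$, whence $\omega(v,\cdot)\big|_W=\xi$. Its kernel is exactly $W^\perp$, so by rank--nullity $\dim W^\perp=\dim V-\dim W^*=\dim V-2$. For transversality, $W\cap W^\perp=\{0\}$ precisely because $\omega\big|_W$ is nondegenerate. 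Together these give $V=W\oplus W^\perp$. I would then verify that $\omega\big|_{W^\perp}$ is itself nondegenerate: if $v\in W^\perp$ is $\omega$-orthogonal to all of $W^\perp$, then, as it already lies in $W^\perp$, it is orthogonal to all of $V=W\oplus W^\perp$, forcing $v=0$ by Exercise \ref{ex1}.

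Finally I would apply the induction hypothesis to the symplectic space $(W^\perp,\omega\big|_{W^\perp})$, of dimension $\dim V-2$, obtaining a basis $\{e_2,\dots,e_n,f_2,\dots,f_n\}$ satisfying the required relations among themselves. Adjoining $e_1,f_1$ yields a basis of $V$ by the direct sum decomposition, and since $W$ and $W^\perp$ are mutually $\omega$-orthogonal, every mixed pairing $\omega(e_1,e_j)$, $\omega(e_1,f_j)$, $\omega(f_1,e_j)$, $\omega(f_1,f_j)$ with $j\geq 2$ vanishes automatically; hence the full system obeys all three families of identities. The closing formula $\omega=\sum_{j=1}^n e_j^*\wedge f_j^*$ then follows by checking that both sides agree on every pair of basis vectors. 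Everything apart from the surjectivity-plus-dimension argument for $V=W\oplus W^\perp$ is routine bookkeeping, so that decomposition is where I would concentrate the care.
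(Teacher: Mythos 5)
Your proof is correct. Note that the paper states this proposition without any proof at all (its symplectic preliminaries defer to \cite{daSilva01}), so there is no in-text argument to compare against; what you give is the standard skew-symmetric Gram--Schmidt induction, which is the canonical proof of this fact. Two minor observations: your rank--nullity argument showing $\dim W^\perp=\dim V-2$ is exactly the content of the exercise the paper places immediately after the proposition (namely $\dim V=\dim Y+\dim Y^\perp$), so within the paper's own logic you could simply invoke that; and in the last step the vanishing of the mixed pairings in the direction $\omega(e_1,e_j)$, $\omega(e_1,f_j)$ uses the antisymmetry of $\omega$ together with $e_j,f_j\in W^\perp$ (the definition of $W^\perp$ in Definition \ref{lag_subsp} only gives $\omega(e_j,e_1)=0$ directly), a one-line point your write-up correctly, if silently, relies on.
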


\begin{exe}
Show that $\dim V=\dim Y+\dim Y^\perp$. Moreover, show $Y$ is Lagrangian if and only if $V\cong Y\oplus Y^\perp$.
\end{exe}

\begin{exe}
\label{lag_ex}
Show that $Y$ is a Lagrangian subspace if and only if $Y$ is isotropic and $\dim Y=\frac{1}{2}\dim V$. Moreover, show that $Y$ is Lagrangian if and only if $Y$ is a maximal isotropic subspace.
\end{exe}

\subsection{Symplectic manifolds}
\label{sympl_mnf}
To understand the mathematical structure of classical mechanics, it is necessary to understand the notion of a symplectic manifold. 

\begin{defn}[Closed/exact]
We call a $k$-form $\omega$ \emph{closed}, if $\dd\omega=0$. It is called \emph{exact} if there is a $(k-1)$-form $\alpha$ such that $\dd\alpha=\omega$.
\end{defn}

\begin{ex}
If $\omega$ is exact, then $\dd\omega=0$, i.e. exact forms are closed as well. Let $M=\R^n$, then $\omega$ is closed if and only if $\omega$ is exact (this is given by the ``Poincar\'e lemma'') for $k>0$.
\end{ex}

\begin{defn}[Symplectic manifold]
A \emph{symplectic manifold} is a pair $(M,\omega)$, where $M$ is a smooth manifold and $\omega$ is a $2$-form on $M$ such that
\begin{enumerate}
\item{$\omega$ is closed, i.e. $\dd\omega=0$,}
\item{$\omega$ is nondegenerate, i.e. for all $q\in M$, $\omega^\flat\colon T_qM\to T_q^*M$ is injective.}
\end{enumerate}
\end{defn}

\begin{defn}[Tautological $1$-form]
Let $M=T^*N$. Define a $1$-form $\alpha$ on $M$ as 
$$\alpha_{x,p}(X_{x,p}):=p(\dd \pi_{x,p}X_{x,p}),$$
where $p\colon T^*N\to N$, $\pi\colon TM\to M$ and $X_{x,p}\in T_{x,p}M$. The form $\alpha$ is called the \emph{tautological $1$-form} on $T^*N$.
\end{defn}

Let $X$ be an $n$-manifold, with $M=T^*X$ its cotangent bundle. If $x_1,...,x_n$ are coordinates on $U\subseteq X$, with associated cotangent coordinates $x_1,...,x_n,\xi_1,...,\xi_n$ on$T^*U$, then the tautological $1$-form on $T^*X$ is $\alpha=\sum_i\xi_i\dd x^{i}$ and the canonical $2$-form is
$$\omega=-\dd\alpha=\sum_i\dd x^{i}\land \dd\xi_i.$$

\begin{ex}
Let $M=T^*\R\cong \R\times \R\ni(x,p)$. Let $\alpha=f\dd x+g\dd p$. Then $\alpha(\partial_x)=f$ and $\alpha(\partial_p)=0$, thus $\alpha=f\dd x$. on the other hand $\alpha_{x,p}(\partial_x)=p$ and hence $\alpha=p\dd x$. More generally, if $M=T^*\R^n\ni(x^{1},...,x^n,p_1,...,p_n)$, then $\alpha=\sum_{1\leq j\leq n}p_j\dd x^j$
\end{ex}

\begin{exe}
Let $(U,\phi)$ be a local coordinate system on $M=T^*N$ given by $$\phi(q)=(x^1(q),...,x^n(q),p_1(q),...,p_n(q)),$$ Show that $\alpha=\sum_{1\leq j\leq n}p_j\dd x^j$. Moreover, show that $(T^*N,\omega=\dd\alpha)$ is a symplectic manifold.
\end{exe}

\subsection{The Lie derivative}
The concept of a derivative can be generalized in a tensorial sense by the following definiton:

\label{Lie_der}
\begin{defn}[Lie derivative]
Let $f\in C^\infty(M)$ and $X$ be a vector field. The \emph{Lie derivative} of $f$ along $X$ is defined as $\mathsf{L}_Xf=X(f)$. Let $X$ and $Y$ be two vector fields. Then we define $\mathsf{L}_XY=[X,Y]$. Moreover, let $X$ be a vector field and $\alpha$ a $1$-form. Then $\mathsf{L}_X\alpha$ is a $1$-form defined by the equation $$\mathsf{L}_X(\alpha(Y))=(\mathsf{L}_X\alpha)(Y)+\alpha(\mathsf{L}_XY).$$ More generally, if $\alpha$ is a $k$-form  then $\mathsf{L}_X\alpha$ is again a $k$-form defined by $$(\mathsf{L}_X\alpha)(Y_1,...,Y_k)=\mathsf{L}_X(\alpha(Y_1,...,Y_k))-\sum_{j=1}^k\alpha(Y_1,...,Y_{j-1},[X,Y_j],...,Y_k).$$
\end{defn}

\begin{rem}
Given a $k$-form $\alpha$, and a vector field $X$, $(\mathsf{L}_X\alpha)(p)$ is the rate of change of $\alpha$ in the direction of the ``flow'' of $x$ at $p$.
\end{rem}


\subsection{Lagrangian submanifolds and Conormal Bundles}

\subsubsection{Lagrangian submanifolds}
Let $(M,\omega)$ be a $2n$-dimensional symplectic manifold. 

\begin{defn}[Lagrangian submanifold]
A submanifold $Y$ of $M$ is called a \emph{Lagrangian} if, at each $p\in Y$, $T_pY$ is Lagrangian subspace of $T_pM$ (see definiton \ref{lag_subsp}). Equivalently, if $i\colon Y\hookrightarrow M$ is the inclusion, then $Y$ is Lagrangian if and only if $i^*\omega=0$ and $\dim Y=\frac{1}{2}\dim M$ (see also exercise \ref{lag_ex}).
\end{defn}

\begin{defn}[Zero section]
The \emph{zero section} of $T^*X$, defined by 
$$X_0=\{(x,\xi)\in T^*X\mid \xi=0\}$$
is an $n$-dimensional submanifold of $T^*X$.
\end{defn}

\begin{prop}
$X_0$ is a Lagrangian submanifold of $T^*X$.
\end{prop}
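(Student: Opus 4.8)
The plan is to verify directly the two defining conditions of a Lagrangian submanifold for the inclusion $i\colon X_0\inc T^*X$, namely that $i^*\omega=0$ and that $\dim X_0=\frac12\dim T^*X$. The dimension count is immediate: $X_0$ is cut out inside the $2n$-dimensional manifold $T^*X$ by the $n$ equations $\xi=0$, so $\dim X_0=n=\frac12\dim T^*X$. Hence the entire content of the statement is the vanishing $i^*\omega=0$.

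For this I would use that $\omega=-\dd\alpha$ is exact, with $\alpha$ the tautological $1$-form, together with naturality of the exterior derivative, which gives $i^*\omega=-\dd(i^*\alpha)$. It is therefore enough to show that $\alpha$ already pulls back to zero on the zero section. This is transparent from the coordinate-free definition $\alpha_{x,p}(X_{x,p})=p(\dd\pi_{x,p}X_{x,p})$: at a point of $X_0$ the covector $p=\xi$ equals $0$, so $\alpha_{x,0}$ annihilates every tangent vector and $i^*\alpha=0$. Consequently $i^*\omega=-\dd(i^*\alpha)=0$, and $X_0$ is Lagrangian.

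As a cross-check one can argue in local cotangent coordinates $(x^1,\dots,x^n,\xi_1,\dots,\xi_n)$. On $X_0$ the functions $\xi_i$ vanish identically, so $i^*\xi_i=0$ and hence $i^*(\dd\xi_i)=\dd(i^*\xi_i)=0$; since $\omega=\sum_i\dd x^i\land\dd\xi_i$, this yields $i^*\omega=\sum_i\dd x^i\land i^*(\dd\xi_i)=0$. Equivalently, at a point $(x,0)$ the tangent space $T_{(x,0)}X_0$ consists exactly of the horizontal vectors $\sum_i a^i\partial_{x^i}$, with no $\partial_{\xi_i}$ component; for any two such vectors every summand $\dd x^i\land\dd\xi_i$ vanishes because $\dd\xi_i$ kills them, so $\omega$ restricts to $0$ on $T_{(x,0)}X_0$. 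This subspace is then isotropic of dimension $n=\frac12\dim T_{(x,0)}T^*X$, hence Lagrangian by Exercise~\ref{lag_ex} and Definition~\ref{lag_subsp}.

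I do not expect any serious obstacle here: the only points needing a little care are the correct identification of the tangent space to the zero section (equivalently, unwinding the definition of $\alpha$ where $p=0$) and the use of $i^*\dd=\dd\,i^*$, which together collapse the verification of $i^*\omega=0$ to a single line.
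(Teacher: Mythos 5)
Your proof is correct and follows essentially the same route as the paper: you show that the tautological $1$-form $\alpha$ pulls back to zero on the zero section (since $p=0$ there) and conclude $i^*\omega=\pm\,\dd(i^*\alpha)=0$, exactly as in the paper's argument. Your explicit dimension count and coordinate cross-check are fine additions but do not change the substance.
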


\begin{proof}
Clearly $\alpha$ vanishes on $X_0\cap T^*U$. In particular, if $i_0\colon X_0\hookrightarrow T^*X$ is the inclusion, we have $i_0^*\alpha=0$. Hence, $i^*_0\omega=i_0^*\dd\alpha=0$, and $X_0$ is Lagrangian.
\end{proof}

\begin{rem}
\label{iso_coiso}
Similar to Lagrangian submanifolds, we can define \emph{isotropic} and \emph{coisotropic} submanifolds, by requiring that at each point of the submanifold the tangent space is isotropic or coisotropic  (see definiton \ref{lag_subsp}) respectively.
\end{rem}

\subsubsection{Conormal bundles}
Let $S$ be any $k$-dimensional submanifold of an $n$-manifold $X$. 
\begin{defn}[Conormal space]
The \emph{conormal space} at $x\in S$ is given by 
\begin{equation}
\label{conormal_sp}
N^*_xS=\{\xi\in T_x^*X\mid \xi(v)=0, \text{ for all } v\in T_xS\}.
\end{equation}
\end{defn}

\begin{defn}[Conormal bundle]
\label{conormal_bun1}
The \emph{conormal bundle} of $S$ is given by 
\begin{equation}
\label{conormal_bun}
N^*S=\{(x,\xi)\in T^*X\mid x\in S, \xi\in N_x^*S\}.
\end{equation}
\end{defn}

\begin{rem}
The conormal bundle $N^*S$ is an $n$-dimensional submanifold of $T^*X$ (one can check that by using coordinates on $X$ adapted to $S$).
\end{rem}

\begin{prop}
Let $i\colon N^*S\hookrightarrow T^*X$ be the inclusion, and let $\alpha$ be the tautological $1$-form on $T^*X$. Then $i^*\alpha=0$.
\end{prop}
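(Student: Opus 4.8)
The plan is to compute $i^*\alpha$ directly from the definition of the tautological $1$-form, evaluating it on an arbitrary tangent vector to $N^*S$ and showing the result vanishes pointwise. Fix a point $(x,\xi)\in N^*S$, so that $x\in S$ and $\xi\in N_x^*S$, and let $W\in T_{(x,\xi)}N^*S$ be arbitrary. Writing $\pi\colon T^*X\to X$ for the bundle projection, the definition of the pullback together with the definition of $\alpha$ gives
$$(i^*\alpha)_{(x,\xi)}(W)=\alpha_{(x,\xi)}\big(\dd i\,W\big)=\xi\big(\dd\pi\,(\dd i\,W)\big).$$
Everything therefore reduces to understanding in which subspace of $T_xX$ the vector $\dd\pi\,(\dd i\,W)$ lies.

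The key observation is that the restriction $\pi|_{N^*S}=\pi\circ i$ takes values in $S$, since $\pi(i(x,\xi))=x\in S$ at every point. Hence $\pi\circ i$ factors as $\iota_S\circ\rho$, where $\rho\colon N^*S\to S$ is the conormal bundle projection and $\iota_S\colon S\hookrightarrow X$ is the inclusion. Differentiating, $\dd\pi\circ\dd i=\dd\iota_S\circ\dd\rho$, so the vector $\dd\pi\,(\dd i\,W)$ lies in the image of $\dd\iota_S$, that is, in the subspace $T_xS\subseteq T_xX$.

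Finally I would invoke the defining property of the conormal space \eqref{conormal_sp}: since $\xi\in N_x^*S$ annihilates every vector of $T_xS$, and we have just shown $\dd\pi\,(\dd i\,W)\in T_xS$, it follows that $\xi\big(\dd\pi\,(\dd i\,W)\big)=0$. As $(x,\xi)$ and $W$ were arbitrary, this yields $i^*\alpha=0$. The argument is essentially a coordinate-free repackaging of the proof that the zero section is Lagrangian, of which this statement is the natural generalization (the zero section is the conormal bundle of $S=X$).

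The only step requiring care — and the main potential obstacle — is the claim that $\dd\pi\,(\dd i\,W)\in T_xS$; I would make this rigorous either via the factorization $\pi\circ i=\iota_S\circ\rho$ above, or, if a coordinate computation is preferred, by choosing coordinates $(x^1,\dots,x^n)$ on $X$ adapted to $S$ so that $S=\{x^{k+1}=\dots=x^n=0\}$. Writing the tautological form as $\alpha=\sum_j\xi_j\,\dd x^j$, one checks that on $N^*S$ the conormal condition forces $\xi_1=\dots=\xi_k=0$, while the coordinates $x^{k+1},\dots,x^n$ are identically zero along $N^*S$ so that $\dd x^{k+1},\dots,\dd x^n$ pull back to zero. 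Thus in each term $\xi_j\,\dd x^j$ of $i^*\alpha$ either the coefficient $\xi_j$ vanishes (for $j\le k$) or the pulled-back form $i^*\dd x^j$ vanishes (for $j>k$), and $i^*\alpha=0$ follows term by term.
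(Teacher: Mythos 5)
Your proof is correct, but your main argument is genuinely different from the paper's. The paper works entirely in adapted coordinates: it chooses coordinates $(x_1,\dots,x_n)$ on $X$ with $U\cap S=\{x_{k+1}=\dots=x_n=0\}$, observes that $N^*S\cap T^*U$ is cut out by $x_{k+1}=\dots=x_n=0$ and $\xi_1=\dots=\xi_k=0$, and then checks term by term that $\alpha=\sum_i\xi_i\,\dd x^{i}$ restricts to zero on $T_p(N^*S)$ --- exactly the computation you sketch in your final paragraph as a fallback. Your primary argument is instead intrinsic: from $(i^*\alpha)_{(x,\xi)}(W)=\xi\bigl(\dd\pi(\dd i\,W)\bigr)$ and the factorization $\pi\circ i=\iota_S\circ\rho$ you conclude $\dd\pi(\dd i\,W)\in T_xS$, which $\xi\in N^*_xS$ annihilates by definition \eqref{conormal_sp}. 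This is sound: the one step needing care, $\dd\pi(\dd i\,W)\in T_xS$, is exactly handled by differentiating the factorization (using that $N^*S$ is a smooth submanifold, which the paper records in the preceding remark). The coordinate-free route buys conceptual clarity --- it makes transparent that the vanishing is forced by the two defining features of $\alpha$ and of conormal covectors, with no case split between $j\le k$ and $j>k$ --- and it exhibits the zero-section proposition as the special case $S=X$, as you note. The paper's coordinate computation buys explicitness and doubles as a verification of the local description of $N^*S$ (the same chart that shows it is an $n$-dimensional submanifold), which is presumably why the authors chose it.
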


\begin{proof}
Let $(U,x_1,...,x_n)$ be a coordinate system on $X$ centered at $x\in S$ and adapted to $S$, so that $U\cap S$ is described by $x_{k+1}=\dotsm =x_n=0$. Let $(T^*U,x_1,...,x_n,\xi_1,...,\xi_n)$ be the associated cotangent coordinate system. The submanifold $N^*S\cap T^*U$ is then described by 
\begin{align*}
x_{k+1}&=\dotsm =x_n=0,\\
\xi_1&=\dotsm =\xi_k=0.
\end{align*}
Since $\alpha=\sum_i\xi_i\dd x^{i}$ on $T^*U$, we conclude that, at $p\in N^*S$, 
$$(i^*\alpha)_p=\alpha_p\vert_{T_p(N^*S)}=\sum_{i>k}\xi_i\dd x^{i}\Big\vert_{span\left(\partial_{x_i}, i\leq k\right)}=0$$ 
\end{proof}

\begin{cor}
A conormal bundle is a Lagrangian submanifold.
\end{cor}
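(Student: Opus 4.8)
The plan is to invoke the characterization of a Lagrangian submanifold recorded earlier: a submanifold $Y$ of a $2n$-dimensional symplectic manifold $(M,\omega)$ is Lagrangian precisely when $i^*\omega=0$ and $\dim Y=\tfrac{1}{2}\dim M$. So I would verify these two conditions for $Y=N^*S$ sitting inside $(T^*X,\omega)$ with the canonical form $\omega=-\dd\alpha$, where $\alpha$ is the tautological $1$-form.

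For the isotropy condition I would lean directly on the preceding proposition, which already gives $i^*\alpha=0$ for the inclusion $i\colon N^*S\hookrightarrow T^*X$. Since the exterior derivative commutes with pullback, this immediately yields
$$i^*\omega = i^*(-\dd\alpha) = -\dd(i^*\alpha) = -\dd 0 = 0,$$
so $N^*S$ is isotropic. This is the only step with any content, and it is purely formal: it is the naturality identity $\dd\circ i^*=i^*\circ\dd$ applied to the result of the previous proposition.

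For the dimension count, note that $\dim X=n$ forces $\dim T^*X=2n$. The remark already states that $N^*S$ is an $n$-dimensional submanifold; concretely, the base $S$ contributes its $k$ dimensions, while each conormal fibre $N^*_xS$, being the annihilator of the $k$-dimensional subspace $T_xS$ inside the $n$-dimensional $T^*_xX$, has dimension $n-k$, giving $\dim N^*S = k+(n-k)=n=\tfrac{1}{2}\dim T^*X$. Combining isotropy with the half-dimension condition shows that $N^*S$ is Lagrangian. I do not expect any genuine obstacle: the only mildly non-formal point is the dimension of the conormal fibre, which is just the rank count for an annihilator and was already asserted in the remark; everything else reduces to the previous proposition together with the fact that pullback commutes with $\dd$.
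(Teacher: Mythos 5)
Your proposal is correct and is exactly the argument the paper intends: the corollary follows from the preceding proposition via $i^*\omega=-\dd(i^*\alpha)=0$, combined with the dimension count $\dim N^*S=k+(n-k)=n=\tfrac{1}{2}\dim T^*X$ already noted in the remark. Your fibre-dimension computation just makes explicit the annihilator rank count that the paper leaves implicit.
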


\subsection{Hamiltonian actions}
The mathematical structure of classical mechanics, in the sense of Hamiltonian mechanics, is characterized by the symplectic structure of the phase space. The notion of a Hamiltonian action gives more insights for the symplectic structure.

\subsubsection{Momentum and Comomentum Maps}
Let $(M,\omega)$ be a symplectic manifold, and denote by $\text{Sympl}(M,\omega)\subset \text{Diff}(M)$ the set of symplectomorphisms on $M$. Moreover, ler $G$ be a Lie group, and $\Psi\colon G\to \text{Sympl}(M,\omega)$ a smooth symplectic action, i.e. a group homomorphism such that the evaluation map $\ev_\Psi(g,p):=\Psi_g(p)$ is smooth. Let us consider the case where $G=\R$. We have the following bijective correspondence:
\begin{align*}
\{\text{symplectic actions of $\R$ on $M$}\}&\longleftrightarrow \{\text{complete symplectic vector fields on $M$}\}\\
\Psi&\longmapsto X_p=\frac{\dd}{\dd t}\Psi_t(p)\\
\Psi=\exp(tX)&\testleftlong X\\
\text{``flow of $X$''}&\qquad \text{``vector field generated by $\Psi$''}
\end{align*}

\begin{defn}[Hamiltonian action (for $\R$-action)]
The action $\Psi$ is said to be \emph{Hamiltonian} if there is a function $H\colon M\to \R$ such that $-\dd H=\iota_X\omega$ where $X$ is the vector field on $M$ generated by $\Psi$, and $\iota_X$ denotes the contraction with the vector field $X$.
\end{defn}

Consider now the case where $G=S^1$. An action of $S^1$ is an action of $\R$ which is $2\pi$-periodic, i.e. $\Psi_{2\pi}=\Psi_0$. The $S^1$-action is called Hamiltonian if the underlying $\R$-action is Hamiltonian. Let us now consider the general case. Let $\mathfrak{g}$ be the Lie algebra of $G$ and denote its dual by $\mathfrak{g}^*$. 

\begin{defn}[Hamiltonian action (general)]
\label{Hamiltonian_action}
The action $\Psi$ is called \emph{Hamiltonian} if there is a map
\begin{equation} 
\label{momentmap}
\mu\colon M\to \mathfrak{g}^*,
\end{equation}
satisfying the following conditions:
\begin{enumerate}
\item{For each $X\in\mathfrak{g}$, let
\begin{itemize}
\item{
$\mu^X\colon M\to \R$, $\mu^X(p):=\langle \mu(p),X\rangle$, be the component of $\mu$ along $X$,
}
\item{$X^\sharp$ be the vector field on $M$ generated by the $1$-parameter subgroup $\{\exp(tX)\mid t\in\R\}\subseteq G$.
}
\end{itemize}
Then $\dd\mu^X=\iota_{X^\sharp}\omega$, i.e. $\mu^X$ is a Hamiltonian function for the vector field $X^\sharp$.
}
\item{$\mu$ is equivariant with respect to the given action $\Psi$ of $G$ on $M$ and the coadjoint action $\text{Ad}^*$ of $G$ on $\mathfrak{g}^*$, given by 
\begin{equation}
\mu\circ \Psi_g=\text{Ad}^*_g\circ \mu
\end{equation}
for all $g\in G$.
}
\end{enumerate}
\end{defn}

\begin{defn}[Momentum map]
The map $\mu$ given as in Definition \ref{Hamiltonian_action} is called \emph{momentum map}.
\end{defn}

\begin{defn}[Hamiltonian $G$-space]
The Quadruple $(M,\omega,G,\mu)$ given as in Definition \ref{Hamiltonian_action} is called \emph{Hamiltonian $G$-space}.
\end{defn}

\begin{defn}[Comomentum map]
We call a map 
\begin{equation}
\label{comomentmap}
\mu^*\colon \mathfrak{g}\to C^\infty(M)
\end{equation}
a \emph{comomentum map} if 
\begin{enumerate}
\item{$\mu^*(X):=\mu^X$ is a Hamiltonian function for the vector field $X^\sharp$,
}
\item{$\mu^*$ is a Lie algebra homomorphism, i.e. 
$$\mu^*([X,Y])=\{\mu^*(x),\mu^*(Y)\},$$
for all $X,Y\in\mathfrak{g}$, where $[\enspace,\enspace]$ is the Lie bracket on $\mathfrak{g}$ and $\{\enspace,\enspace\}$ is the Poisson bracket on $C^\infty(M)$.
}
\end{enumerate}
\end{defn}

\subsection{Symplectic reduction}
The concept of a symplectic reduction is important for the treatment of gauge theories on manifolds with boundary. To be able to perform \emph{geometric quantization} on the boundary, we have to make sure that the space of fields can be reduced to a symplectic space of boundary fields with certain conditions. The following theorem guarantees such a structure for the needed purposes in many examples.

\begin{thm}[Marsden-Weinstein]
\label{MW}
Let $(M,\omega,G,\mu)$ be a Hamiltonian $G$-space for a compact Lie group $G$. Let $i\colon \mu^{-1}(0)\hookrightarrow M$ be the inclusion map. Assume that $G$ acts freely on $\mu^{-1}(0)$. Then 
\begin{itemize}
\item{the orbit space $M_{red}=\mu^{-1}(0)/G$ is a manifold,
}
\item{$\pi\colon\mu^{-1}(0)\to M_{red}$ is a principal $G$-bundle, and
}
\item{there is a symplectic form $\omega_{red}$ on $M_{red}$ satisfying $i^*\omega=\pi^*\omega_{red}$.
}
\end{itemize}
\end{thm}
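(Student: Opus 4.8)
The plan is to prove the Marsden–Weinstein theorem in its three assertions (manifold structure, principal bundle, reduced symplectic form) by combining the freeness hypothesis with a rank computation coming from the momentum map, and then checking the symplectic conditions pointwise on the tangent spaces.

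First I would establish that $0$ is a regular value of $\mu$ on $\mu^{-1}(0)$, so that $\mu^{-1}(0)$ is a smooth submanifold. The key observation is that the differential $\dd\mu_p\colon T_pM\to\Lg^*$ has kernel $(T_p(G\cdot p))^\perp$ and cokernel dual to the isotropy algebra $\Lg_p=\{X\in\Lg\mid X^\sharp_p=0\}$; indeed, from condition (1) of Definition \ref{Hamiltonian_action} one reads $\langle\dd\mu_p(v),X\rangle=(\iota_{X^\sharp}\omega)_p(v)=\omega_p(X^\sharp_p,v)$, so $v\in\ker\dd\mu_p$ iff $v\in(T_p(G\cdot p))^\perp$, and $X\in\Lg$ annihilates $\Ima\dd\mu_p$ iff $X^\sharp_p=0$. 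Since $G$ acts freely on $\mu^{-1}(0)$, every isotropy algebra there vanishes, so $\dd\mu_p$ is surjective and $0$ is a regular value; hence $\mu^{-1}(0)$ is a submanifold with $T_p\mu^{-1}(0)=(T_p(G\cdot p))^\perp$. The second and third bullets then follow from a standard quotient argument: a free and proper action of a compact Lie group yields a manifold quotient $M_{red}$ and makes $\pi\colon\mu^{-1}(0)\to M_{red}$ a principal $G$-bundle, where properness is automatic because $G$ is compact.

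Next I would construct $\omega_{red}$. The natural candidate is the unique $2$-form on $M_{red}$ whose pullback under $\pi$ agrees with $i^*\omega$, so the real content is to show that $i^*\omega$ is both \emph{basic} (horizontal and $G$-invariant) and \emph{nondegenerate after descent}. Invariance follows from the symplectic action ($\Psi_g^*\omega=\omega$), and horizontality amounts to showing that $i^*\omega$ vanishes when either argument is a vertical vector $X^\sharp_p$: for $v\in T_p\mu^{-1}(0)=(T_p(G\cdot p))^\perp$ one has $\omega_p(X^\sharp_p,v)=\langle\dd\mu_p(v),X\rangle=0$ since $v$ is tangent to the level set. Both conditions together let me invoke the descent property of basic forms to produce $\omega_{red}$ with $i^*\omega=\pi^*\omega_{red}$, and closedness is immediate since $\pi^*\dd\omega_{red}=\dd\,i^*\omega=i^*\dd\omega=0$ and $\pi^*$ is injective on forms.

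The main obstacle, and the step I would treat most carefully, is \textbf{nondegeneracy} of $\omega_{red}$. This is where the interplay of isotropic and coisotropic structure from Definition \ref{lag_subsp} is essential. At $p\in\mu^{-1}(0)$ the tangent space to the level set is $T_p\mu^{-1}(0)=(T_p(G\cdot p))^\perp$, and the tangent to the orbit $T_p(G\cdot p)$ is isotropic precisely because $\mu$ has the equivariance/Hamiltonian properties (one checks $\omega_p(X^\sharp,Y^\sharp)=\langle\dd\mu^Y_p,X^\sharp\rangle$ vanishes on the zero level). Consequently the orbit direction is contained in its own symplectic orthogonal, and the quotient $(T_p(G\cdot p))^\perp/T_p(G\cdot p)$ carries a well-defined nondegenerate form: if $\bar v$ lies in its radical, then $v\in\big((T_p(G\cdot p))^\perp\big)^\perp=T_p(G\cdot p)$ by the double-orthogonal identity $Y^{\perp\perp}=Y$, so $\bar v=0$. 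Since $T_pM_{red}\cong(T_p(G\cdot p))^\perp/T_p(G\cdot p)$ via $\dd\pi$, this shows $\omega_{red}$ is nondegenerate, completing the proof.
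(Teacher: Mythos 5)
Your proposal is correct, and since the paper itself gives no proof of Theorem \ref{MW} (it simply defers to the reference \cite{daSilva01}), your argument is essentially the standard one found there: regularity of $0$ via the identity $\langle\dd\mu_p(v),X\rangle=\omega_p(X^\sharp_p,v)$ together with freeness, the free-and-proper (compact group) quotient theorem for the bundle structure, descent of the basic form $i^*\omega$, and nondegeneracy from $T_p\mu^{-1}(0)=(T_p(G\cdot p))^\perp$ combined with $Y^{\perp\perp}=Y$ on the quotient $(T_p(G\cdot p))^\perp/T_p(G\cdot p)\cong T_{\pi(p)}M_{red}$. The only point you gesture at without stating explicitly is that orbits of points in $\mu^{-1}(0)$ remain in $\mu^{-1}(0)$, which follows from the equivariance $\mu\circ\Psi_g=\mathrm{Ad}^*_g\circ\mu$ of Definition \ref{Hamiltonian_action}; this is a one-line remark, not a gap.
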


\begin{proof}
See e.g. \cite{daSilva01}.
\end{proof}

\begin{rem}
In field theory we work in the infinite dimensional setting, so Theorem  \ref{MW} does not apply as it is. Also, for field theory, we normally consider weak symplectic vector spaces, i.e. the corresponding linear map from $V$ to $V^*$ is injective. Morover, there is a notable example, 4D gravity, where one has to deal with general coisotropic reduction and not just Marsden--Weinstein reduction.
\end{rem}

\begin{defn}[Reduction]
The pair $(M_{red},\omega_{red})$ is called the \emph{reduction} of $(M,\omega)$ with respect to $(G,\mu)$, or the \emph{reduced space}, or the \emph{symplectic quotient}, or the \emph{Marsden-Weinstein quotient}.
\end{defn}

\section{Elements of Supergeometry}
Supergeometry is an essential tool to understand reduction in cohomological terms\footnote{The concept of Supergeometry is also used for SUSY field theories, but this is not what we discuss here.}. In particular the Faddeev--Popov, BRST, or Batalin--Vilkovisky gauge formalism require the structure of a supermanifold and the notion of \emph{odd} and \emph{even} coordinates (in physics language they correspond to \emph{fermionic}\footnote{Here \emph{fermionic} just means anticommuting (which corresponds to Pauli's exlclusion principle) but not necessarily spinorial.} (anticommuting) and \emph{bosonic} (commuting) particles respectively). 
The mathematical theory of supergeometry goes back to the work of Berezin and Leites between 1975 and 1980 in \cite{BerLei75,Lei80}. For our purposes we need a refinement, called graded geometry, where we give the variables an additional integer degree. More on graded geometry can be found e.g. in \cite{CattSchaetz11,Fiorenza2008,Cu}.

\begin{rem}
We will denote the \emph{exterior algebra} of a vector space $V$ by $\bigwedge V=\bigoplus_k\bigwedge^k V$ and the algebra of \emph{differential forms} on a manifold $M$ by $\Omega(M)=\bigoplus_k\Omega^k(M)$. We use $\widehat{\otimes}$ to indicate the topological tensor product, i.e. the unique tensor product such that $C^\infty(\R^n)\widehat{\otimes} C^\infty(\R^m)\cong C^\infty(\R^{n+m})$.
\end{rem}

\subsection{Graded spaces}
To understand the concept of a ``supermanifold'' we need to look first at the linear case, which is the concept of a graded space (or also graded vector space). 

\begin{defn}[Graded vector space]
A $\mathbb{Z}$-\emph{graded vector space} (or often just \emph{graded vector space}) is a collection\footnote{Working with the ``total space'', i.e. the direct sum of the $V_k$s, create all sort of troubles when one start talking of morphisms, so it should be avoided, thus we just talk of the components $V_k$. However, one should mention that the ``total space'' picture is fine in the the finite dimensional case (or just when only finitely many $V_k$ are nontrivial.)} of vector spaces $V=(V_k)_{k\in\mathbb{Z}}$.
\end{defn}

\begin{defn}[Shift]
Let $V=(V_n)_{n\mathbb{Z}}$ be a graded vector space. For any integer $k$ we can define the \emph{$k$-shift} of $V$ to be given by 
$$(V[k])_n:=(V_{k+n})_{n\in\mathbb{Z}}.$$
\end{defn}

\begin{defn}[Graded linear map]
A \emph{graded linear map} $f\colon V\to W$ between to graded vector spaces $V$ and $W$ is given by a collection of linear maps $(f_k\colon V_k\to W_k)_{k\in\mathbb{Z}}$.
\end{defn}

\begin{rem}
A graded linear map of \emph{degree $k$} between two graded vector spaces $V$ and $W$ is a graded linear map between $V$ and $W[k]$.
\end{rem}

\begin{defn}[Dual of a graded vector space]
Let $V=(V_k)_{k\in\mathbb{Z}}$ be a graded vector space. The dual space $V^*$ of $V$ is defined as $V^*=(V^*_{-k})_{k\in\mathbb{Z}}$.
\end{defn}

\begin{defn}[Superspace]
A \emph{superspace} is a $\mathbb{Z}_2$-graded vector space $V=V_{\bar 0}\oplus V_{\bar 1}$. We denote by $V_0$ the vector space of \emph{even} vectors and by $V_1$ the space of \emph{odd} vectors. 
\end{defn}

\begin{defn}[Parity/degree]
The \emph{parity} (or \emph{degree}) is the $\mathbb{Z}_2$-grading of the superspace. The degree of a homogeneous element $x\in V$ is denoted by $\vert x\vert$ and is defined by 
$$\vert x\vert:=\begin{cases} even,&x\in V_{\bar 0}\\ odd,& x\in V_{\bar 1}\end{cases}$$
\end{defn}

\begin{rem}
In addition to the $\mathbb{Z}$-grading we introduce independently a parity in $\mathbb{Z}_2$ denoted by $\vert\enspace\vert$. Moreover, in this note we will only consider the case when the parity is the degree modulo 2.
\end{rem}

\begin{rem}
If we consider the superspace $V$ as an ordinary vector space $V$ together with an automorphism $P\colon V\to V$, such that $P^2=\id$, we get that $V_{\bar 0}$ is the $1$-eigenspace and $V_{\bar 1}$ is the $(-1)$-eigenspace. Moreover, $P(x)=(-1)^{\vert x\vert}x$.
\end{rem}

\begin{defn}[Dimension]
We define the \emph{dimension of a superspace} to be given by 
$$\dim V=(\dim V_{\bar 0},\dim V_{\bar 1}),$$
and we also say that $V$ is a $\dim V_{\bar 0}\mid \dim V_{\bar 1}$-dimensional superspace. 
\end{defn}

\begin{rem}
We can always think of a vector space $V$ to be a superspace with $P=\id$. In this case, one usually considers $V$ to be the even part of the superspace $(V=V\oplus 0,\id)$. We can also consider $V$ to be given as the odd part of the superspace by the pair $(V=0\oplus V[1],-\id)$, where we denote by $V[1]$ the odd degree which is simply given by $(V,-\id)$. 
\end{rem}

\begin{rem}
Note that, considering a vector space $V$ as a superspace $(V,P_V)$, we can consider its dual by $(V^*,P_{V^*}=(P_V)^*)$.
\end{rem}

Consider a $\mathbb{Z}$-graded vector space $V=\bigoplus_{k\in \mathbb{Z}}V_k$, where only finitely many $V_k$s are nontrivial. We can consider $V$ as a superspace by setting 
\begin{align}
\label{superspace}
\begin{split}
V_{\bar 0}&=\bigoplus_{k\in2\mathbb{Z}}V_k,\\
V_{\bar 1}&=\bigoplus_{k\in2\mathbb{Z}+1}V_k.
\end{split}
\end{align}

\begin{defn}[Morphism of superspaces]
A \emph{morphism} between two superspaces $V$ and $W$ is a linear map $\phi\colon V\to W$ which preserves the grading. Considering $V$ and $W$ with their correpsonding automorphisms $P_V$ and $P_W$ we can equivalently say that $\phi$ is a morphisms between $V$ and $W$ if the following diagram commutes:
\[
\begin{tikzcd}
V\arrow[d,swap,"\phi"]\arrow[r,"P_V"]&\arrow[d,"\phi"]V\\
W\arrow[r,"P_W"]&W
\end{tikzcd}
\]
\end{defn}

\subsection{Category of superspaces}
One can observe that superspaces form a category (over some fixed field $\mathbb{K}$), which we denote by ${\bf SuperVect}_\mathbb{K}$. One can define an \emph{endofunctor} on ${\bf SuperVect}_\mathbb{K}$
$$\Pi\colon (V,P)\mapsto (V,-P).$$
This functor is called \emph{change of parity}. Thus we get 
\begin{align}
\begin{split}
(\Pi V)_{\bar 0}&=V_{\bar 1},\\
(\Pi V)_{\bar 1}&=V_{\bar 0}.
\end{split}
\end{align}

\begin{rem}
The superspace $V[1]$, coming from a vector space $V$, can also be written as $\Pi V$.
\end{rem}

\begin{rem}
One can observe that ${\bf SuperVect}_\mathbb{K}$ is actually a symmetric monoidal category by 
\begin{align}
\begin{split}
(V\otimes W)_{\bar 0}&=(V_{\bar 0}\otimes W_{\bar 0})\oplus (V_{\bar 1}\otimes W_{\bar 1}),\\
(V\otimes W)_{\bar 1}&=(V_{\bar 0}\otimes W_{\bar 1})\oplus (V_{\bar 1}\otimes W_{\bar 0}).
\end{split}
\end{align}
Moreover, $P_{V\otimes W}=P_V\otimes P_W$. One can also define an operation 
\begin{align}
\begin{split}
V\otimes W&\to V\otimes W\\
x\otimes y&\mapsto (-1)^{\vert x\vert \vert y\vert}y\otimes x,
\end{split}
\end{align}
which is called the \emph{braiding}.
\end{rem}

\begin{rem}
We have a natural embedding of symmetric monoidal categories:
\begin{align}
\begin{split}
{\bf Vect}_\mathbb{K}&\to {\bf SuperVect}_\mathbb{K}\\
V&\mapsto (V,\id),
\end{split}
\end{align}
where ${\bf Vect}_\mathbb{K}$ denotes the category of $\mathbb{K}$-vector spaces.
\end{rem}

\begin{defn}[Symmetric superspace]
For any superspace $V$ we can define for any positive integer $n$ its $n$th symmetric power as
\begin{equation}
\Sym^nV=V^{\otimes n}/( x_1\otimes\dotsm \otimes x_n-\sigma(x_1\otimes\dotsm \otimes x_n),\sigma\in \Sigma_n),
\end{equation}
where $\Sigma_n$ denotes the symmetric group of order $n$ and $(\enspace\enspace)$ the ideal generated by some relation.
\end{defn}

\begin{rem}
We get $\Sym^n(V_{\bar 0}\oplus 0)=\Sym^nV_{\bar 0}$ and $\Sym^n(0\oplus V_{\bar 1})=\bigwedge^n V_{\bar 1}$. Moreover
\begin{equation}
\label{symm}
\Sym^n(V_{\bar 0}\oplus V_{\bar 1})=\bigoplus_{0\leq k\leq n}\left(\Sym^kV_{\bar 0}\otimes \bigwedge^{n-k}V_{\bar 1}\right).
\end{equation}
\end{rem}

\subsection{Short description for the local picture}
Locally, we consider coordinates $(x^{i})$ on an open $U\subset \R^n$ and the algebra of smooth maps $C^\infty(U)$, which are algebraically described by the commutativity of the coordinates, i.e.  we have the equivalence relation $x^{i}x^j=x^jx^{i}$. Considering this relation, we can add coordinates $\theta^\mu$ such that algebraically we have 
\begin{align*}
\theta^\mu x^{i}&=x^{i}\theta^\mu\\ 
\theta^\mu\theta^\nu&=-\theta^\nu\theta^\mu,
\end{align*}
i.e. an anticommuting relation for the $(\theta^\mu)$. So, we can describe the algebra, generated by these coordinates, as $\calA=\widehat{\R[x,\theta]}/\sim$, where $\sim$ is given by the commutative relation of the $(x^{i})$-coordinates and the anticommutative relation of the $(\theta^\mu)$-coordinates. Equivalently we can write $$\calA=C^\infty(U)\otimes \bigwedge V^*=:C^\infty(U\times \Pi V),$$ for some vector space $V$. 


\subsection{Supermanifolds}
We want to explain the globalization of the local picture given above, i.e. we want to describe a manifold structure such that locally we get the structure as we have seen before.
Consider a diffeomorphism between patches
\begin{equation}
\label{diffeo}
\varphi\colon U\times \Pi V\to \widetilde{U}\times \Pi \widetilde{V} 
\end{equation}
such that 
$$\varphi^*\colon C^\infty(\widetilde{U}\times \Pi\widetilde{V})\to C^\infty(U\times \Pi V)$$ or equivalently 
$$\varphi^*\colon C^\infty(\widetilde{U})\otimes \bigwedge \widetilde{V}^*\to C^\infty(U)\otimes \bigwedge V^*$$
is a superalgebra morphism.
Let $x$ and $\theta$ be coordinates on $U$ and $\Pi V$ respectively and $\widetilde{x}$ and $\widetilde{\theta}$ coordinates of $\widetilde{U}$ and $\Pi\widetilde{V}$ respectively. Then we can write 
\begin{align*}
\widetilde{x}^{\widetilde{i}}&=\varphi^{\widetilde{i}}(x,\theta)\hspace{0.5cm} \text{even on \hspace{0.5cm}$C^\infty(U)\otimes \bigwedge^{even}V^*$},\\
\widetilde{\theta}^{\widetilde{\mu}}&=\varphi^{\widetilde{\mu}}(x,\theta)\hspace{0.5cm}\text{odd on \hspace{0.5cm}$C^\infty(U)\otimes \bigwedge^{odd}V^*$}.
\end{align*}


A supermanifold is then given by patching together $U\times \Pi V$. Mathematically more clearly, we have the following definition:

\begin{defn}[Supermanifold]
\label{supermnf}
A \emph{supermanifold} $\calM$ is a locally ringed space $(M,\calO_M)$, which is locally isomorphic to $$\left(U,C^\infty(U)\otimes \bigwedge V^*\right),$$ where $U\subset \R^n$ is open and $V$ is some finite-dimensional real vector space. We call $M$ the \emph{body} of $\calM$ and $\calO_M$ the \emph{structure sheaf} of $M$.
\end{defn}

\begin{rem}
The isomorphism mentioned in Definition \ref{supermnf} is in the category of $\mathbb{Z}_2$-graded algebras, which is the parity:
\begin{align*}
\vert\enspace\enspace\vert\colon \bigoplus_{k\geq 0}C^\infty(U)\otimes \bigwedge^kV^*&\to \Z_2\\
f\otimes \theta&\mapsto \vert f\otimes \theta\vert:=\vert \theta\vert=k\mod 2.
\end{align*}
This induces that globally $C^\infty(\calM)$ is a graded commutative algebra. In particular, for two homogeneous elements $f,g\in C^\infty(\calM)$, we have $fg=(-1)^{\vert f\vert \vert g\vert}gf$.
\end{rem}

\begin{rem}
The supermanifold denoted by $\R^{n\vert m}$ is the supermanifold with body $\R^n$ and sheaf of functions $C^\infty(\R^n)\otimes \bigwedge \R^m$.
\end{rem}

\subsection{Morphisms of supermanifolds}
Let $\calM$ and $\calN$ be two supermanifolds. Then (in the smooth setting) we can define a morphism between $\calM$ and $\calN$ to be given by a morphism of \emph{superalgebras} from $C^\infty(\calN)$ to $C^\infty(\calM)$. Note that a \emph{superalgebra} is a $\mathbb{Z}_2$-graded algebra, i.e. an algebra over a commutative ring, where the multiplication preserves the grading. Hence the notion of a morphism is simillar as for superspaces. This idea is clear by the following definition:

\begin{defn}[Morphism of locally ringed spaces]
Let $(X,\calO_X)$ and $(Y,\calO_Y)$ be two ringed spaces. A \emph{morphism} $(X,\calO_X)\to (Y,\calO_Y)$ is a pair $(f,f^\sharp)$, where $f\colon X\to Y$ is a continuous map and $f^\sharp\colon \calO_Y\to f_*\calO_X$ is a morphism of local rings (i.e. it respects the maximal ideal). Here $f_*$ denotes the direct image functor for sheaves.
\end{defn}

\begin{rem}
Assume $\mathcal{N}$ to be given a local patch $\bar V\oplus \Pi \bar W$. Then one would get 
$$\Hom(\calM,\calN)\cong \Hom(C^\infty(\calN),C^\infty(\calM))\cong \left((\bar V\oplus \Pi\bar W)\otimes C^\infty(\calM)\right)_0.$$
\end{rem}
The last space represents the even part of the corresponding superspace.

We can also consider the algebra of \emph{polynomial functions} on $V\oplus \Pi W$, where $V$ and $W$ are finite-dimensional vector spaces, which is given by $\Sym(V^*)\otimes \bigwedge W^*$. We denote by $\Sym$ the symmetric algebra. Globally, the algebra of smooth functions on a supermanifold $C^\infty(\calM)$ is defined to be the algebra of global sections of the sheaf associated to $\calM$. The parity can be extended to this algebra and thus it is a super commutative algebra with respect to this parity, i.e. for two homogeneous elements $f$ and $g$ of degree $\vert f\vert$ and $\vert g\vert$ respectively, one has $f\cdot g=(-1)^{\vert f\vert \vert g\vert}g\cdot f$.

\begin{ex}[Differential forms]
Let $M$ be a manifold. Then the algebra of differential forms $\Omega(M)$ is locally isomorphic to $C^\infty(U)\otimes \bigwedge T_x^*M$, where $x$ is some point on $U$. This means that the sheaf of differential forms on a manifold corresponds to a supermanifold. 
\end{ex}

\begin{ex}
Let $V$ be a vector space. If we regard $V$ as a superspace $(V,\id)$, then one can consider its associated shifted (or odd) superspace $\Pi V=V[1]=(V,-\id)$ such that $C^\infty(\Pi V)=C^\infty(V[1])=\bigwedge V^*$. 
\end{ex}

\begin{ex}[Odd vector bundle]
\label{ex_oddVB}
Let $E\to M$ be a vector bundle. Then we can associate to it the odd vector bundle $\Pi E$, which is a supermanifold. Moreover, the smooth functions are given by 
\begin{equation}
\label{oddVB}
C^\infty(\Pi E)=\Gamma\left(\bigwedge E^*\right),
\end{equation}
where $\Gamma$ denotes the space of smooth sections.
\end{ex}

\begin{ex}[Chevalley--Eilenberg complex]
\label{CE_complex}
Consider a real, finite-dimensional Lie algebra $\mathfrak{g}$. The cochains of the \emph{Chevalley--Eilenberg} complex of $\mathfrak{g}$ are the elements of $\bigwedge\mathfrak{g}^*$, i.e. the vector spaces $\bigwedge^{k}\mathfrak{g}^*$, which is the same as the algebra of smooth functions on the supermanifold $\Pi \mathfrak{g}$. In particular, this is a subexample of the more general example \ref{ex_oddVB}.
\end{ex}

\begin{ex}[Odd (co)tangent bundle]
\label{odd_cotangent}
This is again a subexample of example \ref{ex_oddVB}. Consider a smooth manifold $M$. Then we can consider the supermanifolds $\Pi TM$ and $\Pi T^*M$, which we call the \emph{odd tangent bundle} and \emph{odd cotangent bundle} of $M$ respectively. According to example \ref{ex_oddVB}, the smooth functions are then given by 
\begin{align}
\label{oddTBfct}
C^\infty(\Pi TM)&=\Gamma\left(\bigwedge T^*M\right)=\Omega(M),\\
\label{oddCTBfct}
C^\infty(\Pi T^*M)&=\Gamma\left(\bigwedge TM\right)=\mathfrak{X}_{mult}(M),
\end{align}
where $\mathfrak{X}_{mult}(M)$ denotes the algebra of \emph{multivector fields} on $M$ endowed with the \emph{Schouten--Nijenhuis bracket}. 
\end{ex}

\begin{rem}
Since we mostly consider the the supermanifolds to be odd (co)tangent bundles, equations \eqref{oddTBfct} and \eqref{oddCTBfct} are indeed helpful, since the fields are then given by differential forms or multivector fields.
\end{rem}


\subsection{Graded Manifolds} We want to introduce another important manifold structure, which can be combined with the structure of supermanifolds. 

\begin{defn}[Graded manifold]
\label{grmanifold}
A \emph{graded manifold} $\calM$ is a manifold $M$, which locally looks like $(U,C^\infty(U)\otimes \Sym(V^*))$, where $U\subset \R^n$ is open and $V$ is a graded vector space. 
\end{defn}

\begin{rem}
One can construct an isomorphism between the the structure sheaf of a supermanifold and the local model of a graded manifold, which will be in the category of $\mathbb{Z}$-graded algebras. Thus we consider a graded manifold as a tuple $(M,\calO_M)$. We denote the $\mathbb{Z}$-grading by $\gh$.
\end{rem}


\begin{rem}
In application to field theory, the space of fields arise with a graded manifold structure. Moreover, the $\mathbb{Z}$-grading in bosonic field theory, as defined before, is called the \emph{ghost number} (see Subsection \ref{ghostnr}). Hence, we can combine the structure of a graded manifold and a supermanifold by considering a superspace of the form \eqref{superspace} and consider $\Sym(V^*)$ in Definition \ref{grmanifold} to be given as in \eqref{symm}.
\end{rem}

\begin{defn}[Graded vector bundle]
A \emph{graded vector bundle} $E$ over a manifold $M$ is a collection of ordinary vector bundles $\bigoplus_{k\in\mathbb{Z}}E_k$ over $M$. 
\end{defn}

\begin{rem}
One can consider a sheaf $U\mapsto \Gamma(U,\Sym(E\vert_U^*))$, which maps an open subset to a section with values in the graded symmetric algebra of the dual bundle restricted to $U$. Examples \ref{CE_complex} and \ref{odd_cotangent} have natural structures of graded manifolds, with ghost number corresponding to the usual degrees of the algebras associated to these examples.
\end{rem}

\begin{prop}
Any smooth graded manifold is isomorphic to a graded manifold associated to a graded vector bundle.
\end{prop}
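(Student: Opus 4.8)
The statement is the smooth graded version of \emph{Batchelor's theorem}, and I would prove it by a filtration-and-splitting argument in which the only genuine obstruction is killed by partitions of unity. First I would equip the structure sheaf $\calO_M$ with the filtration $\calO_M=\calF^0\supseteq\calF^1\supseteq\calF^2\supseteq\cdots$, where $\calF^p$ is the subsheaf locally generated over $C^\infty(M)$ by products of at least $p$ of the nonzero-degree generators (the coordinates coming from $V^*$ in a local model $C^\infty(U)\otimes\Sym(V^*)$). Since the transition maps of Definition \ref{grmanifold} are graded-algebra isomorphisms whose nonlinear part can only raise the number of generators, this filtration is preserved on overlaps and hence globally well defined; note that $\calO_M/\calF^1\cong C^\infty(M)$ recovers the body.

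Next I would extract the graded vector bundle. The quotient $\calF^1/\calF^2$ is a locally free sheaf of graded $C^\infty(M)$-modules, freely generated in each patch by the classes of the generators, organised by their ghost degree $\gh$. By Serre--Swan it is therefore the sheaf of sections of a graded vector bundle $E^*=\bigoplus_k E_k^*$, and dualising yields the desired $E=\bigoplus_k E_k$; the leading (linear) parts of the transition maps are exactly the cocycle of $E$, so $E$ is an honest graded vector bundle in the sense of the preceding definition. Taking symmetric powers, one checks that $\calF^p/\calF^{p+1}$ is isomorphic to the sheaf of sections of $\Sym^p(E^*)$, so the associated graded sheaf $\mathrm{gr}_\calF\calO_M$ is canonically isomorphic, as a sheaf of graded-commutative algebras, to the structure sheaf $\calO_E\colon U\mapsto\Gamma(U,\Sym(E|_U^*))$ of the graded manifold associated to $E$.

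It then remains to \emph{split} the filtration, i.e. to upgrade the isomorphism $\mathrm{gr}_\calF\calO_M\cong\calO_E$ to an isomorphism $\calO_M\cong\calO_E$. On a locally finite cover $\{U_\alpha\}$ I would choose local algebra isomorphisms $\psi_\alpha\colon\calO_E|_{U_\alpha}\xrightarrow{\sim}\calO_M|_{U_\alpha}$ inducing the identity on the associated graded; these exist because over each patch both sheaves are literally $C^\infty(U)\otimes\Sym(V^*)$. On overlaps the comparisons $\psi_\alpha^{-1}\psi_\beta$ are automorphisms of $\calO_E|_{U_\alpha\cap U_\beta}$ acting as the identity on $\mathrm{gr}_\calF$, i.e.\ they assemble into a \v{C}ech $1$-cocycle valued in the sheaf $\calG$ of filtered algebra automorphisms that are trivial modulo $\calF^{\bullet+1}$.

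The heart of the proof, and the step I expect to be the main obstacle, is showing this cocycle is a coboundary, for which smoothness is essential. The sheaf $\calG$ is a sheaf of pro-unipotent groups whose graded pieces are $C^\infty(M)$-linear, hence built from sheaves of homomorphisms of modules over the soft sheaf $C^\infty_M$; these admit partitions of unity and are therefore fine, so the relevant $H^1$ vanishes. Because an automorphism trivial on $\mathrm{gr}_\calF$ raises the filtration degree by at least one, and because in each ghost degree the local model involves only finitely many generators, I would carry out the splitting order by order in the filtration: at each stage the obstruction lies in $H^1$ of a fine abelian sheaf and hence vanishes, and the corrections stabilise after finitely many steps in each degree, so no convergence issue arises. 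Modifying the $\psi_\alpha$ by the resulting local automorphisms makes them agree on overlaps, yielding a global isomorphism $\calO_E\xrightarrow{\sim}\calO_M$ and proving the proposition. The only place the argument could break, and indeed does break in the holomorphic or algebraic settings, is the vanishing of this $H^1$, which here is guaranteed precisely by the existence of smooth partitions of unity.
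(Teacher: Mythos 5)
The paper states this proposition without proof---it is the smooth graded analogue of Batchelor's theorem, quoted as a known fact---so there is no in-paper argument to compare yours against. Judged on its own, your proof is the standard one and its architecture is correct: the filtration of $\calO_M$ by number of nonzero-degree generators (well defined because degree-preserving transition maps send $\calF^1$ to $\calF^1$ and the filtration is multiplicative), the identification of $\calF^1/\calF^2$ with the sheaf of sections of a graded bundle $E^*$, the isomorphism $\calF^p/\calF^{p+1}\cong\Gamma(\Sym^p(E^*))$, and the order-by-order splitting of the filtration, with obstructions living in $H^1$ of fine sheaves of $C^\infty_M$-modules, which vanishes thanks to smooth partitions of unity. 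Your closing remark correctly isolates why this fails in the holomorphic and algebraic categories.

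One step deserves a sharper statement. Your claim that ``the corrections stabilise after finitely many steps in each degree'' is automatic only when the graded vector space $V$ of the local model is concentrated in degrees of a single sign (e.g.\ for $\N$-graded manifolds, the setting in which the graded Batchelor theorem is usually proved). If $V$ has generators of both positive and negative degree, a fixed ghost degree meets $\calF^p$ for every $p$---for instance the powers $(\theta\eta)^n$ with $\gh(\theta)=1$, $\gh(\eta)=-1$ all sit in degree zero---so the infinite product of filtration-raising corrections need not converge in the uncompleted algebra $C^\infty(U)\otimes\Sym(V^*)$. The fix is either to restrict the statement to one-sided gradings or to take the local model with the completed algebra $\Hat{\Sym}(V^*)$, in which the pro-unipotent automorphism sheaf $\calG$ is complete in the filtration topology and the inductive limit of your $\psi_\alpha$'s exists. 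Under the conventions the paper appears to intend (finitely many nontrivial $V_k$, completed symmetric algebras wherever formal constructions are used), this is a routine repair rather than a fatal flaw, but as written your convergence sentence overstates the case.
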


\subsection{Vector fields}
We can now define the notion of a vector field on a supermanifold.
\begin{defn}[Supervector field]
A \emph{supervector field} is a vector field $X$ on a supermanifold $\calM$, with local coordinates $(x^{i},\theta^\mu)$, which is given by 
$$X=\sum_{i,\mu} X^{i}\partial_{x^{i}}+Y^\mu\partial_{\theta^\mu},$$
such that $\partial_{\theta^\mu}x^{i}=0$ and $\partial_{\theta^\nu}\theta^\mu=\delta_\nu^\mu$.
\end{defn}
Equivalently, vector fields on $U\times \Pi V$ are global derivations on $C^\infty(U)\otimes \bigwedge V^*$.
\begin{rem}
The $\partial_{x^{i}}$ are the usual position derivatives.
\end{rem}
\begin{defn}[Graded supervector field]
A \emph{graded vector field} is a graded linear map $X\colon C^\infty(\calN)\to C^\infty(\calN)[k]$, where $\calN$ is a graded manifold, which satisfies the \emph{graded Leibniz rule}: for any two homogeneous functions $f,g\in C^\infty(\calN)$ we have 
\begin{equation}
\label{grLeibniz}
X(fg)=X(f)g+(-1)^{k\vert f\vert}fX(g).
\end{equation}
\end{defn}
Denote the space of graded vector fields on a given graded manifold (or supermanifold) $\calM$ by $\mathfrak{X}(\calM)$.
We can define a \emph{graded} Lie bracket $[\enspace,\enspace]_{\mathfrak{X}(\calM)}\colon \mathfrak{X}(\calM)\otimes \mathfrak{X}(\calM)\to \mathfrak{X}(\calM)$ by 
$$[X,Y]_{\mathfrak{X}(\calM)}:=X\circ Y-(-1)^{\vert X\vert\vert Y\vert} Y\circ X.$$ 
The space $\mathfrak{X}(\calM)$ endowed with $[\enspace,\enspace]_{\mathfrak{X}(\calM)}$ becomes then a graded Lie algebra (since the bracket is graded).

\begin{rem}
We will not always write the ``super'' in front of ``vector field'' and assume it is understood from the context.
\end{rem}

\subsection{Differential forms}
Consider a graded manifold $\calM$ with homogeneous local coordinates $(x^{i})$. Then we can locally form an algebra by adding coordinates $(\dd x^{i})$ to the coordinates we had before. Note that if $x^{i}$ is a coordinate of odd degree, then $\dd x^{i}$ is even and $(\dd x^{i})^2\not=0$. On an ordinary smooth manifold, differential forms have two important properties: they can be differentiated – hence the name differential forms – and they also provide the right objects for an integration theory on submanifolds. It turns out that on graded manifolds, the latter is no longer true, since the differential forms we introduced do not come along with a nice integration theory. 
Differential forms on a supermanifold $\calM$  with local coordinates $(x^{i},\theta^\mu)$ are locally generated by the $1$-forms $\dd x^{i}$ and $\dd \theta^\mu$. We can take the algebra, which is generated by $x^{i},\theta^\mu, \dd x^{i}, \dd \theta^\mu$, whereas now the $\dd x^{i}$ are odd and the $\dd\theta^\mu$ are even. In particular, for any local coordinate $y$ of $\calM$, we have $\vert\dd y^j\vert=\vert y^j\vert+1$. Thus, we have a de Rham differential on the algebra of differential forms 
$$\dd\colon \hspace{0.2cm}x\mapsto \dd x,\hspace{0.2cm} \theta\mapsto \dd\theta,$$
with $\dd^2=0$, which is a \emph{graded differential}. Considering a supermanifold $\calM$ with local coordinates $(x^{i},\theta^\mu)$ we can look at the \emph{shifted (odd) tangent bundle} $\Pi T\calM$ with local coordinates $(x^{i},\theta^\mu,\dd x^{i},\dd \theta^\mu)$. Moreover, the differential is then given by 
\begin{equation}
\label{diff}
\dd =\sum_{i,\mu} \dd x^{i}\partial_{x^{i}}+\dd\theta^\mu\partial_{\theta^\mu},
\end{equation}
with the property 
\begin{equation}
\label{cohom}
[\dd,\dd]=0. 
\end{equation}
Note that the differential is given as a vector field on $\Pi T\calM$. A vector field, such as in \eqref{diff}, satisfying \eqref{cohom} is called a \emph{cohomological vector field}. Consider the \emph{de Rham complex} $(\Omega(\calM),\dd)$ for some given supermanifold $\calM$. It is given by $C^\infty(T[1]\calM)$ equipped with a graded vector field $Q$ of degree $+1$, i.e. $Q\colon C^\infty(\calM)\to C^\infty(\calM)[1]$, which satisfies the graded Leibniz rule \eqref{grLeibniz}. This becomes more clear by the following lemma:
\begin{lem}
Every \emph{cohomological vector field} on a supermanifold $\calM$ corresponds to a \emph{differential} on the graded algebra of smooth functions $C^\infty(\calM)$
\end{lem}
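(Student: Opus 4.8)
The plan is to realize the claimed correspondence as essentially the identity map: a cohomological vector field $Q$ is already an operator on $C^\infty(\calM)$, so I only have to check that it is a \emph{differential} in the algebraic sense --- a degree $+1$ graded derivation with $Q^2=0$ --- and conversely that every such differential is cohomological. Since $Q$ is by definition a supervector field, it satisfies the graded Leibniz rule \eqref{grLeibniz} with $k=\vert Q\vert=1$, hence it is automatically a graded derivation of degree $+1$. Thus the entire content of the statement reduces to the equivalence between the bracket condition $[Q,Q]=0$ of \eqref{cohom} and the algebraic condition $Q^2=0$.

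First I would unwind the graded commutator on the diagonal. Using the definition $[X,Y]_{\mathfrak{X}(\calM)}=X\circ Y-(-1)^{\vert X\vert\vert Y\vert}Y\circ X$ with $X=Y=Q$ and $\vert Q\vert=1$, one obtains
\[
[Q,Q]_{\mathfrak{X}(\calM)}=Q\circ Q-(-1)^{1}\,Q\circ Q=2\,Q\circ Q.
\]
Because the base field is $\R$, the factor $2$ is invertible, so $[Q,Q]=0$ holds if and only if $Q^2=Q\circ Q=0$. This already yields both directions of the correspondence at the level of conditions: a cohomological $Q$ gives $Q^2=0$, and any degree $+1$ derivation $D$ with $D^2=0$ satisfies $[D,D]=2D^2=0$, hence is cohomological.

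The point I would emphasize is \emph{why} ``$Q^2=0$'' is the correct notion of differential in the first place, namely that $Q^2$ is itself a derivation, so that its vanishing is an algebraically meaningful condition on $C^\infty(\calM)$. This is exactly where the oddness of $Q$ enters: the graded commutator of two derivations is again a derivation, of degree the sum of the two degrees, so $[Q,Q]=2Q^2$ is a degree $+2$ derivation, and dividing by $2$ shows $Q^2$ is a derivation. For an \emph{even} vector field the analogous $X^2$ would generically be a genuine second-order operator while $[X,X]$ vanishes identically, which is precisely why the cohomological condition is interesting only in odd degree. I do not expect a real obstacle here; the only care needed is tracking the Koszul sign in the commutator and recording that $Q^2$ lands in derivations, both of which are immediate from the definitions recalled above.
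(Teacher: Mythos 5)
Your proposal is correct and follows the same route as the paper's own proof, whose entire content is the computation $[Q,Q]=2\,Q\circ Q=0$ combined with the observation that $Q$ raises degree by one and satisfies the graded Leibniz rule. Your additional remarks (the converse direction and the fact that $Q^2=\frac{1}{2}[Q,Q]$ is itself a derivation, which is what makes the condition meaningful for odd $Q$) are sound elaborations of the same argument rather than a different approach.
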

\begin{proof}
Since a cohomological vector field $Q$ is a vector field of degree $+1$, which commutes with itself we have $[Q,Q]=2Q\circ Q=0$, and since it raises the degree of a function by $1$, it corrsponds to a differential.
\end{proof}

\begin{ex}[Cohomological vector field on odd tangent bundles]
Consider the odd tangent bundle $\Pi TM$ for some smooth manifold $M$. Since $C^\infty(\Pi TM)$ is given by $\Omega(M)$, we get that the de Rham differential $\dd_M$ on $M$ is a cohomological vector field. In local coordinates $(x^{i},\dd x^{i})$, we get 
$$Q=\sum_{i}\dd x^{i}\partial_{x^{i}}.$$
\end{ex}

\begin{ex}[Chevalley--Eilenberg differential]
Let $(\mathfrak{g},[\enspace,\enspace])$ be a finite-dimensional Lie algebra. The graded manifold $\Pi\mathfrak{g}$ carries a cohomological vector field $Q$, which corresponds to the Chevalley--Eilenberg differential on $\bigwedge\mathfrak{g}=C^\infty(\Pi\mathfrak{g})$. Let $(e_i)$ be a basis of $\mathfrak{g}$, and let $(f_{ij}^k)$ be the corresponding structure constants given by $[e_i,e_j]=\sum_k f_{ij}^ke_k$. Then, we get the cohomological vector field 
$$Q=\frac{1}{2}\sum_{i,j,k}x^{i}x^jf_{ij}^k\partial_{x^k},$$
where $(x^{i})$ are the coordinates on $\Pi\mathfrak{g}$, which correspond to the basis dual to $(e_i)$. In particular, one can check that $[Q,Q]=0$ is equivalent to the fact that the bracket $[\enspace,\enspace]\colon\mathfrak{g}\otimes\mathfrak{g}\to\mathfrak{g}$ defined on the generators as above, satisfies the Jacobi identity.
\end{ex}

\subsection{Graded symplectic forms}
A graded \emph{symplectic form} of degree $k$ on a graded manifold $\calM$ is a closed (w.r.t. the de Rham differential), nondegenerate $2$-form $$\omega\colon T\calM\to T^*[k]\calM,$$ which is, in local coordinates, given by 
\begin{equation}
\label{sympl}
\omega=\sum_{i,j}\dd z^{i}\omega_{ij}\dd z^j,
\end{equation} 
where $z^\alpha\in\{x^{i},\theta^\mu\}$. 
\begin{rem}
In application to the \emph{BV formalism}, one considers $\omega$ to be \emph{odd} and of degree $-1$, and  for the \emph{BFV formalism} to be \emph{even} and of degree 0.
\end{rem}
\begin{ex}
Let $V$ be a real vector space. The contraction between $V$ and $V^*$ defines a nondegenerate pairing on $V\oplus V^*$. This includes a constant symplectic form of degree $k+\ell$ on $V[k]\oplus V^*[\ell]$.
\end{ex}

\begin{ex}
Consider $\R[1]$ with the $2$-form $\omega=\dd x\dd x$. This is a symplectic form of degree $2$.
\end{ex}

Next, we want to introduce some notation. Let $V$ be a superspace and consider 
for a homogeneous element $v\in V$, and a function $f$ on $V$, the so-called \emph{left} and \emph{right derivatives}, which are defined as 
\begin{align} 
\overrightarrow{\partial}_{v}f&:=\partial_vf\\
f\overleftarrow{\partial}_v&:=(-1)^{\vert v\vert (\vert f\vert+1)}\partial_vf
\end{align}

\begin{rem}
One can check that the right derivative satisfies the Leibniz rule from the right.
\end{rem}

One can observe that a symplectic form as in \eqref{sympl}, induces a \emph{graded Poisson bracket} $$\{\enspace,\enspace\}\colon C^\infty(\calM)\otimes C^\infty(\calM)\to C^\infty(\calM),$$ which is given by 
$$\{f,g\}= f\overleftarrow{\partial}_i(\omega^{-1})^{ij}\overrightarrow{\partial}_j g,$$
which is a graded Poisson bracket. Similarly to ordinary manifolds, the Hamiltonian vector field $X_H$ for a Hamiltonian function $H\in C^\infty(\calM)$ is given by the equation $\iota_{X_H}\omega=-\dd H$, which leads to the equation $$\{H,G\}=(-1)^{\vert H\vert+1}X_H(G)$$ for some function $G\in C^\infty(\calM)$. Note that $\iota_X$, for some vector field $X$, is a vector field on $\Pi T\calM$.  Moreover, if $\{\enspace,\enspace\}$ is of degree $k$, we have
$$\{f,g\}=-(-1)^{(\vert f\vert+k)(\vert g\vert+k)}\{g,f\}$$

\subsection{Lie derivative}
We can also extend the definition of a \emph{Lie derivative} (see Subsection \ref{Lie_der}) $\sfL$ to supermanifolds, by noticing that with respect to $X$, we have $$\sfL_X=[\iota_X,\dd]\in \mathfrak{X}(\Pi T\calM),$$ which is obtained by the \emph{Cartan calculus}. Moreover, we have $[\sfL_X,\sfL_Y]=\sfL_{[X,Y]}$ and $[\iota_X,\sfL_Y]=\pm \iota_{[X,Y]}$. In particular, $\sfL_X$ can be obtained by differentiating the \emph{flow} of the vector field $X$.


\subsection{Integration on superspaces}
After defining the most important concepts around supermanifolds, we want to be able to perform integrals. Clearly, we need to extend the usual integration theory on manifolds to a more general picture. We want to start locally by considering integration over superspaces. Naturally, we define the integration for the even coordinates as usual. To get the correct integration theory, we need some notions on supermatrices and so-called \emph{Berezinians}.
\begin{defn}[Supermatrix]
A \emph{supermatrix structure} is a matrix structure with parity attached to each row and column. 
\end{defn}
We usually arange a supermatrix structure in such a way that all the even rows and columns come first, and the odd ones second, so that it can be conveniently written in block form, e.g. 
$$X=\begin{pmatrix}A&B\\ C&D\end{pmatrix},$$
where $A,B,C,D$ are the matrices corresponding to the partition into even and odd rows and columns. 
\begin{defn}[Order]
If a supermatrix structure has $p$ even rows and $q$ odd rows, and $r$ even columns and $s$ odd columns, we call it a matrix structure of size $(p,q)\times (r,s)$. A $(p,q)\times (p,q)$ structure is said to have \emph{order} $(p,q)$.
\end{defn}
Denote by $\Mat_{p,q}(\calA)$ the space of matrices of order $(p,q)$ on a commutative superalgebra $\calA=\calA_{even}\oplus\calA_{odd}$ and by $\GL_{p,q}(\calA)\subset \Mat_{p,q}(\calA)$ we denote the subset of invertible elements. We want to consider a homomorphism $\GL_{p,q}(\calA)\to \GL_{1,0}(\calA)=\calA^\times_{even}$, where $\calA^\times_{even}$ denotes the group of invertible elements of $\calA_{even}$. Such a map would be an analogue to the usual determinant.
\begin{lem}
\label{inv}
Consider 
$$X=\begin{pmatrix}A&B\\ C&D\end{pmatrix}\in(\Mat_{p,q}(\calA))_{even}.$$
Then $X$ is invertible if and only if $A$ and $D$ are invertible.
\end{lem}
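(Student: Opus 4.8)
The plan is to reduce everything to two classical facts about matrices over ordinary commutative rings, exploiting that in a supercommutative algebra the odd elements behave like nilpotents. The first observation I would record is that any odd $\theta\in\calA_{even}^{\,\perp}=\calA_{odd}$ satisfies $\theta^2=0$ (from $\theta^2=(-1)^{1\cdot 1}\theta^2$), and more generally that the finitely many odd entries $b_{ij}$ of $B$ and $c_{ij}$ of $C$ generate a nilpotent ideal $\mathcal{N}\subseteq\calA$: any monomial in these generators with a repeated factor vanishes, so $\mathcal{N}^{N+1}=0$ once $N$ exceeds the number of generators. The relevant off-diagonal products all have entries lying in $\mathcal{N}$, since each such entry is a sum of products of two odd elements. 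I will use repeatedly that if a square matrix $M$ has entries in a nilpotent ideal then $M$ is a nilpotent matrix, so $I-M$ is invertible with inverse the finite Neumann series $\sum_k M^k$. Finally, $\calA_{even}$ is an honest commutative ring, so for matrices with even entries the determinant is defined and detects invertibility.

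For the direction $A,D$ invertible $\Rightarrow X$ invertible, I would use the block LU (\emph{Schur complement}) factorization
$$X=\begin{pmatrix}I & 0 \\ CA^{-1} & I\end{pmatrix}\begin{pmatrix}A & 0 \\ 0 & D-CA^{-1}B\end{pmatrix}\begin{pmatrix}I & A^{-1}B \\ 0 & I\end{pmatrix},$$
which is legitimate since $A^{-1}$ exists. The outer two factors are unipotent, hence invertible, so $X$ is invertible iff the middle factor is, i.e. iff $A$ and the Schur complement $D-CA^{-1}B$ are invertible. Now $CA^{-1}B$ has entries in $\mathcal{N}$, so writing $D-CA^{-1}B=D\,(I-D^{-1}CA^{-1}B)$ exhibits it as the invertible $D$ times $I$ minus a nilpotent matrix; hence it is invertible and so is $X$.

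For the converse I would first argue that the inverse of an even invertible matrix is again even: applying the parity automorphism $P$ to $XX^{-1}=I=X^{-1}X$ and using $P(X)=X$ gives $X\,P(X^{-1})=I=P(X^{-1})\,X$, so $P(X^{-1})=X^{-1}$ by uniqueness of inverses. Writing $X^{-1}=\begin{pmatrix}A' & B' \\ C' & D'\end{pmatrix}$ with $A',D'$ even and $B',C'$ odd, the top-left block of $XX^{-1}=I$ reads $AA'=I-BC'$ and the bottom-right block reads $DD'=I-CB'$. Since $BC'$ and $CB'$ have entries in $\mathcal{N}$, both right-hand sides are invertible; passing to determinants over the commutative ring $\calA_{even}$, the products $\det A\cdot\det A'$ and $\det D\cdot\det D'$ are units, so $\det A$ and $\det D$ are units and therefore $A$ and $D$ are invertible.

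The only genuinely delicate point is the nilpotency bookkeeping: confirming that the off-diagonal products really land in a nilpotent ideal, and that the statement ``$I$ minus a nilpotent matrix is invertible'' is applied correctly to matrices over the noncommutative ring $\Mat_{p,q}(\calA)$. Everything else is routine block algebra together with the fact that over the commutative ring $\calA_{even}$ the determinant characterizes invertibility. I expect the parity argument for evenness of $X^{-1}$ and the passage from ``$AA'$ invertible'' to ``$A$ invertible'' to be the steps most worth stating carefully.
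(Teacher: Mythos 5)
Your proof is correct, but it takes a genuinely different route from the paper. The paper derives the lemma in one line from a general proposition stated immediately after it: for the quotient map $\pi\colon\calA\to\widetilde{\calA}=\calA/(\calA_{odd})$, a matrix $X\in\Mat_n(\calA)$ is invertible if and only if $\pi(X)$ is; since $X$ is even, its off-diagonal blocks have odd entries, so $\pi(X)=\operatorname{diag}(\pi(A),\pi(D))$ is block-diagonal, and both directions of the lemma follow at once. Your argument never forms the quotient: you prove the sufficiency direction constructively via the Schur-complement factorization, and the necessity direction via evenness of $X^{-1}$ plus determinants over the commutative ring $\calA_{even}$, with the nilpotency of the finitely generated odd ideal $\mathcal{N}$ (correctly verified, including that entries of $BC'$ and $CB'$ lie in $\mathcal{N}$ because one factor is a generator) powering the Neumann series in both halves. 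What the paper's route buys is brevity and reusability --- the same lifting-of-units proposition is what underlies, e.g., the well-definedness of the Berezinian --- while your route is self-contained, gives an explicit inverse, and actually supplies the nilpotency argument that the paper leaves implicit, since the proposition there is stated without proof. Two small points to state carefully if you write this up: first, the ``parity automorphism'' you apply to conclude $X^{-1}$ is even must be the parity of the matrix superalgebra $\Mat_{p,q}(\calA)$, namely entrywise parity of $\calA$ composed with conjugation by $\operatorname{diag}(I_p,-I_q)$ --- bare entrywise parity would negate the odd blocks $B,C$ and would not fix $X$; second, you can bypass determinants entirely by reading off $AA'=I-BC'$ from $XX^{-1}=I$ and $A'A=I-B'C$ from $X^{-1}X=I$, which give $A$ a right and a left inverse respectively, hence invertibility --- though your determinant argument over $\calA_{even}$ is also valid as written.
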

Lemma \ref{inv} is an immediate consequence of the following proposition.
\begin{prop}
Let $\calA$ be a commutative superalgebra, and $$\pi\colon \calA\to \widetilde{\calA}=\calA/(\calA_{odd})$$ the natural homomorphism, and 
$$\pi\colon \Mat_n(\calA)\to \Mat_n(\widetilde{\calA})$$
the corresponding homomorphism of matrix algebras (where the superstructure is ignored). Then $X\in\Mat_n(\calA)$ is invertible if and only if $\pi(X)$ is invertible.
\end{prop}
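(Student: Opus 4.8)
The plan is to treat the two implications separately, the forward one being immediate and the backward one carrying all the content. For the easy direction, if $X\in\Mat_n(\calA)$ is invertible with inverse $Y$, then applying the ring homomorphism $\pi$ to $XY=YX=I$ yields $\pi(X)\pi(Y)=\pi(Y)\pi(X)=I$ in $\Mat_n(\widetilde{\calA})$, so $\pi(X)$ is invertible. The real work is the converse, and I would organize it as a lifting argument that reduces everything to the invertibility of a matrix of the form $I+N$ with entries in the ideal generated by the odd part.

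So suppose $\pi(X)$ is invertible over $\widetilde{\calA}=\calA/(\calA_{odd})$. Since $\widetilde{\calA}$ is an ordinary commutative ring, $\pi(X)$ has a two-sided inverse $\overline{Y}$; I choose any entrywise lift $Y\in\Mat_n(\calA)$ of $\overline{Y}$. Applying $\pi$ to $XY$ and to $YX$ shows that $XY=I+N$ and $YX=I+N'$, where $N,N'$ are matrices whose entries lie in the ideal $\mathcal{J}:=(\calA_{odd})=\ker\pi$. Granting that $I+N$ and $I+N'$ are invertible, $X$ then acquires a right inverse $Y(I+N)^{-1}$ and a left inverse $(I+N')^{-1}Y$, and in any ring a left inverse and a right inverse must coincide, so $X$ is invertible. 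The whole proposition is thereby reduced to: $I+N$ is a unit whenever $N\in\Mat_n(\mathcal{J})$.

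The crux is therefore the structure of $\mathcal{J}$. Because $\calA$ is super-commutative over a field of characteristic zero, every odd element $\theta$ satisfies $\theta^2=-\theta^2$, hence $\theta^2=0$. An arbitrary element of $\mathcal{J}$ is a finite combination $x=\sum_{i=1}^k a_i\theta_i$ with $a_i\in\calA$ and $\theta_i\in\calA_{odd}$; expanding $x^{k+1}$ and using super-commutativity to reorder factors, every resulting monomial contains, by the pigeonhole principle, two copies of some single $\theta_i$, which can be brought adjacent to produce a factor $\theta_i^2=0$. Thus $x^{k+1}=0$, so $\mathcal{J}$ is a nil ideal. In the geometrically relevant case $\calA=C^\infty(U)\otimes\bigwedge V^*$ with $\dim V=m<\infty$ one has moreover $\mathcal{J}^{m+1}=0$, so $\mathcal{J}$ is genuinely nilpotent.

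It remains to deduce the invertibility of $I+N$. The entries of $N^{j}$ lie in $\mathcal{J}^{j}$, so in the finite-dimensional case $N^{m+1}=0$ and $I+N$ has the explicit inverse $\sum_{j\geq 0}(-N)^{j}$, a finite sum; this is the version I expect matches the intended setting. For a fully general nil $\mathcal{J}$ I would instead invoke the standard facts that a nil ideal lies in the Jacobson radical and that $\mathrm{rad}(\Mat_n(\calA))=\Mat_n(\mathrm{rad}(\calA))$, so that $N$ sits in the radical of $\Mat_n(\calA)$ and $I+N$ is automatically a unit. The main obstacle is exactly this last point—pinning down enough control on the ideal generated by the odd elements to force $I+N$ to be invertible—together with the bookkeeping of left versus right inverses in the noncommutative matrix ring needed to upgrade the two one-sided inverses into a single two-sided one.
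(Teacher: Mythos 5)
Your proof is correct, but note that the paper states this proposition without giving any proof of its own (it is quoted as a known fact from the theory of supermatrices), so the only benchmark is the standard argument --- which is essentially what you reconstruct: lift an inverse of $\pi(X)$, reduce to showing $I+N$ is a unit for $N$ with entries in $\mathcal{J}=(\calA_{odd})$, and exploit the nilpotence forced by odd elements squaring to zero. Two remarks. First, your Jacobson-radical detour for a general nil $\mathcal{J}$ is never needed: the matrix $N$ has only finitely many entries, each a finite combination $\sum a_i\theta_i$, so all of $N$ involves only finitely many odd elements $\theta_1,\dots,\theta_K$; every entry of $N^{j}$ is then a sum of monomials containing $j$ of these $\theta$'s, and for $j>K$ your own pigeonhole-plus-supercommutativity argument kills each monomial, giving $N^{K+1}=0$ outright and the finite geometric series $\sum_{j\ge 0}(-N)^{j}$ as inverse of $I+N$ --- so the elementary argument covers the general case, not just $\calA=C^\infty(U)\otimes\bigwedge V^*$. (That said, the radical argument you cite is also valid: a nil two-sided ideal lies in the Jacobson radical, and the radical of $\Mat_n(\calA)$ is $\Mat_n$ of the radical.) Second, your step $\theta^2=-\theta^2\Rightarrow\theta^2=0$ uses that $2$ is invertible; this is harmless here since the paper works over $\R$, but it is the one place where ``commutative superalgebra'' in the statement implicitly needs the ground ring to have characteristic different from $2$. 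Your handling of the one-sided inverses ($Y(I+N)^{-1}$ on the right, $(I+N')^{-1}Y$ on the left, then the standard coincidence of left and right inverses) is also correct and is a point often glossed over, since $\Mat_n(\calA)$ is genuinely noncommutative.
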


\begin{defn}[Berezinian]
For $X=\begin{pmatrix}A&B\\ C&D\end{pmatrix}\in \GL_{p,q}(\calA)$ we define the \emph{Berezinian} of $X$ by 
$$\Ber(X):=\frac{\det(A-BD^{-1}C)}{\det(D)}$$
\end{defn}

\begin{rem}
Note that $D$ is invertible by Lemma \ref{inv}. The entries in $D$ and $A-BD^{-1}C$ lie in the commutative algebra $\calA_{even}$ so that the determinants are well-defined and $\Ber(X)\in \calA_{even}$.
\end{rem}

\begin{thm}
If $X,Y\in\GL_{p,q}(\calA)$, then 
$$\Ber(XY)=\Ber(X)\Ber(Y).$$
\end{thm}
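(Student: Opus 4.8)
The plan is to reduce the general multiplicativity statement to a single sharp determinantal identity, by exploiting a Gauss-type factorization of even invertible supermatrices into three elementary kinds and checking how $\Ber$ behaves under the corresponding multiplications.

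First I would record the block ``LDU'' factorization. By Lemma \ref{inv} the block $D$ of any $X=\begin{pmatrix} A & B \\ C & D\end{pmatrix}\in\GL_{p,q}(\calA)$ (which we take even) is invertible, and one checks directly that
$$X = \begin{pmatrix} I & BD^{-1} \\ 0 & I\end{pmatrix}\begin{pmatrix} A - BD^{-1}C & 0 \\ 0 & D\end{pmatrix}\begin{pmatrix} I & 0 \\ D^{-1}C & I\end{pmatrix} =: T_+^X\,\Delta^X\,T_-^X.$$
Since the unitriangular factors are invertible, so is $\Delta^X$, whence $A-BD^{-1}C$ is invertible in $\calA_{even}$; straight from the definition one reads off $\Ber(T_\pm)=1$ and $\Ber(\Delta^X)=\det(A-BD^{-1}C)/\det D=\Ber(X)$.

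Next I would establish four ``easy'' transformation rules by substituting into the definition of $\Ber$ and simplifying the Schur complement over the commutative ring $\calA_{even}$. For arbitrary $Y$ one verifies that left multiplication by an upper unitriangular matrix and right multiplication by a lower unitriangular matrix leave both $A-BD^{-1}C$ and $D$ unchanged, so that
$$\Ber\!\left(\begin{pmatrix} I & \beta \\ 0 & I\end{pmatrix} Y\right) = \Ber(Y), \qquad \Ber\!\left(Y\begin{pmatrix} I & 0 \\ \gamma & I\end{pmatrix}\right) = \Ber(Y),$$
and that a block diagonal factor pulls out on either side, $\Ber(\Delta Y)=\Ber(\Delta)\Ber(Y)$ and $\Ber(Y\Delta)=\Ber(Y)\Ber(\Delta)$, using $\det(\alpha M)=\det\alpha\,\det M$ over $\calA_{even}$. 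Now write $X=T_+^X\Delta^X T_-^X$ and $Y=T_+^Y\Delta^Y T_-^Y$ and apply these rules in turn to $XY$: strip $T_+^X$ on the left, strip $T_-^Y$ on the right, then peel off $\Delta^X$ on the left and $\Delta^Y$ on the right. Since all Berezinians lie in the commutative algebra $\calA_{even}$, this collapses everything to
$$\Ber(XY) = \Ber(X)\,\Ber(Y)\,\Ber\!\left(T_-^X T_+^Y\right),$$
where I also used $\Ber(X)=\Ber(\Delta^X)$ and $\Ber(Y)=\Ber(\Delta^Y)$. Thus the theorem follows once $\Ber(T_-T_+)=1$ for every $T_-=\begin{pmatrix} I & 0 \\ \gamma & I\end{pmatrix}$ and $T_+=\begin{pmatrix} I & \beta \\ 0 & I\end{pmatrix}$.

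This last point is the crux, and the only genuinely graded input. Computing $T_-T_+=\begin{pmatrix} I & \beta \\ \gamma & I+\gamma\beta\end{pmatrix}$ and using $(I_q+\gamma\beta)^{-1}\gamma=\gamma(I_p+\beta\gamma)^{-1}$, the Schur complement simplifies to $I_p-\beta(I_q+\gamma\beta)^{-1}\gamma=(I_p+\beta\gamma)^{-1}$, so that
$$\Ber(T_-T_+) = \frac{\det\big((I_p+\beta\gamma)^{-1}\big)}{\det(I_q+\gamma\beta)} = \frac{1}{\det(I_p+\beta\gamma)\,\det(I_q+\gamma\beta)}.$$
Everything therefore hinges on the ``super-Sylvester'' identity $\det(I_p+\beta\gamma)\,\det(I_q+\gamma\beta)=1$ for matrices $\beta,\gamma$ with \emph{odd} entries. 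The main obstacle is precisely the sign here: the naive commutative Sylvester identity would give $\det(I_p+\beta\gamma)=\det(I_q+\gamma\beta)$ and hence the wrong value $\det(I_q+\gamma\beta)^2$, because forming the mixed products $\beta_{ij}\gamma_{jk}$ secretly commutes odd entries past one another. I would instead argue in the universal model, taking the entries of $\beta,\gamma$ to be free odd generators of a Grassmann algebra, so that $\beta\gamma$ and $\gamma\beta$ are nilpotent and $\det(I+\cdot)=\exp\tr\log(I+\cdot)$ is a finite expression. The key computation is that every entry of $(\gamma\beta)^{n-1}\gamma$ is odd, so commuting it past an entry of $\beta$ produces a minus sign, giving $\tr\big((\beta\gamma)^n\big)=-\tr\big((\gamma\beta)^n\big)$ for all $n\geq1$; summing the logarithm series yields $\log\det(I_p+\beta\gamma)=-\log\det(I_q+\gamma\beta)$, i.e. the product is $1$. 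As the identity is polynomial in the entries, it then descends from the universal case to arbitrary $\calA$ by specialization, completing the proof.
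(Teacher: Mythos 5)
The paper states this theorem without any proof, so there is nothing to compare against line by line; judged on its own, your proof is correct and complete. It is the standard Gauss-decomposition argument (as in Berezin--Leites or Deligne--Morgan): the factorization $X=T_+^X\Delta^X T_-^X$, the four invariance rules, and the reduction to $\Ber(T_-T_+)=1$ are all verified correctly, and you rightly identify the one genuinely super ingredient, the identity $\det(I_p+\beta\gamma)\det(I_q+\gamma\beta)=1$ for odd $\beta,\gamma$, including the sign $\tr\bigl((\beta\gamma)^n\bigr)=-\tr\bigl((\gamma\beta)^n\bigr)$ that distinguishes it from the naive commutative Sylvester identity. The passage to the universal Grassmann algebra with free odd generators, where nilpotency legitimizes $\det=\exp\tr\log$, followed by specialization to arbitrary $\calA$, is a clean way to make that step rigorous.
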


Consider now a supermanifold $\calM$ with local coordinates $(x^{i},\theta^\mu)$ such that $1\leq i\leq n$ and $1\leq\mu\leq m$. For the odd coordinates $\theta$, we integrate the Berezinian $\dd \theta^\mu$ according to the following rules:

\begin{itemize}
\item{$\int \dd\theta^\mu=0$,}
\item{$\int\theta^\mu\dd\theta^\nu=\delta_{\mu\nu}$,}
\item{Fubini's theorem holds.}
\end{itemize}

Moreover, consider a function $f\in C^\infty(\calM)$ given locally as 
\begin{equation} 
f=\sum_{1\leq\mu_1,...,\mu_{m}\leq m}f_0+f_{\mu_1}\theta^{\mu_1}+f_{\mu_1\mu_2}\theta^{\mu_1}\theta^{\mu_2}+\dotsm + f_{\mu_1,...,\mu_m}\theta^{\mu_1}\dotsm\theta^{\mu_{m}},
\end{equation}
where $f_0$ and each $f_{\mu_1,...,\mu_k}$ are elements in $C^\infty(U)$ for $U$ an open subset of $\R^n$ (for all $1\leq k\leq m$). Then, according to the rules above, we get 
$$\int f\dd\theta^1\dotsm \dd\theta^n=f_{\mu_1,...,\mu_m}=:f_{top}.$$
Thus, for a supermanifold of the form $\Pi V$, integration in general will give a map  
\begin{equation}
\label{int_1}
\int_{\Pi V}\colon\enspace\bigwedge V^*\rightarrow\R,
\end{equation}
such that for any $g\in \bigwedge^{<top}V^*$ we get $\int_{\Pi V}g=0$, and hence the map \eqref{int_1} is given by 
\begin{equation}
\int_{\Pi V}=\int_{\Pi V}\!\Bigg\vert_{\bigwedge^{top}V^*}\colon\enspace\bigwedge^{top}V^*\rightarrow \R,
\end{equation}
for each choice of frame of $V$. If we take a coordinate patch $U\times \Pi V$ of $\calM$, we get 
$$\int_{U\times \Pi V}f=\int_U\int_{\Pi V}f=\int_U f_{top}\dd\mu.$$
Consider now a linear map $A\colon V\to W$ between two vector spaces $V$ and $W$. We can look at the map $\wedge A^*\colon \bigwedge W^*\to \bigwedge V^*$, which can be regarded as a map $\Pi V\to \Pi W$. Then, for any $f\in C^\infty(\Pi W)=\bigwedge W^*$, we have 
$$\int_{\Pi V}\wedge A^* f=\det(A)\int_{\Pi W}f.$$
Moreover, consider a linear isomorphism $$\varphi:=D\otimes\wedge A^*\colon U\times \Pi V\to\widetilde{U}\times \Pi W,$$ corresponding to linear maps $D\colon U\to \widetilde{U}$ and $A\colon V\to W$. Then 
$$\frac{ \det(D)}{\det(A)}\int_{U\times \Pi V}\varphi^*f=\Ber\begin{pmatrix}D&0\\ 0&A\end{pmatrix}\int_{U\times \Pi V}\varphi^*f=\int_{\widetilde{U}\times \Pi W}f.$$

The corresponding measure is given by 
$$\frac{\det(D)}{\det(A)}=\dd x^{1}\dotsm \dd x^n\dd\theta^1\dotsm\dd\theta^m,$$

We can apply this construction also to \emph{Gaussian integrals} and observe 
\[
\int_{\mathbb{C}} \ee^{\frac{\I}{2}z^tMz}\dd z=\frac{\text{const.}}{\Ber(M)}, 
\]
where $M$ is the matrix for some nondegenerate pairing. Using the diffeomorphism \eqref{diffeo}, we can obtain. 
$$\int_{U\times \Pi V}\varphi^*f \Ber(\varphi\overleftarrow{\dd})=\int_{\widetilde{U}\times \Pi \widetilde{V}}f.$$

\subsection{Integration on supermanifolds}
To perform integration on a manifold, we need the notion of a \emph{density}, which is fairly standard for ordinary manifolds. A \emph{density} for our purposes is a section of the \emph{Berezinian bundle} tensor the orientation bundle of the underlying manifold $\calM$, which locally means that they are functions transforming like $\Ber(\varphi\overleftarrow{\dd})$. Everything else is constructed in the same way as for ordinary manifolds.

\section{The BV formalism}

The \emph{Batalin--Vilkovisky (BV) formalism} is based on the idea of having an \emph{odd} symplectic form. Let $\calM$ be a supermanifold with local coordinates $(q^{i},p_i)$ for $1\leq i\leq n$. Then we have a symplectic form $\omega=\sum_{1\leq i\leq n}\dd q^{i}\dd p_{i}$, where the parity of the $p_i$ is the opposite of the parity of the $q^{i}$. Now we can define the \emph{BV Laplacian}, which is given by 
\begin{equation}
\label{BVLaplacian}
\Delta=\sum_{1\leq i\leq n}(-1)^{\vert q^i\vert}\frac{\partial^2}{\partial q^{i}\partial p_i}.
\end{equation}
One can then show that $\Delta f=-\frac{1}{2}\text{div} X_f$, where $\text{div} X:=\sum_{1\leq i\leq n} \pm\partial_i X^{i}$, denotes the divergence of the vector field $X$ with respect to the standard Berezinian associated to this choice of coordinates. We have the following properties:
\begin{itemize}
\item{$\Delta^2=0$.
}
\item{$\Delta(fg)=\Delta fg\pm f\Delta g\pm (f,g)$,
where $(\enspace,\enspace)$ denotes an odd Poisson bracket coming from the symplectic form $\omega$.}
\end{itemize}

\begin{defn}[BV integration]
Suppose $f$ is a function of the $p$ and $q$ variables, and consider an odd function $\Psi$ depending only on the $q$-variables. Then we can write $p_i:=\frac{\partial\Psi}{\partial q^{i}}$. Let us choose a projectable\footnote{Geometrically we think of this assignment as the definition of a Lagrangian submanifold that is projectable onto the space of the $q$-variables.} Lagrangian submanifold $\calL_\Psi$ of $\Pi T^*\R^n$. Then we can define the \emph{BV integration} $\int_{\calL_{\Psi}}$ as a map by
\begin{equation}
\label{BVpushforward}
\int_{\calL_\Psi}f:=\int f\Big\vert_{p_i=\frac{\partial\Psi}{\partial q^{i}}}\dd^nq.
\end{equation}
\end{defn}

\begin{rem}
In the physics literature, the odd function $\Psi$ is called the \emph{gauge fixing fermion} (see Remark \ref{gauge_independence}). 
\end{rem}
The following theorem is one of the main results of Batalin and Vilkovisky, which gives a powerful tool to deal with supersymmetric gauge theories.
\begin{thm}[Batalin--Vilkovisky]
\label{BVthm}
The following hold:
\begin{enumerate}
\item{If $f=\Delta g$, then $\int_{\calL_\Psi}f=0$,\text{ for every $\Psi$ such that the integral is defined.}}
\item{If $\Delta f=0$, then $\frac{\dd}{\dd t}\int_{\calL_{\Psi_t}}f=0$, where $\{\Psi_t\}$ is a continuous family of gauge fixing fermions, such that the integral is defined for every $t$.}
\end{enumerate}
\end{thm}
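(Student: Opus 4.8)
The plan is to prove (1) by a Stokes-type argument in the $q$-variables and then to reduce (2) to (1) using the Leibniz property of $\Delta$ and the antibracket. The structural reason both statements hold is the opposite parity of $p_i$ and $q^i$, so the whole strategy is organized around exploiting that.

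For (1), I fix $\Psi$ and introduce the function $\phi_i(q):=(\partial_{p_i}g)\big|_{\calL_\Psi}$, the restriction to the Lagrangian $p_i=\partial\Psi/\partial q^{i}$ of the $p_i$-derivative of $g$. Differentiating along the Lagrangian by the chain rule and summing against the signs $(-1)^{\vert q^i\vert}$ appearing in \eqref{BVLaplacian} gives
\[
\sum_i(-1)^{\vert q^i\vert}\partial_{q^{i}}\phi_i
=(\Delta g)\big|_{\calL_\Psi}
+\sum_{i,j}(\pm)\,(\partial_{p_j}\partial_{p_i}g)\big|_{\calL_\Psi}\,\partial_{q^{i}}\partial_{q^{j}}\Psi .
\]
The key point is that the cross term vanishes identically: since $\vert p_i\vert=\vert q^i\vert+1$, the factor $\partial_{p_j}\partial_{p_i}g$ and the Hessian $\partial_{q^{i}}\partial_{q^{j}}\Psi$ carry \emph{opposite} graded symmetry in the index pair $(i,j)$, so their contraction is zero (in the simplest case of all $q^i$ even this is just ``antisymmetric contracted with symmetric''). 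Hence $(\Delta g)\big|_{\calL_\Psi}=\sum_i(\pm)\partial_{q^{i}}\phi_i$ is a total $q$-divergence, and integrating over $\R^n$ in \eqref{BVpushforward} yields $\int_{\calL_\Psi}\Delta g=0$, provided the boundary contributions drop---which is exactly what the hypothesis ``such that the integral is defined'' is meant to guarantee (compact support or rapid decay).

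For (2), I set $\dot\Psi_t:=\tfrac{\dd}{\dd t}\Psi_t$, an odd function of $q$ only, and differentiate under the integral sign in \eqref{BVpushforward}. Up to the Koszul signs this gives
\[
\frac{\dd}{\dd t}\int_{\calL_{\Psi_t}}f
=\int\sum_i(\partial_{p_i}f)\big|_{\calL_{\Psi_t}}\,\partial_{q^{i}}\dot\Psi_t\,\dd^n q
=\int_{\calL_{\Psi_t}}\BV{f}{\dot\Psi_t},
\]
the last equality being just the definition of the antibracket once one uses that $\dot\Psi_t$ has vanishing $p$-derivatives. Now invoke the second Leibniz property of $\Delta$, $\BV{f}{\dot\Psi_t}=\pm\bigl(\Delta(f\dot\Psi_t)-(\Delta f)\dot\Psi_t\mp f\,\Delta\dot\Psi_t\bigr)$. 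Since $\dot\Psi_t$ is a function of $q$ alone we have $\Delta\dot\Psi_t=0$, and by hypothesis $\Delta f=0$, so $\BV{f}{\dot\Psi_t}=\pm\,\Delta(f\dot\Psi_t)$ is $\Delta$-exact. Applying part (1) with $g=f\dot\Psi_t$ gives $\int_{\calL_{\Psi_t}}\BV{f}{\dot\Psi_t}=0$, hence $\tfrac{\dd}{\dd t}\int_{\calL_{\Psi_t}}f=0$.

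The routine-but-delicate part is the sign bookkeeping: every display above conceals Koszul factors $(-1)^{\vert q^i\vert}$, $(-1)^{\vert f\vert}$, and the like, and the one identity that truly must be checked with care is the vanishing of the cross term in (1)---that the relation $\vert p_i\vert=\vert q^i\vert+1$ forces $\partial_{p_j}\partial_{p_i}g$ to have the opposite graded symmetry from the Hessian of $\Psi$. This is the conceptual heart of the theorem and precisely the reason the construction demands an \emph{odd} symplectic form; the only analytic input is Stokes' theorem for the $q$-integral, which is what the well-definedness proviso supplies.
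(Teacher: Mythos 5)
Your proof is correct and takes essentially the same route as the paper's: for (1) you show $(\Delta g)\big\vert_{\calL_\Psi}$ is a total $q$-divergence because the cross term contracting $\partial_{p_j}\partial_{p_i}g$ with the Hessian of $\Psi$ vanishes by opposite graded symmetry (the paper's underbraced term), and for (2) you differentiate under the integral, recognize the result as $\pm\Delta(f\dot\Psi_t)$ via the Leibniz property of $\Delta$ together with $\Delta f=0$ and $\Delta\dot\Psi_t=0$, and invoke (1). If anything, your explicit chain-rule and symmetry bookkeeping spells out steps the paper leaves implicit.
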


\begin{proof}
We denote $\partial^i:=\frac{\partial}{\partial q^{i}}$ and $\partial_i:=\frac{\partial}{\partial p_i}$.
For $(1)$ compute 
$$\sum_{1\leq i,j\leq n}(-1)^{\vert q^i\vert}
\partial_i(\partial^jg)\Big\vert_{p=\dd\Psi}=\sum_{1\leq i,j\leq n}\left[(-1)^{\vert q^i\vert}\partial_i\partial^j g\right]\Big\vert_{p=\dd\Psi}+\underbrace{\sum_{1\leq i,j\leq n}\left[(-1)^{\vert q^{i}\vert}\partial_i\partial_j\Psi\partial^j\partial^{i}g\right]\Big\vert_{p=\dd\Psi}}_{=0}.$$
Hence 
$$\Delta g\Big\vert_{p=\dd\Psi}=\sum_{1\leq i,j\leq n}(-1)^{\vert q^{i}\vert}\partial_i(\partial^j g)\Big\vert_{p=\dd\Psi}.$$
Taking the integral, we get
$$\int \Delta g\Big\vert_{p=\dd\Psi}=\sum_{1\leq i,j\leq n}(-1)^{\vert q^{i}\vert}\int\partial_i(\partial^jg)\Big\vert_{p=\dd\Psi}\dd q^{i}=0.$$ 
For $(2)$ compute
$$\frac{\dd}{\dd t}\int_{\mathcal{L}_{\Psi_t}}f=\int (\partial_i\dot{\Psi}_t\partial^{i}f)\Big\vert_{p=\dd\Psi}\dd q^{1}\dotsm \dd q^n=\int \pm \Delta (\dot{\Psi}_tf)\Big\vert_{p=\dd\Psi}.$$
By the properties of the BV Laplacian together with the asumption we get $\Delta f=0$ and $\Delta\dot{\Psi}_t=0$, and thus by $(1)$ we get
$$\int \pm \Delta(\dot{\Psi}_tf)=0.$$
\end{proof}

\begin{rem}[Gauge independence]
\label{gauge_independence}
Consider a continuous family $\{\Psi_t\}$ of gauge fixing fermions. Moreover, let $\calL_{\Psi_0}=\mathcal{L}_0=\R^{n\vert m}\subset \Pi T^*\R^{n\vert m}$ and assume that $\int_{\calL_{\Psi_0}}f$ does not converge. Then if $\Delta f=0$ and if $\int_{\calL_{\Psi_t}}f$ converges for all $t\not=0$ in a neighborhood of $0$, then we can define the integral as 
\[
\int_{\calL_{\Psi_0}}f:=\int_{\calL_{\Psi_t}}f
\]
for some $t$ in this pointed neighborhood (by Theorem \ref{BVthm} it does not matter which one).
Since the choice of a Lagrangian submanifold $\calL_\Psi$ is equivalent to \emph{choosing a gauge}, the above statement tells us exactly that it is independent of gauge fixing.
\end{rem}
In application to quantum field theory we are interested in the function $f=\ee^{\frac{\I}{\hbar}\calS}$ for some even function $\calS$. Then we can observe 
\begin{equation}
\label{BV_acting}
\Delta f=\frac{\I}{\hbar}\Delta \calS \ee^{\frac{\I}{\hbar}\calS}+\left(\frac{\I}{\hbar}\right)^2\frac{1}{2}(\calS,\calS)\ee^{\frac{\I}{\hbar}\calS},
\end{equation}
and thus $\Delta f=0$ if and only if 
\begin{equation}
\label{QME}
\frac{1}{2}(\calS,\calS)-\I\hbar\Delta \calS=0.
\end{equation}
Equation \eqref{QME} is called the \emph{Quantum Master Equation (QME)}. The lowest order term of the QME in the expansion of $\calS_\hbar=\calS_0+\hbar \calS_1+\hbar^2\calS_2+\dotsm $ is then given by $(\calS_0,\calS_0)=0$, which is called the \emph{Classical Master Equation (CME)}. Moreover, we want that the lowest order term of the function $\calS_\hbar$ is given by the original function, i.e. $\calS_0=\calS$. 

\begin{rem}
Note that the CME makes sense also in the infinite dimensional setting of field theory, where $\mathcal{S}$ is a local functional, which roughly means that we want to be able to express it as an integral of a differential form, and we understand $\omega$ as a local symplectic structure.
\end{rem}

\subsection{The BV pushforward}
Consider a split of the coordinates as $(p,q)=(p',q',p'',q'')$ and the corresponding symplectic forms $\omega'=\sum\dd q'\dd p'$ and $\omega''=\sum \dd q''\dd p''$, and the Laplacian $\Delta=\Delta'+\Delta''$. Moreover, let $\Psi$ be a gauge fixing fermion as an odd function of the $q''$. Then, considering a Lagrangian submanifold $\calL_{\Psi}$ of the double prime factor, we get 
\begin{equation}
\label{fiber_BV}
\int_{\calL_{\Psi}}f:=\left(\int f\Big\vert_{p''=\frac{\partial\Psi}{\partial q''}}\dd^\bullet q''\right)(p',q').
\end{equation}

\begin{prop}
\label{fiber_int}
Assuming that the integrals are defined as in Theorem \ref{BVthm}, the following hold:
\begin{enumerate}
\item{$\Delta'\int_{\calL_\Psi}f=\int_{\calL_\Psi}\Delta f$}
\item{If $\Delta f=0$, then $\frac{\dd}{\dd t}\int_{\calL_{\Psi_t}}f=\Delta'(\dotsm)$, where $\{\Psi_t\}$ is a continuous family of gauge fixing fermions.
}
\end{enumerate}
\end{prop}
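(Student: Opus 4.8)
The plan is to reduce both assertions to Theorem~\ref{BVthm} applied \emph{fiberwise} in the double-prime factor, exploiting the splitting $\Delta=\Delta'+\Delta''$ together with the observation that $\Delta'$ differentiates only the spectator variables $(q',p')$, which are neither constrained by $p''=\frac{\partial\Psi}{\partial q''}$ nor integrated over in \eqref{fiber_BV}. Consequently $\Delta'$ commutes with the fiber integration, while $\Delta''$-exact integrands are annihilated by it.

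For (1), I would split $\int_{\calL_\Psi}\Delta f=\int_{\calL_\Psi}\Delta' f+\int_{\calL_\Psi}\Delta'' f$. Reading \eqref{fiber_BV} as an honest BV integral in the double-prime variables with $(q',p')$ held fixed as parameters, the integrand $\Delta'' f$ is $\Delta''$-exact, so part (1) of Theorem~\ref{BVthm} gives $\int_{\calL_\Psi}\Delta'' f=0$. For the first term, differentiating under the integral sign (legitimate under the standing convergence hypotheses) yields $\int_{\calL_\Psi}\Delta' f=\Delta'\int_{\calL_\Psi}f$, since $\Delta'$ acts only on the parameters and passes through both the restriction and the $q''$-integration. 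Adding the two gives exactly $\int_{\calL_\Psi}\Delta f=\Delta'\int_{\calL_\Psi}f$.

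For (2), I would rerun the $t$-derivative computation from the proof of Theorem~\ref{BVthm}, but now in the double-prime factor. The chain rule applied to $f\big\vert_{p''=\partial\Psi_t/\partial q''}$ produces $\sum_i\frac{\partial\dot\Psi_t}{\partial {q''}^i}\,\frac{\partial f}{\partial p''_i}$ restricted to the Lagrangian, which by the graded Leibniz rule for $\Delta''$ equals $\pm\Delta''(\dot\Psi_t f)\pm\dot\Psi_t\Delta'' f$ (here $\Delta''\dot\Psi_t=0$, since $\dot\Psi_t$ depends only on $q''$). The essential new feature compared with Theorem~\ref{BVthm} is that $\Delta f=0$ now only gives $\Delta'' f=-\Delta' f$, so the second term does \emph{not} drop out. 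Upon fiber integration the $\Delta''$-exact piece dies by part (1) of Theorem~\ref{BVthm}, leaving $\pm\int_{\calL_{\Psi_t}}\dot\Psi_t\Delta' f$. Because $\dot\Psi_t$ involves only $q''$, one has $\Delta'\dot\Psi_t=0$ and the single-prime bracket $(\dot\Psi_t,f)'$ vanishes, hence $\dot\Psi_t\Delta' f=\pm\Delta'(\dot\Psi_t f)$; pulling $\Delta'$ out of the fiber integral as in (1) yields $\frac{\dd}{\dd t}\int_{\calL_{\Psi_t}}f=\Delta'\!\big(\pm\int_{\calL_{\Psi_t}}\dot\Psi_t f\big)$, the desired $\Delta'$-exact expression.

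The main obstacle I anticipate is bookkeeping rather than conceptual: tracking all the Koszul signs through the chain-rule differentiation and the Leibniz expansions of $\Delta'(\dot\Psi_t f)$ and $\Delta''(\dot\Psi_t f)$, and rigorously justifying the interchange of the differential operator $\Delta'$ with the fiber integral under the standing convergence hypotheses. Once the splitting $\Delta=\Delta'+\Delta''$ and the fiberwise reading of Theorem~\ref{BVthm} are in place, both steps are routine.
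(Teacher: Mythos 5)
Your proposal is correct and follows exactly the route the paper intends: Proposition \ref{fiber_int} is stated there without proof, being the evident fiberwise adaptation of the proof of Theorem \ref{BVthm} under the splitting $\Delta=\Delta'+\Delta''$, which is precisely what you carry out (killing $\Delta''$-exact integrands fiberwise, commuting $\Delta'$ with the restriction and the $q''$-integration, and in (2) retaining the term $\pm\dot\Psi_t\Delta' f=\pm\Delta'(\dot\Psi_t f)$ that no longer vanishes). The residual $\pm$ sign bookkeeping you flag is at the same level of precision as the paper's own proof of Theorem \ref{BVthm}, so nothing further is needed.
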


\begin{rem} 
\label{cohom_class}
From Proposition \ref{fiber_int} we can obtain that the $\Delta'$-cohomology of $\int_{\calL_{\Psi}}f$ is well-defined, i.e. for an $f$ with $\Delta f=0$, we get 
$$\Delta'\int_{\calL_\Psi}f=0.$$
In other words, the BV pushforward is a chain map.
\end{rem}

\subsection{Global version}
We want to globalize the picture above on any odd symplectic manifold $\calM$. One can show that for any such $\calM$, there is a supermanifold $\calN$, such that $\calM\cong \Pi T^*\calN$. Then we get $C^\infty(\calM)\cong C^\infty(\Pi T^*\calN)=\Gamma\left(\bigwedge T\calN\right)$. The \emph{Berezinian bundle} on $\calM$ is given by 
$$\Ber(\calM)\cong\bigwedge^{top}T^*\calN\otimes \bigwedge^{top}T^*\calN=\left(\bigwedge^{top}T^*\calN\right)^{\otimes 2}.$$
We define the half-densities on $\calM$ to be given as 
$$\Dens^{\frac{1}{2}}(\calM):=\Gamma(\Ber(\calM)^{\frac{1}{2}})\cong\Gamma\left(\bigwedge T\calN\right)\otimes \Gamma\left(\bigwedge^{top}T^*\calN\right).$$
Moreover, one can show that there exists a canonical operator $\Delta_\calM^{\frac{1}{2}}$ on $\Dens^{\frac{1}{2}}(\calM)$ such that $(\Delta_\calM^{\frac{1}{2}})^2=0$. Indeed, if we consider the map 
\begin{align*}
\Phi\colon\Gamma\left(\bigwedge T\calN\right)\otimes_{C^\infty(\calN)}\Gamma\left(\bigwedge^{top} T^*\calN\right)&\to\Omega(\calN)\\
X\otimes v&\mapsto \iota_Xv, 
\end{align*}
we get such a Laplacian by setting $\Delta^{\frac{1}{2}}_\calM=\Phi^{-1}\circ \dd\circ \Phi$, where $\dd $ is the de Rahm differential on $\Omega(\calN)$. Moreover, consider a Lagrangian submanifold $\calL$ of $\calM$ and a half-density $\rho\in \Dens^{\frac{1}{2}}(\calL)$. Then $i^*\rho\in \Ber(\calL)$, where $i$ is the inclusion $\calL\hookrightarrow\calM$. Now we can see that the conditions $\int_\calL \Delta^{\frac{1}{2}}_\calL\rho=0$ and $\Delta^{\frac{1}{2}}_\calL\rho=0$ imply $\frac{\dd}{\dd t}\int_{\calL_t}\rho=0$. Consider now a submanifold $\calC$ of $\calN$. Then the odd conormal bundle (see Definition \ref{conormal_bun1}) of $\calC$ as a submanifold $\Pi N^*\calC\subset \calM$ is Lagrangian, and in particular we get
$$\int_{\Pi N^*\calC}\rho=\int_\calC \Phi(\rho).$$
For $\mathcal{L}$ of this form, it follows from Stokes' theorem. For $\mathcal{L}$ not of this form, one has to show that $\mathcal{L}$ may be reduced to one of this form by a Hamiltonian transformation and then apply the local computation.

Fix some nonvanishing half-density $\rho\in \Dens^{\frac{1}{2}}(\calM)$ satisfying $\Delta^{\frac{1}{2}}_\calM\rho=0$, and consider a function $f\in C^\infty(\calM)$. Then we can define a Laplacian $\Delta_\rho$ by  $\Delta_\rho f:=\frac{1}{\rho}\Delta^{\frac{1}{2}}_\calM(f\rho)$, and automatically we get $\Delta_\rho^2=0$. For two functions $f,g\in C^\infty(\calM)$, we can easily observe the generalized Leibniz rule $$\Delta_\rho(fg)=\Delta_\rho f g\pm f\Delta_\rho g\pm (f,g).$$ 

\subsection{Classical data}
 We can now define the mathematical data for the Batalin--Vilkovisky formalism in the classical setting. 
 \begin{defn}[BV manifold]
 \label{BV_manifold}
 A \emph{BV manifold} is a triple $(\calF,\omega,\calS)$ where $\calF$ is a supermanifold, $\omega$ an odd symplectic form and
$\calS$ an even function on $\calF$ satisfying the \emph{Classical Master Equation (CME)}
\[
(\calS,\calS)=0
\]
where $(\enspace ,\enspace)$ denotes the odd Poisson bracket (a.k.a.\ Gerstenhaber bracket) associated to $\omega$.
Usually, $\calS$ is called the \emph{BV action} and $(\enspace ,\enspace )$ is called the \emph{BV bracket}.
\end{defn}
Equivalently, we may introduce the Hamiltonian vector field $Q$ of $\calS$, defined by the equation
\[
\iota_Q\omega=-\dd \calS,
\]
which is odd, and require it to satisfy
\[
[Q,Q]=0,
\]
where $[\enspace,\enspace ]$ denotes the Lie bracket of vector fields.\footnote{The condition on $Q$ may seem weaker at first sight as
it seems just to induce that $(\calS,S)$ is a (locally) constant function. However, this constant must be odd, and the only odd constant is $0$.}$^,$\footnote{In applications to field theory the supermanifold $\calF$ is usually infinite-dimensional (typically modeled on a mapping space with Banach or Fr\'echet structure). The closed $2$\ndash form $\omega$ is then often required to be nondegenerate just in the weak sense, i.e., it is assumed that the induced map from vector fields to $1$\ndash forms is injective. This ensures the uniqueness of Hamiltonian vector fields but not their existence. The condition above then says that $\calS$ admits a Hamiltonian vector field.}
We can then equivalently define a BV manifold to be a quadruple $(\calF,\omega,\calS,Q)$ where $\calF$ is a supermanifold, $\omega$ an odd symplectic form,
$\calS$ an even function and $Q$ an odd vector field on $\calF$ satisfying the equations
\[
\iota_Q\omega=\dd \calS,
\qquad
[Q,Q]=0.
\]
\begin{ex}[BRST formalism]
A typical situation (especially in field theory) is when one starts with a manifold
$F$ (the space of classical fields), a Lie algebra $\frg$ acting on it (the symmetries) and a $\frg$\ndash invariant function
$S$ (the classical action). Assume moreover that $\frg$ has a nondegenerate invariant pairing.
The Chevalley--Eilenberg differential for the $\frg$\ndash module $C^\infty(F)$ may then be
reinterpreted as an odd vector field $\delta$ (the BRST operator)
on $F_\frg:=F\times\frg[1]$ which satisfies  $[\delta,\delta]=0$.\footnote{The space
of CE cochains $\bigwedge\frg\to C^\infty(F)$ may be reintepreted as the space of functions
on $F_\frg$. If $(e_i)$ is a basis for $\frg$, 
$f_{ij}^k$ the corresponding structure constants ($[e_i,e_j]=\sum_k f_{ij}^ke_k$),
and
$X_i$ the vector field on $F$ corresponding to $e_i$,
then the vector field on $F_\frg$ corresponding to the CE differential is
\[
\delta = \sum_{ijk} \frac12 c^ic^j f_{ij}^k \frac\de{\de c^k} + 
\sum_i c^iX_i,
\]
where the $c^i$s are the (odd) linear coordinates on $\Pi\frg$ corresponding to the chosen basis.}
The BV setting is then obtained by considering $\calF:=T^*[1]F_\frg$ with its canonical symplectic 
structure.\footnote{If we introduce local coordinates $x^\alpha$ on $F$ in addition to the coordinates
$c^i$ on $\frg$ introduced in the previous footnote, then on $\calF$ we have local coordinates
$(x^\alpha,c^i,x^+_\alpha,c^+_i)$, where $c^+_i$ is the (even) momentum of $c^i$ and $x^+_\alpha$ is the (odd) momentum of 
$x^\alpha$. Locally the canonical symplectic structure will read
$
\omega = \sum_\alpha \dd x^+_\alpha\,\dd x^\alpha +\sum_i \dd c^+_i\,\dd c^i.$}
We denote by $\Hat\delta$
the fiber linear function\footnote{With the notations of the previous footnotes,
we have
\[
\Hat\delta = \sum_{ijk} \frac12 c^ic^j f_{ij}^k  c^+_k+ 
\sum_i c^iX_i^\alpha x^+_\alpha,
\]
where we have written locally $X_i=\sum X_i^\alpha\frac\de{\de x^\alpha}$.} 
on $\calM$ canonically associated to $\delta$ and set
$\calS:=p^*S+\Hat\delta$, where $p$ is the composition of projections $\Pi T^*F_\frg\to F_\frg$ and
$F_\frg\to F$. 
It then follows that $(\calF,\omega,\calS)$, with $\omega$ the canonical symplectic form, is
a BV manifold. 
\end{ex}

%

\begin{ex}[Yang--Mills]
\label{YM}
Let $G$ be a finite-dimensional compact\footnote{Compactness of $G$ is rather important in physics, which also yields the Killing form as an invariant nondegenerate bilinear form on $\mathfrak{g}$} Lie group with Lie algebra $\mathfrak{g}$. The fields in the first order formulation of Yang--Mills theory are connections $A$ on a principal $G$-bundle $P$ over a $n$-manifold $N$, and $(n-2)$-forms $B$ with coefficients in the associated adjoint bundle. For simplicity, we assume that the principal bundle is trivial and consider connections as $1$-forms with coefficients in $\mathfrak{g}$ and $B$ fields as $(n-2)$-forms with coefficients in $\mathfrak{g}$. The \emph{ghost fields} $c$ are $0$-forms with coefficients in $\mathfrak{g}$. The BV extension includes also the \emph{antifields} $A^+,B^+$, and $c^+$. Note that for a field $\phi$ and its antifield $\phi^+$, we have the following relation of form degree and ghost number:
\begin{align}
\begin{split}
\label{field_antifield}
\deg(\phi^+)&=n-\deg(\phi),\\
\gh(\phi^+)+\gh(\phi)&=-1.
\end{split}
\end{align}
The total space of BV extended Yang--Mills theory\footnote{Here, as in the case of classical electrodynamics we only discuss the minimal BV extension. When $n=4$, the BV extension of Yang--Mills theory can also be presented in a different way using the decomposition of $2$-forms into self-dual and anti-self-dual parts.} is 
\begin{multline}
\calF_N=\underbrace{\mathfrak{g}\otimes\Omega^1(N)}_{\ni A}\oplus \underbrace{\mathfrak{g}\otimes\Omega^{n-2}(N)}_{\ni B}\oplus \underbrace{\mathfrak{g}\otimes\Omega^0(N)[1]}_{\ni c}\oplus \\ \oplus\underbrace{\mathfrak{g}\otimes\Omega^{n-1}(N)[-1]}_{\ni A^+}\oplus \underbrace{\mathfrak{g}\otimes \Omega^{2}(N)[1]}_{\ni B^+}\oplus\underbrace{\mathfrak{g}\otimes \Omega^n(N)[-2]}_{\ni c^+}.
\end{multline}
The symplectic form on $\calF_N$ is given by 
\begin{equation}
\label{symplect_YM}
\omega_N=\int_N\textnormal{tr}(\delta A\delta A^++\delta B\delta B^++\delta c\delta c^+),
\end{equation}
and the BV action is 
\begin{equation}
\calS_N(A,A^+,B,B^+,c,c^+)=\int_N\textnormal{tr}\left( BF_A+\frac{1}{2}B*B+A^+\dd_Ac+B^+[B,c]+\frac{1}{2}c^+[c,c]\right),
\end{equation}
where $*$ denotes the Hodge star operator, and $F_A=\dd_A+\frac{1}{2}[A,A]$ is the curvature of $A$. The cohomological vector field is then
\begin{multline}
Q_N=\int_N\textnormal{tr}\Bigg(\dd_Ac\frac{\delta}{\delta A}+[B,c]\frac{\delta}{\delta B}+\frac{1}{2}[c,c]\frac{\delta}{\delta c}+(\dd_AB+[A^+,c])\frac{\delta}{\delta A^+}\\+(F_A+*B+[B^+,c])\frac{\delta}{\delta B^+}+(\dd_AA^++[B,B^+]+[c,c^+])\frac{\delta}{\delta c^+}\Bigg).
\end{multline}
One can then check that 
$$(\calF_N,\omega_N,\calS_N)$$
defines a BV manifold together with the cohomological vector field $Q_N$ as the Hamiltonian vector field $\calS_N$.
%
\end{ex}

\begin{ex}[Abelian Chern--Simons]
\label{CS}
The classical abelian Chern--Simons theory (see \cite{AS,AS2} for the abelian and non-abelian case) is given by the action functional 
\begin{equation}
\label{cl_action_CS}
S(A)=\frac{1}{2}\int_MA\dd A,
\end{equation}
where $A$ is a connection on a trivial principal $U(1)$-bundle $P$ seen as a $1$-form, and $M$ a $3$-manifold.

The superspace of fields on the $3$-dimensional space-time manifold $M$ is given by 
$$\calF_M=\Omega(M)[1].$$
The fields corresponding to forms of degree $0,1,2,3$ will be denoted by $c,A,A^+,c^+$ respectively. The ghost numbers are $\gh(c)=1$, $\gh(A)=0$, $\gh(A^+)=-1$, and $\gh(c^+)=-2$. We will write $\mathsf{A}=c+A+A^++c^+$ for the \emph{BV superfield}. The symplectic form on $\calF_M$ is then given by 
\begin{equation}
\label{symplectic_CS}
\omega_M=\frac{1}{2}\int_M\delta\mathsf{A}\delta\mathsf{A}=\int_M(\delta c\delta c^++\delta A\delta A^+).
\end{equation}
The BV action is given by 
\begin{equation}
\label{action_CS}
\calS_M=\frac{1}{2}\int_M\mathsf{A}\dd\mathsf{A}=\frac{1}{2}\int_M(A\dd A+A^+\dd c+c\dd A^+),
\end{equation}
and the corresponding cohomological vector field is given by 
\begin{equation}
\label{cohom_CS}
Q_M=\int_M\dd\mathsf{A}\frac{\delta}{\delta \mathsf{A}}=\int_M\left(\dd c\frac{\delta}{\delta A}+\dd A\frac{\delta}{\delta A^+}+\dd A^+\frac{\delta}{\delta c^+}\right)
\end{equation}

\end{ex}

\begin{ex}[Poisson Sigma Model]
\label{PSM}
The following Sigma Model plays an important role for deformation quantization as it was shown in \cite{CF1}. One can show that Kontsevich's star product (see \cite{K}) can be written as the perturbative expansion of a path integral given in terms of the Poisson Sigma Model (\cite{CF1}). The Poisson Sigma Model (PSM, \cite{I,SS1}) consists of the following data: 
\begin{itemize}
\item{A connected, compact, oriented $2$-manifold $\Sigma$ (possibly with boundary),} 
\item{A finite-dimensional Poisson Manifold $(M,\pi)$, where $\pi$ denotes the Poisson bivector field,
}
\item{A space of fields, given as vector bundle maps $F_\Sigma=\Map(T\Sigma,T^*M)$. 
The fields are then of the form $(X,\eta)\in F_\Sigma$, where $X\colon \Sigma\to M$ is a map, and $\eta\in\Gamma(\Sigma,T^*\Sigma\otimes X^*T^*M)$.}
\item{An action functional 
\begin{equation}
\label{PSM_action}
S(X,\eta)=\int_{\Sigma}\left(\langle \eta,\dd X\rangle+\frac{1}{2}\langle \pi(X),\eta\eta\rangle\right),
\end{equation}
where $\langle\enspace,\enspace\rangle$ represents the pairing of the tangent- and cotangent space of $M$.
}
\end{itemize}
Let us now consider the setting with \emph{superfields}. For this we let 
$$\mathcal{F}_\Sigma=\Map(T[1]\Sigma,T^*[1]M),$$
and thus consider the superfields $\mathsf{X}\colon T[1]\Sigma\to M $ and $\boldsymbol{\eta}\in\Gamma(\mathsf{X}^*T^*[1]M)$. We can write the superfields in terms of fields and antifields by $\mathsf{X}=X+\eta^++\beta^+$, and $\boldsymbol{\eta}=\beta+\eta+X^+$. In particular, we can decompose each coordinate of the superfields into
\begin{align}
\mathsf{X}^{i}&=X^{i}+\theta^{\mu}\eta_\mu^{+ i}-\frac{1}{2}\theta^{\mu}\theta^\nu\beta_{\mu\nu}^{+ i},\\
\boldsymbol{\eta}_i&=\beta_i+\theta^\mu\eta_{i,\mu}+\frac{1}{2}\theta^\mu\theta^\nu X^+_{i,\mu\nu},
\end{align}
where $\beta$ denotes the ghost field, and again $\phi^+$ denotes the antifield of a field $\phi$.
Note that $X^{i},\beta_i\in\Omega^0(\Sigma)$, $\eta_\mu^{+i},\eta_{i,\mu}\in\Omega^1(\Sigma)$, and $\beta^{+i}_{\mu\nu},X^{+}_{i,\mu\nu}\in\Omega^2(\Sigma)$. Moreover, we have $\gh(\beta)=1$, $\gh(\beta^+)=-2$, $\gh(X)=0$, $\gh(X^+)=-1$, $\gh(\eta)=0$, and $\gh(\eta^+)=-1$ (again, one can compute this by using \eqref{field_antifield}, note also that here $d=2$). 
One can check that the BV action is of the same form as \eqref{PSM_action}, i.e. we have
\begin{equation}
\label{PSM_BV}
\calS_{\Sigma}(\mathsf{X},\boldsymbol{\eta})=\int_\Sigma\left(\langle \boldsymbol{\eta}, D\mathsf{X}\rangle+\frac{1}{2}\langle \pi(\mathsf{X}),\boldsymbol{\eta}\boldsymbol{\eta}\rangle\right),
\end{equation}
where $D=\theta^\mu\frac{\partial}{\partial \theta^\mu}$ is the differential on $T[1]\Sigma$. The symplectic form is given by 
\begin{equation}
\label{symplectic_PSM}
\omega_\Sigma=\int_\Sigma \delta\mathsf{X}\delta\boldsymbol{\eta}
\end{equation}
where $\delta$ is the de Rham differential on the space of fields $\calF_\Sigma$. One can check that the cohomological vector field $Q_\Sigma$ is then given by
\begin{equation}
\label{cohom_VF_PSM}
Q_\Sigma=\int_\Sigma\left(\delta\mathsf{X}^{i}\frac{\delta}{\delta\mathsf{X}^{i}}+\delta\boldsymbol{\eta}_i\frac{\delta}{\delta\boldsymbol{\eta}_i}\right)=\int_\Sigma \left\{\left(\dd \mathsf{X}^{i}+\pi^{ij}(\mathsf{X})\boldsymbol{\eta}_j\right)\frac{\delta}{\delta \mathsf{X}^{i}}+\left(\dd\boldsymbol{\eta}_i+\frac{1}{2}\partial_i\pi^{jk}(\mathsf{X})\boldsymbol{\eta}_j\boldsymbol{\eta}_i\right)\frac{\delta}{\delta \boldsymbol{\eta}_i}\right\},
\end{equation}
where we use the \emph{Einstein summation convention}. One can then check that 
$$(\calF_\Sigma,\omega_\Sigma,\calS_\Sigma)$$
is a BV manifold together with the Hamiltonian vector field $Q_\Sigma$ of $\calS_\Sigma$. 
\end{ex}

\subsection{Relaxed BV manifolds}\label{s:rBV}
In applications (e.g., to field theories on manifolds with boundary) it is too much to require $\iota_Q\omega=\dd \calS$ as often $\calS$ does not even possess a Hamiltonian vector field. At this level we can even drop the condition that $\omega$ is nondegenerate (which is also useful for applications to discretized field theories with boundary). We then come to the following relaxed definition.
\begin{defn}[Relaxed BV manifold]
A \emph{relaxed BV manifold} is a quadruple $(\calF,\omega,\calS,Q)$ where $\calF$ is a supermanifold, $\omega$ a closed odd $2$\ndash form of degree $-1$,
$\calS$ an even function of degree $0$ and $Q$ an odd vector field on $\calF$ of degree $1$ satisfying the equation
\[
[Q,Q]=0.
\]
\end{defn}
It may seem that we have dropped to much, but the theory is still quite interesting. First, we introduce
the ``check'' even $1$\ndash form
\[
\Check\alpha :=\iota_Q\omega-\dd \calS,
\]
which checks the failure of the first equation for a BV manifold. Then we denote by $\Check\omega$ its differential,
\[
\Check\omega:=\dd\Check\alpha=-\LL_Q\omega.
\]
Here $\LL_Q$ denotes the Lie derivative w.r.t.\ $Q$. Note that, since $Q$ is odd, Cartan's formula reads
\[
\LL_Q=\iota_Q\,\dd-\dd\,\iota_Q.
\]
The equation $[Q,Q]=0$ then immediately implies
\[
\LL_Q\Check\omega = 0.
\]
For the following it is also useful to remark the identies
\[
\LL_Q\,\dd=-\dd\,\LL_Q,
\qquad
\LL_Q\,\iota_Q=\iota_Q\,\LL_Q.
\]

\section{The BFV construction and symplectic reduction}
 Next we want to develop this formalism for gauge fixing for the case of manifolds with boundary. This construction, called the \emph{BV-BFV formalism}, was introduced in \cite{CMR1} in the classical setting and in \cite{CMR2} in the quantum setting.

\subsection{Induced BFV manifolds}
For applications it is now useful to move to the reduction of $(\calF,\Check\omega)$. Notice that
$\Check\omega$ is closed. As a consequence its kernel, i.e., the vector fields $X$ such that $\iota_X\Check\omega=0$,
form a Lie algebra. Moreover, $[Q,X]$ is in the kernel for every $X$ in the kernel. We leave these two statements
as a simple exercise in Cartan calculus. If the span of the vector fields in the kernel has (locally) constant dimension,
we then have an involutive distribution,\footnote{Recall that a regular distribution on a manifold $M$ is
a subbundle $D$ of its vector bundle $TM$. A section of the distribution $D$ is then a vector field X on $M$ such
that $X(x)\in D_x$ for all $x\in M$. 
A regular distribution $D$ is called involutive if for every two sections $X$ and $Y$ of $D$
their Lie bracket $[X,Y]$ is also a section of $D$. A regular distribution is called integrable if 
there is a chart around each point in which it is given as the span of the first coordinate vectors.
}
which by Frobenius theorem is then integrable.\footnote{In the infinite-dimensional case we have to be more careful. The correct condition is that the span of the vector fields in the kernel defines a subbundle of the tangent bundle. Moreover, Frobenius theorem holds in the Banach case, but not necessarily in the Fr\'echet case.} 
If the leaf space, which we denote by $\calF^\de$, is smooth we then have a uniquely defined even
$2$\ndash form $\omega^\de$ and a uniquely defined odd vector field $Q^\de$ such that
\[
\Check\omega=\pi^*\omega^\de\quad\text{and}\quad
\dd\pi\, Q = Q^\de,
\]
where $\pi$ is the natural projection $\calF\to\calF^\de$. Moreover, $Q^\de$ automatically satisfies $[Q^\de,Q^\de]=0$ and $\LL_{Q^\de}\omega^\de = 0$.
In particular, $\omega^\de$ is symplectic and $Q^\de$ is a symplectic vector field.

If we make the assumption that $Q$ is not only symplectic but also Hamiltonian, i.e., that there
is an odd function $\calS^\de$ such that
\[
\iota_{Q^\de}\omega^\de=\dd \calS^\de,
\]
then the triple $(\calF^\de,\omega^\de,\calS^\de)$ is an example of a \emph{BFV manifold} as in the following definition:

\begin{defn} [BFV manifold]
\label{BFV_manifold}
A \emph{BFV manifold} is the same as a BV manifold (defintion \ref{BV_manifold}) with
shifted parities on the symplectic form and the function. 
\end{defn}

Notice that in principle $\calS^\de$ is defined up to a constant; however, since the only odd constant is zero, $S^\de$ is uniquely defined (if it exists). Also notice that, defining 
\[
\Check \calS:=\pi^*\calS^\de
\]
yields
\[
\iota_{\Check Q}\,\Check\omega = \Check \calS.
\]
Vice versa, if we assume the last equation, then we see that $X(\Check \calS)=0$ for every vector field $X$
in the kernel of $\Check\omega$. Hence, $\Check \calS$ is basic and there is a uniquely defined function $\calS^\de$,
which is the Hamiltonian for $Q^\de$.

Finally, note that in the case when $\iota_Q\omega=\dd \calS$, we have that $\Check\omega=0$, so
$\calF^\de$ is a point (or a collection of points if $\calF$ is not connected). 

\subsection{Ghost number}
\label{ghostnr}
As already mentioned, in some applications to field theories one has an additional $\bbZ$\ndash grading that helps bookbooking. It is called
ghost number, with the idea that physical fields have ghost number zero. We make a short digression on it (even though it is not needed for the following and is not always available in field theory).

The function $\calS$ is assigned ghost number zero (with the idea that it is a physical function). The vector field $Q$ is assigned ghost number $+1$. This has the consequence that $Q$ defines a differential on the algebra of functions
(and for this reason it is usually called a cohomological vector field). As a consequence $\omega$ must be assigned
ghost number $-1$. If a BV manifold is defined in the presence of a $\bbZ$\ndash grading, these assignements are always 
understood.\footnote{An example of this setting is the BRST formalism. In this case one assigns ghost number zero to the functions on $F$
and ghost number $+1$
to the linear coordinates on $\Pi\frg$, so that the BRST operator $\delta$ acquires ghost number $+1$. The odd cotangent bundle is then shifted by $-1$,
meaning that the momenta corresponding to coordinates on $F$ receive ghost number $-1$, whereas the momenta corresponding to the coordinates on $\frg$ receive
ghost number $-2$ ($=-1-1$), so that the symplectic form acquires ghost number $-1$. Observe that $\calS$ then automatically has ghost number $0$.}

It follows that $\Check\alpha$, $\Check\omega$ and $\omega^\de$ have ghost number zero (so $\omega^\de$ may be interpreted as a physical symplectic form); $Q^\de$ has ghost number $+1$ (so it is still cohomological). Finally,
the function $\calS^\de$ has ghost number $+1$. Note that, if a BFV manifold is defined in the presence of a $\bbZ$\ndash grading, these assignements are always understood.

Typically the ghost number is a local notion. This means that there is an even vector field $E$ of ghost number zero,
called the Euler vector field,
 such that, on homogenous differential forms or vector fields, we have
\[
\LL_E\, \xi = x\,\xi,
\]
where $x$ is the ghost number of $\xi$. A consequence is that $Q$ is always 
$\Check\omega$\ndash Hamiltonian (even if $\Check\omega$ is degenerate), with
Hamiltonian function
\[
\Check \calS = \iota_{E}\,\iota_{Q}\,\Check\omega,
\]
as a simple application of Cartan's calculus shows. 

As we have already remarked this implies that, if reduction is possible, also $Q^\de$ is Hamiltonian.
All this is actually compatible with the $\bbZ$\ndash grading. In fact, one can immediately see that
$[E,X]$ is in the kernel of $\Check\omega$\
for every $X$ in the kernel; hence, $E$ is projectable. One also easily realizes that its projection is the Euler
vector field on $\calF^\de$, which we will denote by $E^\de$ to avoid confusion.
Finally, we have that the Hamiltonian for $Q^\de$ is 
\[
\calS^\de = \iota_{E^\de}\,\iota_{Q^\de}\,\omega^\de.
\]

\begin{ex}
\label{exBFV_formalism}
Let $(M,\omega)$ be a symplectic manifold. Consider a map $\phi\in C^\infty(M)$, and the submanifold $C\subset M$ given by $C:=\phi^{-1}(0)$. In particular, the submanifold $C$ is coisotropic (see Remark \ref{iso_coiso}). Here $\phi$ is actually a moment map as in Definition \ref{momentmap}. Denote by $X_\phi$ the Hamiltonian vector field of $\phi$, and observe that
$$\iota_{X_\phi}\omega\Big\vert_{C}=\dd\phi\Big\vert_C=0.$$
Moreover, the kernel of $\omega$ is generated (as a $C^\infty(M)$-module) by the vector field $X_\phi$. Thus we can define the reduction $\underline{C}:=C/\langle X_\phi\rangle=C/\ker\omega$ and hence 
$$C^\infty(C)=C^\infty(M)/\langle\phi\rangle.$$
Then we get $C^\infty(\underline{C})=(C^\infty(C))^X$. If we denote $I:=\langle \phi\rangle\subset C^\infty(M)$, we can define $$N(I):=\{f\in C^\infty(M)\mid \{f,I\}\subset I\}\subset C^\infty(M),$$ 
where $\{\enspace,\enspace\}$ denotes the Poisson bracket on $C^\infty(M)$, 
which gives us then $$C^\infty(\underline{C})=N(I)/I.$$

\begin{rem}
Above we defined this algebra as first quotienting by the ideal and then taking the invariant part, now we take, equivalently, functions that are invariant modulo the ideal and then quotient by the ideal.
\end{rem}

To describe this in the BFV formalism, we add new variables $c$ and $b$ with $\gh(c)=1$ and $\gh(b)=-1$ respectively, and consider the supermanifold $\calM=M\times T^*[1]\R$ endowed with the symplectic form $\widetilde{\omega}=\omega+\dd b\dd c$ together with the action $$\calS=c\phi,$$ which clearly satisfies $\{\calS,\calS\}=0$. 

Moreover, we consider the Hamiltonian vector field $Q$ of $\calS$ given by $Q=\{\enspace, \calS\}$ and hence we have $Q(b)=\{b,c\phi\}=\phi$ and $Q(f)=c\{f,\phi\}$ for $f\in C^\infty(M)$. This shows that the $Q$-cohomology in degree zero is given by the quotient $N(I)/I$, i.e 
\begin{equation}
\label{cohom_Q}
H^0_Q(\calM)=N(I)/I.
\end{equation}
In fact, consider a function $F=f+bcg$ of ghost number zero, where $f,g\in C^\infty(M)$. Then $$Q(F)=c\{f,\phi\}+\phi cg$$ and if $Q(F)=0$, we get $\{f,\phi\}=-\phi g$ and thus $f\in N(I)$. Consider now a function $\Psi:=b h$ of ghost number minus one, for some $h\in C^\infty(M)$. Then $$Q(\Psi)=\underbrace{\phi h}_{\in I}-bc\{h,\phi\}.$$ 
One can generalize this situation by considering $\phi$ to be given as several maps $\phi_1,...,\phi_n\in C^\infty(M)$, such that $\phi=(\phi_1,...,\phi_n)\colon M\to \R^n$. Assume that the zero in $\R^n$ is a regular value. Then we get $\{\phi_i,\phi_j\}\Big\vert_C=0$, which is the coisotropic condition of $C$. Note that now we have $$\ker\omega\Big\vert_C=\langle X_1,...,X_n\rangle,$$ where $X_j$ denotes the Hamiltonian vector field of $\phi_j$ as before, and $$\underline{C}:=C/\langle X_1,...,X_n\rangle.$$ We can observe that $\underline{C}$ is now symplectic and moreover $C^\infty(\underline{C})=N(I)/I$. We can add again extra variables $c^{i}$ and $b_i$ with ghost numbers as before and consider the supermanifold $\calM=M\times T^*[1]\R^n$ endowed with the symplectic form $$\widetilde{\omega}=\omega+\sum_i\dd b_i\dd c^{i}.$$
For the next step one has to be careful, since $\calS=\sum_i c_i\phi^{i}$ in general will not satisfy $\{\calS,\calS\}=0$. 
However, there is a lemma, which tells us that there is always a correction term $O(b)$ such that 
\begin{equation}
\{\calS+O(b),\calS+O(b)\}=0.
\end{equation}
This shows the generalized version of \eqref{cohom_Q}. Moreover, there are functions $f_{ij}^k\in C^\infty(M)$, such that $$\{\phi_i,\phi_j\}=\sum_k f_{ij}^k\phi_k.$$ For the special case where the $f_{ij}^k$ are constant, we have $$\calS=\sum_i c_i\phi^{i}+\frac{1}{2}\sum_{ijk}f_{ij}^kb_kc^{i}c^{j}.$$ In this case, the $f_{ij}^k$ are the \emph{structure constants} of some Lie algebra $\mathfrak{g}$, which is the setting of BRST.
\end{ex}

\subsection{The relaxed master equation}
If $Q$ is Hamiltonian, w.r.t.\ $\Check\omega$ (which happens, e.g., in the $\bbZ$\ndash graded case), i.e., if
\[
\iota_{\Check Q}\,\Check\omega = \dd\Check \calS,
\]
we can express the failure of the CME in terms of the uniquely defined odd function $\Check \calS$.

The derivation is quite simple. If we apply $\LL_Q$ to the equation defining $\Check\alpha$ and use the rules mentioned
at the end of Section~\ref{s:rBV}, we get
\[
\LL_Q\Check\alpha = \iota_Q\LL_Q\omega + \dd\LL_Q\calS.
\]
We have already seen that $\LL_Q\omega=-\dd\Check\alpha$. Using the Cartan formula
$\LL_Q\Check\alpha=\iota_Q\dd\Check\alpha-\dd\iota_Q\Check\alpha$, we get
\[
\dd\LL_Q\calS =2\iota_Q\dd\Check\alpha-\dd\iota_Q\Check\alpha.
\]
However, from $\iota_Q\dd\Check\alpha=\dd\Check \calS$ we obtain
\[
\dd\LL_Q\calS =2\dd\Check \calS-\dd\iota_Q\Check\alpha.
\]
Again, using the fact that the only odd constant function is zero, we finally get
\[
\LL_Q\calS =2\Check \calS-\iota_Q\Check\alpha.
\]
Notice that when $Q$ is the Hamiltonian vector field of $\calS$, we have $(\calS,\calS)=\LL_Q\calS$, so the right hand side
of this equation may be seen as the error in the CME. 

However, there is another way of expressing the CME. Namely, we also have $(\calS,\calS)=\iota_Q\iota_Q\omega$
if $Q$ is the Hamiltonian vector field of $\calS$. Otherwise we can compute this by applying $\iota_Q$ to the definition
of $\Check\alpha$; namely,
\[
\iota_Q\Check\alpha = \iota_Q\iota_Q\omega-\LL_Q\calS.
\]
Inserting the previous expression for $\LL_Q\calS$, we finally get
\[
\iota_Q\iota_Q\omega = 2\Check \calS.
\]
Notice that in this equation the $1$\ndash form $\Check\alpha$ no longer appears.
If the reduction $\calF^\de$ is smooth, then we can rewrite this equation as
\begin{equation}
\label{mCME}
\iota_Q\iota_Q\omega = 2\pi^*\calS^\de.
\end{equation}

\begin{rem}
Equation \eqref{mCME} is called the \emph{modified CME (mCME)}.
\end{rem}

\section{The BV-BFV formalism}
We get an improvement of the picture above if we in addition require $\Check\alpha$ to descend to the
leaf space $\calF^\de$. In this case there is a unique even $1$\ndash form $\alpha^\de$ on $\calF^\de$
such that $\Check\alpha=\pi^*\alpha^\de$. We also have
$\omega^\de=\dd\alpha^\de$ and $\pi^*\alpha^\de=\iota_Q\omega-\dd \calS$.
We can summarize this with the notion of an exact BV-BFV pair:
\begin{defn}[Exact BV-BFV pair]
\label{exactBVBFV}
An \emph{exact BV-BFV pair} consists of a nonuple $$(\calF,\omega,Q,\calS,\calF^\de,\alpha^\de,Q^\de,\calS^\de,\pi)$$ where
$\calF$ and $\calF^\de$ are supermanifolds, 
$\pi\colon\calF\to\calF^\de$ is a surjective submersion,
$Q$ and $Q^\de$ are $\pi$\ndash related odd vector fields (on $\calF$ and on $\calF^\de$, 
respectively) of degree $1$, $\omega$ is an odd symplectic form on $\calF$ of degree $-1$, $\omega^\de:=\dd\alpha^\de$ 
is an even symplectic form on $\calF^\de$ of degree $0$, $\calS$ is an even function on $\calF$ of degree $0$, $\calS^\de$ is the Hamiltonian function
for $Q^\de$ of degree $1$, and we have $[Q,Q]=0$ and
\begin{equation}
\iota_Q\omega=\dd \calS+\pi^*\alpha^\de.
\end{equation}
\end{defn}
\begin{rem}
Notice that in particular $(\calF^\de,\omega^\de,Q^\de,\calS^\de)$ is a BFV manifold. 
\end{rem}

For applications one can actually drop the condition that $\omega$ is nondegenerate. What is really needed in the quantum version is the existence of an integrable distribution $\calP$ on $\calF^\de$ with smooth leaf space $\mathcal{B}$ such that the restriction of 
$\alpha^\de$ 
to each leaf vanishes and such that the restriction of $\omega$ to each fiber of $\calF\to\calB$, is nondegenerate. Note that in particular the leaves of $\calP$ are $\check{\omega}$-isotropic. For the purposes
of quantization one requires them to be actually Lagrangian.

\begin{ex}[Yang--Mills]
We describe the BFV structure of Example \ref{YM}. Let us denote the pullback to the boundary of forms $A,B,A^+,c$ by the same letters. The space of boundary fields is the quotient space of the pullback of $\calF_N$ to the boundary over the kernel of the form $\dd \check{\alpha}_{\partial N}$ 
\begin{equation}
\label{boundary_YM}
\calF^\partial_{\partial N}=\underbrace{\mathfrak{g}\otimes\Omega^1(\partial N)[1]}_{\ni A}\oplus \underbrace{\mathfrak{g}\otimes\Omega^{n-2}(\partial N)[n-2]}_{\ni B}\oplus \underbrace{\mathfrak{g}\otimes\Omega^0(\partial N)[1]}_{\ni c}\oplus\underbrace{\mathfrak{g}\otimes\Omega^{n-2}(N)[n-2]}_{\ni A^+} ,
\end{equation}
where $\gh(A)=0$, $\gh(B)=0$, $\gh(c)=1$, and $\gh(A^+)=-1$. The structure of an exact symplectic manifold on $\calF_{\partial N}$ is given by 
\begin{align*}
\alpha^\partial_{\partial N}&=\int_{\partial N}\textnormal{tr}(B\delta A+A^+\delta c),\\
\omega^\partial_{\partial N}&=\int_{\partial N}\textnormal{tr}(\delta B\delta A+\delta A^+\delta c).
\end{align*}
The boundary action and its corresponding Hamiltonian vector field with respect to $\omega^\partial_{\partial N}$ are given by 
\begin{align*}
\calS^\partial_{\partial N}&=\int_{\partial N}\textnormal{tr}\left(B\dd_A c+\frac{1}{2}A^+[c,c]\right)\\
Q^\partial_{\partial N}&=\int_{\partial N}\textnormal{tr}\left(\dd_Ac\frac{\delta}{\delta A}+[B,c]\frac{\delta}{\delta B}+(\dd_AB+[A^+,c])\frac{\delta}{\delta A^+}+\frac{1}{2}[c,c]\frac{\delta}{\delta c}\right)
\end{align*}
One can then check that 
$$(\calF_N, \omega_N,Q_N, \calS_N, \calF^\partial_{\partial N}, \alpha^\partial_{\partial N}, Q^\partial_{\partial N}, \calS^{\partial}_{\partial N}, \pi)$$
is an exact BV-BFV pair, and $(\calF^\partial_{\partial N},\omega^\partial_{\partial N}, Q^\partial_{\partial N},\calS^\partial_{\partial N})$ is a BFV manifold.
\end{ex}

\begin{ex}[Abelian Chern--Simons]
We describe the BFV structure of Example \ref{CS}. Note that the boundary fields are pullbakcs of the bulk fields to the boundary and thus the space of boundary fields is given by 
\begin{equation}
\label{boundary_fields_CS}
\calF^\partial_{\partial M}=\Omega(\partial M)[1].
\end{equation}
Let us denote the pullback to the boundary of the $0,1,2$ forms $c,A,A^+$ respectively by the same letters, and assign $\gh(c)=1$, $\gh(A)=0$, and $\gh(A^+)=-1$. Moreover, we denote the \emph{superfield} $\mathsf{A}=c++A+A^++c^+$ on the boundary by $\mathbb{A}$.
The $1$-form $\alpha^\partial_{\partial M}$, the symplectic structure, the cohomological vector field and the boundary action are then given by 
\begin{align}
\alpha_{\partial M}^\partial&=\frac{1}{2}\int_{\partial N}\mathbb{A}\delta\mathbb{A}=\frac{1}{2}\int_{\partial M}(A\delta A+c\delta A^++A^+\delta c)\\
\omega^\partial_{\partial M}&=\frac{1}{2}\int_{\partial N}\delta\mathbb{A}\delta\mathbb{A}=\int_{\partial M}\left(\frac{1}{2}\delta A\delta A+\delta c\delta A^+\right)\\
Q^\partial_{\partial M}&=\int_{\partial M}\dd\mathbb{A}\frac{\delta}{\delta \mathbb{A}}=\int_{\partial M}\left(\dd c\frac{\delta}{\delta A}+\dd A\frac{\delta}{\delta A^+}\right)\\
\calS^\partial_{\partial M}&=\frac{1}{2}\int_{\partial M}\mathbb{A}\delta\mathbb{A}=\int_{\partial M}c\dd A.
\end{align}
One can then check that 
$$(\calF_M, \omega_M,Q_M, \calS_M, \calF^\partial_{\partial M}, \alpha^\partial_{\partial M}, Q^\partial_{\partial M}, \calS^{\partial}_{\partial M}, \pi)$$
is an exact BV-BFV pair, and $(\calF^\partial_{\partial M},\omega^\partial_{\partial M}, Q^\partial_{\partial M},\calS^\partial_{\partial M})$ is a BFV manifold.
\end{ex}

\begin{ex}[Poisson Sigma Model]
We describe the BFV structure of Example \ref{PSM}. The space of boundary superfields is given by 
\begin{equation}
\label{boundary_fields_PSM}
\calF^\partial_{\partial \Sigma}=\Map(T[1]\partial\Sigma,T[1]^*M).
\end{equation}
We denote the boundary fields by $\mathbb{X}$ and $\mathbb{E}$. Then the BFV structure for the Poisson Sigma Model is given by 
\begin{align}
\alpha^\partial_{\partial \Sigma}&=\int_{\partial\Sigma}\mathbb{E}_i\delta\mathbb{X}^i,\\
\omega_{\partial \Sigma}^\partial&=\int_{\partial\Sigma}\delta\mathbb{E}_i\delta\mathbb{X}^i,\\
Q^\partial_{\partial \Sigma}&=\int_{\partial\Sigma}\left\{\left(\dd\mathbb{X}^{i}+\pi^{ij}(\mathbb{X})\mathbb{E}_i\mathbb{E}_j\right)\frac{\delta}{\delta\mathbb{X}^{i}}+\left(\dd\mathbb{E}_i+\frac{1}{2}\partial_i\pi^{jk}(\mathbb{X})\mathbb{E}_j\mathbb{E}_k\right)\frac{\delta}{\delta\mathbb{E}_i}\right\},\\
\calS^\partial_{\partial\Sigma}&=\int_{\partial \Sigma}\left(\mathbb{E}_i\dd\mathbb{X}^{i}+\frac{1}{2}\pi^{ij}(\mathbb{X})\mathbb{E}_i\mathbb{E}_j\right).
\end{align}
Then one can check that 
$$(\calF_\Sigma, \omega_\Sigma,Q_\Sigma, \calS_\Sigma \calF^\partial_{\partial \Sigma}, \alpha^\partial_{\partial \Sigma}, Q^\partial_{\partial \Sigma}, \calS^{\partial}_{\partial \Sigma}, \pi)$$
is an exact BV-BFV pair and $(\calF^\partial_{\partial \Sigma},\omega^\partial_{\partial \Sigma}, Q^\partial_{\partial \Sigma},\calS^\partial_{\partial \Sigma})$ is a BFV manifold.
\end{ex}


\subsection{Gauge transformations}
Suppose we have an exact BV-BFV pair as in Definiton \ref{exactBVBFV}.
If $f$ is an even function on $\calF^\de$, we may change $\alpha^\de$ to
\[
(\alpha^\de)^f=\alpha^\de+\dd f
\]
without changing the symplectic form $\omega^\de$. The last equation defining the BV-BFV structure is preserved
if we change $\calS$ to 
\[
\calS^f = \calS-\pi^*f.
\]
We then get another exact BV-BFV pair by replacing $\alpha^\de$ and $\calS$ with $(\alpha^\de)^f$ and $\calS^f$.
It makes sense to consider the two BV-BFV pairs as equivalent. We then regard this as a gauge transformation.
In applications to field theory, this corresponds to changing the action by a boundary term. In the quantum version, it will correspond to chainging the wave function by a phase.

\begin{rem}
A first consequence of this is that in the quantum version mentioned before,
we should just require that we have one representative of $\alpha^\de$ in the gauge class that vanishes
on the leaves of the foliation. 

A second consequence is that we have a natural generalization to more general (i.e., non exact) BV-BFV pairs
where we require $\omega^\de$ to be $\dd\alpha^\de$ only locally.
\end{rem}

\begin{defn}[BV-BFV$_\hbar$ pair]
For a fixed $\hbar\not=0$, we define a \emph{BV-BFV$_\hbar$ pair} as
a hendecuple $(\calF,\calE,\omega,Q,\sigma,\calF^\de,\calE^\de,\theta^\de,Q^\de,\calS^\de,\pi)$ where 
$\calF$ and $\calF^\de$ are supermanifolds, 
$\pi\colon\calF\to\calF^\de$ is a surjective submersion,
$Q$ and $Q^\de$ are 
$\pi$\ndash related 
odd vector fields (on $\calF$ and on $\calF^\de$, 
respectively), $\omega$ is an odd symplectic form on $\calF$, 
$S^\de$ is the Hamiltonian function
for $Q^\de$, $\calE^\de$ is a $U(1)$\ndash bundle
over $\calF^\de$,  $\calE$ is the pullback of $\calE^\de$ by 
$\pi$,\footnote{So we could have avoided putting $\calE$ in the list of data but we could not resist using the word hendecuple.}  $\sigma$ is an even section of $\calE$, $\theta^\de$ is a connection on $\calE^\de$
such that its curvature $R^\de$ is a symplectic form, and we have
\[
[Q,Q]=0\quad\text{and}\quad
\frac{\I}{\hbar}\iota_Q\omega\,\sigma= \DD\,\sigma,
\]
where $\DD$ is the covariant derivative with respect to the pullback connection $\pi^*\theta^\de$.
\end{defn}

An exact BV-BFV pair is also a BV-BFV$_\hbar$ pair, for all $\hbar\not=0$, by choosing $\calE^\de$ to be the
trivial $U(1)$\ndash bundle and setting $\sigma=\ee^{\frac\ii\hbar \calS}$ and $\theta^\de=\frac\ii\hbar\alpha^\de+\dd u$,
with $u$ the coordinate on $U(1)$. We then get $R^\de=\frac\ii\hbar\omega^\de$.

Notice that again we may allow some degeneracy for $\omega$. Namely, we assume that we have
an $\check{\omega}$-Lagrangian integrable distribution $\calP$ on $\calF^\de$, such that the restriction of $\theta^\de$ to each leaf is trivial, and require that
the restriction of $\omega$ to each fiber of $\calF\to\calB$, where
$\calB$ is the leaf space, be nondegenerate.

Why did we make this digression? The point is that this may be needed. Recall that we started with a relaxed
BV manifold $(\calF,\omega,\calS,Q)$ and assumed that the leaf space $\calF^\de$ of the distribution given
by the kernel of $\Check\omega$ was smooth. To get an exact BV-BFV pair we also had to assume that
$\Check\alpha$ descended to a $1$\ndash form $\alpha^\de$ on the leaf space. This last assumption is a bit strong. A weaker version is the following. We take the trivial $U(1)$\ndash bundle $\calE$ over $\calF$ and regard
$\Check\theta:=\frac\ii\hbar\Check\alpha+\dd u$ as a connection. The weaker condition now is that
$\calE$ reduces to a smooth $U(1)$\ndash bundle $\calE^\de$ over $\calF^\de$ and that $\Check\theta^\de$
descends to a connection $\theta^\de$. This produces a BV-BFV$_\hbar$ pair.

Note that there is an even further generalization as we may start with relaxed BV$_\hbar$ data, namely
a sextuple $(\calF,\calE,\omega,\sigma,Q,\Check\theta)$ where $\calE$ is a $U(1)$\ndash bundle
over $\calF$, $\Check\theta$ a connection $1$\ndash form  and the relaxed BV equation is replaced by
$\iota_Q\omega\,\sigma= \Check\DD\,\sigma$ where $\Check\DD$ is the covariant derivative
with respect to $\Check\theta$.\footnote{This setting actually occurs in field theory, e.g.,
in the case of a charged particle in an external electromagnetic field or in the WZWN model.}

%

\subsection{Parametrizations}
It is often the case that the BV action depends on parameters but in a BV irrelevant way (this is, e.g., the case in field theory when one expands around background fields). More precisely, we assume that we have a
BV manifold $(\calF,\omega,\calS_0)$ with $\calS_0$ depending on some parameter in a parameter space $P$ (i.e.,
$\calS_0$ is a function on $\calF\times P$). We also assume that the variations of $\calS_0$ on $P$ are trivial in the
$(\calS_0,\enspace )$\ndash cohomology. More precisely, we assume
\[
D\calS_0+(\calS_0,\calS_1)=0,
\]
where $D$ denotes the de~Rham differential on $P$ and $\calS_1$ is an odd function on $\calF$ which is also a
$1$\ndash form on $P$. By setting $\calS=\calS_0+\calS_1$, we can summarize the above equation and the CME in the single equation
\begin{equation}
\label{DCME}
D\calS+\frac12(\calS,\calS)=0
\end{equation}
which we call the \emph{differential Classical Master Equation (dCME)}. 
\begin{rem}
Equation \eqref{DCME} also contains the ``integrability condition'' $D\calS_1+\frac{1}{2}(\calS_1,\calS_1)=0$.
\end{rem}

The above setting may be generalized in two ways. The first is by replacing the de~Rham complex
$(\Omega(P),D)$ by a differential graded commutative algebra $(\calA,D)$. The second is by allowing
$\calS$ to be a sum $\sum_{i=0}^\infty \calS_i$, where $\calS_i$ is a function of the same parity as $i$ on $\calF$
and of degree $i$ in $\calA$.

For notational simplicity it is useful to think of $D$ as a vector field, so we assume $\calA$ to be the algebra
of function on a graded manifold $\calP$ (e.g., $T^*[1]P$ in the starting case). In other words,
we now work on $\calF_\calP=\calF\times\calP$. The BV form $\omega$ is now pulled back to $\calF_\calP$
and the vector field $D$ is extended to it: note that $\iota_D\omega=0$. The BV action $\calS$ is an even function
on $\calF_\calP$. If we have a compatible $\bbZ$\ndash grading on $\calF$, then $\calS$ has total degree zero
(i.e., $\calS_i$ has degree $i$ on $\calP$ and $-i$ on $\calF$). 

\begin{defn}[Differential BV manifold]
The quadruple
$(\calF,\omega,\calP,D,\calS)$ is called a \emph{differential BV manifold}.
\end{defn}

We then introduce the Hamiltonian vector field $Q$ of $\calS$
\[
\iota_Q\omega = \dd_\calF \calS,
\]
where $\dd_\calF$ denotes the de~Rham differential on $\calF$. Note that $Q$ is a vector field
on $\calF$ parametrized by $\calP$. As usual it follows that $[Q,Q]$ is the Hamiltonian vector field
of $(\calS,\calS)$:
\[
\iota_{[Q,Q]}\omega = \dd_\calF(\calS,\calS).
\]
On the other hand, since $\calS$ depends also on $\calP$ we have
\[
\LL_Q\omega = \dd_\calP\dd_\calF \calS,
\]
where $\dd_\calP$ is the de~Rham differential on $\calP$ and $\LL$ denotes the Lie derivative of
$\calF_\calP$. We then have
\[
\iota_{[Q,D]}\omega=[\LL_Q,\iota_\calS]\omega=-\iota_D\LL_Q\omega=
-\dd_\calF\iota_D\dd_\calP \calS = -\dd_\calF D\calS.
\]
Thus, the dCME is equivalent to
\[
[Q,Q]-2[Q,D]=0
\]
which is also equivalent to
\[
[Q-D,Q-D]=0.
\]
Thus we come to the equivalent definition of  a {differential BV manifold}
as a
$(\calF,\omega,\calP,D,\calS,Q)$ where $\omega$ is an odd symplectic form $\calF$,
$D$ is a cohomological vector field on $\calP$, $\calS$ is an even function $\calF\times\calP$ and
and $Q$ is an odd vector field on $\calF$ parametrized by $Q$ satisfying
\[
\iota_Q\omega=\dd_\calF \calS,
\qquad
[Q-D,Q-D]=0.
\]

We then define a \text{relaxed differential BV manifold} by the same data by possibly dropping
the nondegeneracy condition of $\omega$ and requiring only $[Q-D,Q-D]=0$. We introduce

\[
\Check\alpha :=\iota_Q\omega-\dd_\calF \calS
\]
and
\[
\Check\omega:=\dd_\calF\Check\alpha=-\LL_Q^\calF\omega,
\]
where $\LL^\calF$ denotes the Lie derivative on $\calF$. Note that  in general $\alpha^\de$ and
$\omega^\de$ might depend parametrically on $\calP$. However, we now wish to reduce $\calF$ with respect to the kernel of $\omega$ on $\calF$. We denote
by $\calF^\de$ the reduced space and by $\omega^\de$ the reduced symplectic form. Note that in general
$\omega^\de$ might depend parametrically on $\calP$. However, we have
\[
\iota_D\Check\alpha=\iota_D\Check\omega=0.
\]
We now want to check that $Q$ is projectable. Let $Y$ be a vector field on $\calF$ in the kernel of $\omega$.
Note that $[Y,Q]=[Y,Q]_\calF$ as neither $Q$ not $Y$ have component along $\calP$ .

\section{Local field theory}
\label{local_field_theory}
The main application of the BV formalism is in the context of (perturbative) quantum field theory.

\begin{defn}[BV local field theory]
A \emph{BV local field theory} in $d$ dimensions is the assignment of a BV manifold
$(\calF_M,\omega_M,\calS_M)$ to each $d$\ndash manifold $M$ (possibly with some required structure)
where $\calF_M$ is a space of local fields (functions, maps, sections of bundles, connections) and both
$\omega_M$ and $\calS_M$ are local, in the sense that they are integrals over $M$ of densities depending
at each point on finitely many jets of the fields at that point. It then follows that the Hamiltonian vector field $Q_M$ of $\calS_M$,
$\iota_{Q_M}\omega_M=\dd \calS_M$, is also local. The local functional $\calS_M$ is called the \emph{BV action} (as in Definition \ref{BV_manifold}).
\end{defn}

If $M$ is a compact $d$\ndash manifold with boundary we can extend to it the definition of $\calF_M$,
$\omega_M$, $\calS_M$ and $Q_M$. We thus associate a relaxed BV manifold
$(\calF_M,\omega_M,\calS_M,Q_M)$ to each compact $d$\ndash manifold with boundary $M$.
We call this the naive choice. Other choices are obtained by allowing in $\calF_M$ only fields that respect certain fixed
boundary conditions. 

We can apply the general procedure to the relaxed BV manifold
$$(\calF_M,\omega_M,\calS_M,Q_M).$$ The main remark here is that, by locality, the kernel of
$\Check\omega_M$ contains all bulk fields. This has two consequences.
The first is that the induced BFV manifold depends only on boundary data; we will therefore
denote it by $(\calF^\de_{\de M},\omega^\de_{\de M},\calS^\de_{\de M})$. The second is that
its construction factors through an intermediate step. We denote by $\Tilde\calF_{\de M}$ the space of transversal
jets of fields at the boundary and by $\Tilde\pi_M$ the projection $\calF_M\to \Tilde\calF_{\de M}$. Again by locality,
we have that $Q_M$ is $\Tilde\pi_M$\ndash projectable and that $\Check\alpha_M$ is
$\Tilde\pi_M$\ndash basic. We then have a uniquely defined odd vector field $\Tilde Q_{\de M}$
and a uniquely
defined even $1$\ndash form $\Tilde\alpha_{\de M}$ on $\Tilde\calF_{\de M}$
such that $Q_M$ and $\Tilde Q_{\de M}$ are $\Tilde\pi_M$\ndash related and 
$\Check\alpha_M=\Tilde\pi_M^*\Tilde\alpha_{\de M}$. Moreover, we have
\[
[\Tilde Q_{\de M},\Tilde Q_{\de M}]=0,\qquad
\iota_{Q_M}\omega_M=\dd \calS_M + \Tilde\pi_M^*\Tilde\alpha_{\de M}.
\]
We then get $\calF^\de_{\de M}$, if smooth, as the reduction of $\Tilde\calF_{\de M}$ by the kernel
of $\Tilde\omega_{\de M}:=\dd\Tilde\alpha_{\de M}$.

Two remarks are now in order. The first is that, if the boundary of $M$ is the disjoint union of
two manifolds $\de_1M$ and $\de_2M$, we then have
\[
\Tilde\calF_{\de M}=\Tilde\calF_{\de_1 M}\times \Tilde\calF_{\de_2 M},\qquad
\calF^\de_{\de M}=\calF^\de_{\de_1 M}\times\calF^\de_{\de_2 M}.
\]
In particular, if $\Sigma$ is an oriented compact $(d-1)$\ndash manifold, we then have, for every interval $I$,
\[
\Tilde\calF_{\de(\Sigma\times I)}=\Tilde\calF_{\Sigma}\times \Tilde\calF_{\Sigma^\text{op}},\qquad
\calF^\de_{\de (\Sigma\times I)}=\calF^\de_{\Sigma}\times\calF^\de_{\Sigma^\text{op}},
\]
where $\Sigma^\text{op}$ denotes $\Sigma$ with the opposite orientation. This way we can associate
``boundary data'' $\Tilde\calF_{\Sigma}$ and $\calF^\de_{\Sigma}$ intrinsically to every
oriented compact $(d-1)$\ndash manifold $\Sigma$.

The second remark is that changing the action functional by a boundary term leads to physically equivalent theories.
This means that the BV theory
is equipped with gauge transformations
\[
\Tilde\alpha_{\de M}\mapsto (\Tilde\alpha_{\de M})^f=\Tilde\alpha_{\de M}+\dd f,\qquad
\calS_M\mapsto \calS^f_{M} = \calS_M-\Tilde\pi_M^*f,
\]
for every $f$ in $\Tilde\calF_{\de M}$. In some cases, one may use gauge transformations to make
$\Tilde\alpha_{\de M}$ basic as a $1$\ndash form (not just as a connection).

%
%

\subsection{The BFV formalism}
BFV manifolds, i.e., triples $(\calF,\omega,\calS)$ where $\calF$ is a supermanifold, $\omega$ an even symplectic form
and $\calS$ an odd function satisfying $\{\calS,\calS\}=0$, with $\{\ ,\ \}$ the Poisson bracket associated to $\omega$,
have an important application in symplectic geometry.

Namely, let $(\calF^0,\omega^0)$ be a symplectic manifold (here $\calF^0$ might also be a supermanifold, but
$\omega^0$ is anyway assumed to be even). Let $\calC\subset \calF^0$ be a coistropic submanifold; i.e., the Hamiltonian
vector field of every function vanishing on $\calC$ is tangent to $\calC$ (see Definition \ref{lag_subsp}). The span of these Hamiltonian vector fields
is called the characteristic distribution of $\calC$ and is an involutive distribution.

\begin{defn}[BFV resolution]
A BFV resolution of $(\calF^0,\omega^0,\calC)$ is a BFV manifold $(\calF,\omega,\calS)$  together
with an inclusion $\iota\colon\calF^0\hookrightarrow\calF$ 
such that
\begin{enumerate}
\item $\omega^0=\iota^*\omega$, and
\item the zero locus of the restriction of the Hamiltonian vector field $Q$ of $\calS$ to $\calF^0$ is $\calC$. 
\end{enumerate}
\end{defn}

One can always find a BFV resolution such
that $\calF^0$ is the ghost number zero component of $\calF$, $\omega$ has ghost number zero, and
$\calS$ and $Q$ have ghost number $+1$.

Finally, if the reduction $\underline{\calC}$ of $\calC$ (i.e., the leaf space of its characteristic distribution) is smooth then
$C^\infty(\underline{\calC})$ and the ghost-number-zero cohomology of $(C^\infty(\calF),Q)$ are isomorphic as Poisson algebras.

\begin{rem}
It very often happens that the (induced) BFV theory is also local. In this case, as observed in \cite{CMR1}, one may iterate the procedure of this section verbatim: namely, one considers the relaxed BFV theory associated to a manifold with boundary and induce an dg  symplectic manifold with symplectic structure of degree $1$ on its boundary. This procedure may be iterated (if the reductions are smooth and produce local structures). If one starts with a manifold with corner with a local BV structure, one then ends up having a local dg symplectic structure, with symplectic form of degree $-1+k$, on fields on the codimension $k$ corners (we call this a local BF$^k$V structure). 
\end{rem}

\section{Quantization on manifolds with boundary}
We want to construct an equivalent condition to the Quantum Master Equation for manifolds with boundary, i.e. we have to \emph{modify} the QME with an additional term taking care of the boundary structure on the quantum level.
For this purpose, quantization is performed in the guise of \emph{geometric quantization} (see e.g. \cite{Kir85,Wood97}) as it is described in \cite{CMR2}. 

\subsection{General idea}
Let us recall the functorial theory of topological (quantum) field theories. I.e. we consider the partition function $Z$ as a map $\mathbb{C}\to \mathbb{C}$ constructed with the Atiyah TQFT-axioms for closed manifolds. More generally, we consider a symmetric monoidal category $\Cob_n$, whose objects are $(n-1)$-dimensional manifolds (boundary components), and whose morphisms are diffeomorphism classes of bordisms, i.e. the objects are given as the components of the boundary and the morphism is, roughly speaking, given by the bulk of the manifold, and composition is given by gluing. Moreover, the objects of $\Cob_n$ are endowed with a certain symbol $\{in,out\}$ representing an orientation of the manifold (\emph{incoming} or \emph{outgoing}), and the mononidal structure on $\Cob_n$ is given by disjoint union. Note that the orientation is important for composition (gluing). In particular, we can only glue together boundaries with opposite orientation (see Figure \ref{gluing}).

\begin{defn}[Topological field theory]
\label{TFT_def}
Let $\mathcal{C}$ be a symmetric monoidal category. An $n$-dimensional \emph{topological field theory} (short \emph{TFT}) is a symmetric monoidal functor
\begin{equation}
\label{TFT}
\mathscr{F}\colon \Cob_n\to \mathcal{C},
\end{equation} 
\end{defn}

\begin{defn}[Topological quantum field theory]
A \emph{topological quantum field theory} (short \emph{TQFT}) is a functor as in Definition \ref{TFT_def}, where the target category is given by the category of vector spaces over the complex numbers, i.e. $\mathcal{C}=\Vect_\mathbb{C}$.
\end{defn}

A TQFT $\mathscr{F}$ will send the empty set $\varnothing$ to the trivial vector space $\mathbb{C}$. The partition function $Z$ is then defined on closed cobordisms, i.e. we have a diagram 
\[
\begin{tikzcd}
\Cob_n\ni\varnothing\arrow[d,swap,"\mathscr{F}"]\arrow[r,"{\includegraphics[scale=0.1]{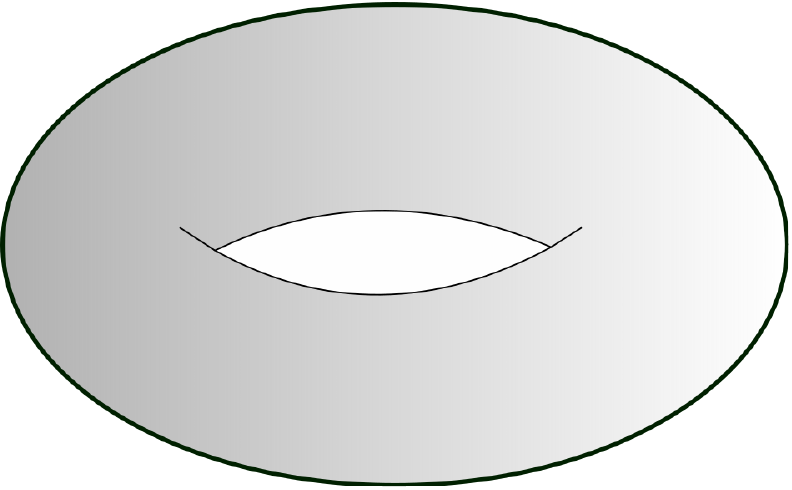}}"]&\arrow[d,"\mathscr{F}"]\varnothing\in \Cob_n\\
\Vect_\mathbb{C}\ni\mathbb{C}\arrow[r,"Z"]&\mathbb{C}\in \Vect_\mathbb{C}
\end{tikzcd}
\]
For cobordisms with nonempty boundary, a TQFT  $\mathscr{F}$ associates to each boundary component $\partial_k$ a Hilbert space $\mathcal{H}_k$. By the orientation on each boundary component of the cobordism, we assign an orientation for the state as an evolution mapping (e.g. see Figure \ref{pair_of_pants})

\begin{center}
\begin{figure}[h!]

\begingroup%
  \makeatletter%
  \providecommand\color[2][]{%
    \errmessage{(Inkscape) Color is used for the text in Inkscape, but the package 'color.sty' is not loaded}%
    \renewcommand\color[2][]{}%
  }%
  \providecommand\transparent[1]{%
    \errmessage{(Inkscape) Transparency is used (non-zero) for the text in Inkscape, but the package 'transparent.sty' is not loaded}%
    \renewcommand\transparent[1]{}%
  }%
  \providecommand\rotatebox[2]{#2}%
  \ifx\svgwidth\undefined%
    \setlength{\unitlength}{120.76020961bp}%
    \ifx\svgscale\undefined%
      \relax%
    \else%
      \setlength{\unitlength}{\unitlength * \real{\svgscale}}%
    \fi%
  \else%
    \setlength{\unitlength}{\svgwidth}%
  \fi%
  \global\let\svgwidth\undefined%
  \global\let\svgscale\undefined%
  \makeatother%
  \begin{picture}(1,0.69044631)%
    \put(0,0){\includegraphics[width=\unitlength]{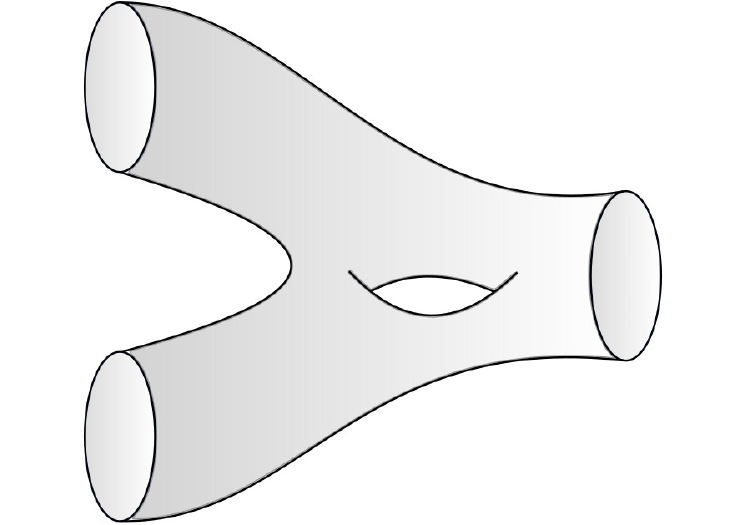}}%
    \put(-0.00071671,0.60247302){\color[rgb]{0,0,0}\makebox(0,0)[lb]{\smash{$\partial_1$}}}%
    \put(-0.00082684,0.11658998){\color[rgb]{0,0,0}\makebox(0,0)[lb]{\smash{$\partial_2$}}}%
    \put(0.94137699,0.32800496){\color[rgb]{0,0,0}\makebox(0,0)[lb]{\smash{$\partial_3$}}}%
  \end{picture}%
\endgroup%

\caption{An example where the cobordism is some pair of pants with genus $1$. A TQFT functor $\mathscr{F}$ assigns to each boundary component a Hilbert space, thus we get $\mathscr{F}(\partial_k)=\mathcal{H}_k$ for $k=1,2,3$ and since $\mathscr{F}$ is a symmetric monoidal functor, we get $\mathscr{F}(\partial_1\sqcup\partial_2\sqcup\partial_3)=\mathcal{H}^*_1\otimes\mathcal{H}^*_2\otimes\mathcal{H}_3=\Hom(\mathcal{H}_1\otimes \mathcal{H}_2,\mathcal{H}_3)$. Note that each cobordism comes with a certain orientation. Thus we set $\partial_1$ and $\partial_2$ to be incoming boundaries and $\partial_3$ to be an outgoing boundary. Hence we have an incoming Hilbert space $\mathcal{H}_{in}:=\mathcal{H}_1^*\otimes \mathcal{H}^*_2\cong \calH_1\otimes\calH_2$ (associated to $\partial_1\sqcup\partial_2$) and an outgoing $\mathcal{H}_{out}:=\mathcal{H}_3$ (associated to $\partial_3$).
The state $\psi$ corresponding to this cobordism and the given TQFT is then given as the value of the morphism represented by the genus $1$ pair of pants above (i.e. the bounding manifold) under $\mathscr{F}$.}
\label{pair_of_pants}
\end{figure}
\end{center}
In particular, the state (evolution map) in Figure \ref{pair_of_pants}, is given by 
$$\mathscr{F}\left(\begin{gathered}{\includegraphics[scale=0.3]{genus1pants.eps}}\end{gathered}\right)=\psi\colon \mathcal{H}_{in}\to \mathcal{H}_{out}.$$

\begin{rem}
The above construction shows that the understanding of a perturbative quantization on manifolds with boundary is an important concept. We want to show that the QME, which is a gauge independence condition for partition functions on closed manifolds, can be extended to a more general condition on manifolds with boundary. 
\end{rem}

\begin{center}
\begin{figure}[h!]

\begingroup%
  \makeatletter%
  \providecommand\color[2][]{%
    \errmessage{(Inkscape) Color is used for the text in Inkscape, but the package 'color.sty' is not loaded}%
    \renewcommand\color[2][]{}%
  }%
  \providecommand\transparent[1]{%
    \errmessage{(Inkscape) Transparency is used (non-zero) for the text in Inkscape, but the package 'transparent.sty' is not loaded}%
    \renewcommand\transparent[1]{}%
  }%
  \providecommand\rotatebox[2]{#2}%
  \ifx\svgwidth\undefined%
    \setlength{\unitlength}{150.67789908bp}%
    \ifx\svgscale\undefined%
      \relax%
    \else%
      \setlength{\unitlength}{\unitlength * \real{\svgscale}}%
    \fi%
  \else%
    \setlength{\unitlength}{\svgwidth}%
  \fi%
  \global\let\svgwidth\undefined%
  \global\let\svgscale\undefined%
  \makeatother%
  \begin{picture}(1,0.63056987)%
    \put(0,0){\includegraphics[width=\unitlength]{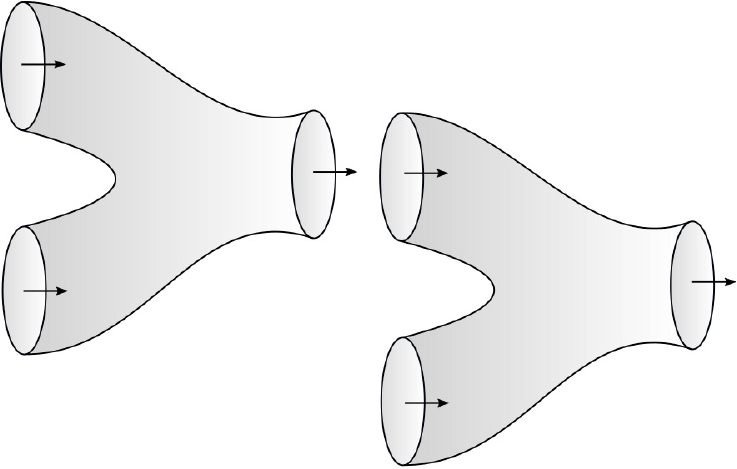}}%
    \put(0.47307113,0.47502189){\color[rgb]{0,0,0}\makebox(0,0)[lb]{\smash{$\Sigma$}}}%
    \put(0.78000127,0.2396118){\color[rgb]{0,0,0}\makebox(0,0)[lb]{\smash{$M_2$}}}%
    \put(0.22479298,0.40857472){\color[rgb]{0,0,0}\makebox(0,0)[lb]{\smash{$M_1$}}}%
  \end{picture}%
\endgroup%

\caption{Illustration for the gluing of two manifolds $M_1$ and $M_2$ along the common boundary $\Sigma$. The arrows at each boundary component represents the orientation $in$ or $out$. Note that the gluing has to be done by gluing $\Sigma$ on $M_1$ with $\Sigma$ on $M_2$ endowed with the opposite orientation.}
\label{gluing}
\end{figure}
\end{center}

\subsection{The modified Quantum Master Equation}
Let us consider a compact $n$-manifold $M$ with boundary $\partial M$.  Let $\partial_{k_1}^{in}M$ and $\partial_{k_2}^{out}M$ denote the incoming and outgoing boundary components of $\partial M$ repsectively, where $k_1=1,...,n$ and $k_2=1,...,m$. In particular, we have $n$ incoming and $m$ outgoing boundary components such that 
\begin{align*}
\partial^{in}M&=\bigsqcup_{1\leq k_1\leq n}\partial_{k_1}^{in}M\\
\partial^{out}M&=\bigsqcup_{1\leq k_2\leq m}\partial_{k_2}^{out}M, 
\end{align*}
and thus $\partial M=\partial^{in}M\sqcup\partial^{out}M$. Considering a TQFT $\mathscr{F}$, we get $$\mathscr{F}(\partial M=\partial^{in}M\sqcup\partial^{out}M)=\mathcal{H}_{in}^*\otimes\mathcal{H}_{out}=\Hom(\mathcal{H}_{in},\mathcal{H}_{out})$$ and in particular, 
\begin{equation*}
\mathscr{F}\left(\bigsqcup_{k}\partial_k^{\bullet}M\right)=\bigotimes_{k}\mathscr{F}(\partial_k^\bullet M)=\bigotimes_k\mathcal{H}_{k,\bullet},
\end{equation*}
where $\bullet\in\{in,out\}$. An example of such a manifold $M$ is illustrated in Figure \ref{higher_surface}.

\begin{center}
\begin{figure}[h!]

\begingroup%
  \makeatletter%
  \providecommand\color[2][]{%
    \errmessage{(Inkscape) Color is used for the text in Inkscape, but the package 'color.sty' is not loaded}%
    \renewcommand\color[2][]{}%
  }%
  \providecommand\transparent[1]{%
    \errmessage{(Inkscape) Transparency is used (non-zero) for the text in Inkscape, but the package 'transparent.sty' is not loaded}%
    \renewcommand\transparent[1]{}%
  }%
  \providecommand\rotatebox[2]{#2}%
  \ifx\svgwidth\undefined%
    \setlength{\unitlength}{80.67332355bp}%
    \ifx\svgscale\undefined%
      \relax%
    \else%
      \setlength{\unitlength}{\unitlength * \real{\svgscale}}%
    \fi%
  \else%
    \setlength{\unitlength}{\svgwidth}%
  \fi%
  \global\let\svgwidth\undefined%
  \global\let\svgscale\undefined%
  \makeatother%
  \begin{picture}(1,2.11418763)%
    \put(0,0){\includegraphics[width=\unitlength]{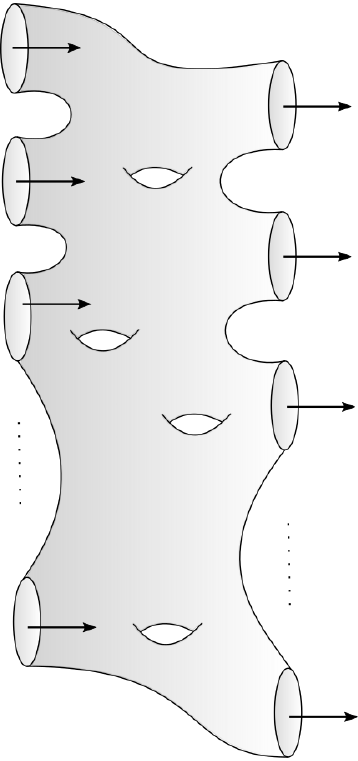}}%
    \put(0.40644048,0.66614779){\color[rgb]{0,0,0}\makebox(0,0)[lb]{\smash{$M$}}}%
  \end{picture}%
\endgroup%

\caption{Illustration of a $2$-manifold $M$ with $n$ incoming boundaries on the left and $m$ outgoing boundaries on the right. Its boundaries $\partial_k^\bullet M$ are the $1$-manifold objects in $\Cob_2$ and its bulk represents a particular morphism $\partial^{in}M\to\partial^{out}M$, namely the one with genus being $4$. Note again that different bulk structure (i.e. a different amount of genera) correspond to different morphisms in $\Cob_2$ and thus will lead to a different state $\psi$.}
\label{higher_surface}
\end{figure}
\end{center}

As already mentioned, the assignment of a Hilbert space to each boundary component can be done by techniques of geometric quantization. 
Thus, we need to fix a polarization $\mathcal{P}$ on $\calF^\partial_{\partial M}$ (boundary polarization). Moreover, we assume that the leaf space $\calB^\calP_{\partial M}$ (which depends on the polarization) is smooth and consider the Hilbert space $\calH^\calP_{\partial M}$ to be the functions on $\calB^\calP_{\partial M}$. 
\begin{rem}
Here we will assume that the space of fields has a linear structure, which is indeed the case for most theories (e.g. $BF$-like theories). However, in general this does not have to be the case.
\end{rem}
We assume a \emph{splitting} of the fibration $\calF_M\to \calB^\calP_{\partial M}$ as 
\begin{equation}
\label{split1}
\calF_M=\calB^\calP_{\partial M}\oplus \calY,
\end{equation}
such that $\omega_M$ is constant on the base $\calB_{\partial M}^\calP$. Basically, we have splitted the fields into a boundary part $\calB_{\partial M}^\calP$ and a bulk part $\calY$.

\vspace{0.2cm}

In order to define the quantum state, we need to introduce the concept of configuration spaces and their compactification and the formulation for manifolds with boundary.


\subsection{FMAS compactification}
\label{FMAS}
We start with the definition of the configuration space. 
\begin{defn}
Let $M$ be a manifold and $S$ a finite set. The \emph{open configuration space} of $S$ in $M$ is defined as
\begin{equation}
\mathsf{Conf}_S(M) := \{\iota\colon S \hookrightarrow M |\iota \hspace{0.2cm}\textnormal{injection}\}
\end{equation}
\end{defn}
Elements of $\mathsf{Conf}_{S}(M)$ are called $S$-configurations. To give an explicit definition of the compactification that can be extended to manifolds with boundaries and corners, we introduce the concept of \emph{collapsed configurations}. Intuitively, a collapsed $S$-configuration is the result of a collapse of a subset of the points in the $S$-configuration. However, we remember the relative configuration of the points before the collapse by directions in the tangent space. This is a configuration in the tangent space that is well-defined only up to translations and scaling. The difficulty is that one can imagine a limiting configuration where two points collapse first together and then with a third.
This explains the recursive nature of the following definition. Recall that if $X$ is a vector space, then $X\times \R_{>0}$ acts on $X$ by translations and scaling. 
\begin{defn}[Collapsed configuration in $M$]
Let $M$ be a manifold, $S$ a finite set and $\mathfrak{P} = \{S_1,\ldots,S_k\}$ be a partition of $S$. A \emph{$\mathfrak{P}$-collapsed configuration in $M$} is a $k$-tuple $(p_{\sigma},c_{\sigma})$ such that 
$((p_{\sigma},c_{\sigma}))_{\sigma = 1}^k$ satisfies 
\begin{enumerate}
\item $p_{\sigma} \in M$ and $p_{\sigma} \neq p_{\sigma'}$, for $\sigma \neq \sigma'$, 
\item $c_{\sigma} \in \Tilde{\mathsf{C}}_{S_{\sigma}}(T_{p_{\sigma}}M)$, where for $|S| = 1$, 
$\Tilde{\mathsf{C}}_S(X) := \{pt\}$  and for $|S| \geq 2$
\begin{equation}
\Tilde{\mathsf{C}}_S(X) := \coprod_{\substack{\mathfrak{P}=\{S_1,\ldots,S_k\} \\ S = \sqcup_\sigma S_\sigma, k\geq 2}}\left\lbrace \left(x_\sigma, c_\sigma \right)_{1\leq \sigma \leq k}\ \bigg|\ (x_\sigma, c_\sigma)\ \mathfrak{P}\text{-collapsed $S$-configuration in $X$}\right\rbrace \bigg/(X \times \R_{>0})
\end{equation}
\end{enumerate}

Here, $\varphi \in X \times \R_{>0}$ acts on $(x_\sigma,c_\sigma)$ by $(x_\sigma,c_\sigma) \mapsto (\varphi(x_\sigma), \dr\varphi_{x_\sigma}c_\sigma)$. 
\end{defn}
Intuitively, given a partition $\mathfrak{P}= \{S_1,\ldots,S_k\}$, a $k$-tuple $(p_\sigma,c_\sigma)$ describes the collapse of the points in $S_\sigma$ to $p_\sigma$. $c_\sigma$ remembers the relative configuration of the collapsing points. This relative configuration can itself be the result of a collapse of some points. 

\begin{defn}[FMAS compactification]
The \emph{compactified configuration space} $\mathsf{C}_S(M)$ of $S$ in $M$ is given by 
\begin{equation}
\mathsf{C}_S(M) := \coprod_{\substack{S_1,\ldots,S_k \\ S = \sqcup_\sigma S_\sigma}}\left\lbrace (p_\sigma, c_\sigma )_{1\leq \sigma \leq k}\ \bigg|\ (p_\sigma, c_\sigma)\ \mathfrak{P}\text{-collapsed $S$-configuration in $M$}\right\rbrace.
\end{equation}
%
\end{defn}

\subsection{Boundary strata}
A precise description of the combinatorics of the stratification can be found in \cite{FulMacPh}, where it is also shown that $\mathsf{C}_S(M)$ is a manifold with corners and is compact if $M$ is compact. For us, only strata in low codimensions are interesting. Let $S=\{s_1,\ldots,s_k\}$. 
The stratum of codimension 0 corresponds to the partition $\mathfrak{P}= \{\{s_1\},\ldots,\{s_k\}\}$. For $\ell>1$, strata of codimension 1 correspond to the collapse of exactly one subset $S'=\{s_1,\ldots,s_\ell\} \subset S$ with no further collapses, i.e a partition $\mathfrak{P}=\{\{s_1,\ldots,s_\ell\},\{s_{\ell+1}\},\ldots,\{s_k\}\}$ and configuration $(p_\sigma,c_\sigma)$ with $c_\sigma$ in the component of $\Tilde{\mathsf{C}}_{S'}(X)$ given by the partition $\mathfrak{P} =  \{\{s_1\},\ldots,\{s_\ell\}\}$. This boundary stratum will be denoted by $\de_{S'}\mathsf{C}_S(M)$, in particular, we have
\begin{equation}
\de\mathsf{C}_S(M) = \coprod_{S' \subset S}\de_{S'}\mathsf{C}_S(M).
\end{equation} 
There is a natural fibration $\de_{S'}\mathsf{C}_S(M) \to \mathsf{C}_{S \setminus S'\cup \{pt\}}(M)$ whose fiber is $\Tilde{\mathsf{C}}_S(\R^{\dim M})$. Finally, we note that if $|S| = 2$, then $\mathsf{C}_{S}(M) \cong Bl_{\overline{\Delta}}(M \times M)$, the \emph{differential-geometric blow-up} of the diagonal $\overline{\Delta} \subset M \times M$, and $\Tilde{\mathsf{C}}_S(X) \cong S^{\dim X-1}$.

\subsection{Configuration spaces for manifolds with boundary}
We proceed to recall the definition of a compactified configuration space for manifolds with boundary. Let $M$ be a compact manifold with boundary $\de M$. Recall that for a manifold $M$ with boundary $\de M$, at points $p \in \de M$ there is a well-defined notion of inward and outward half-space in $T_pM$. If $H \subset X$ is a half-space, then $\de H \subset X$ is a hyperplane. $\de H \times \R_{>0}$ acts on $H$ by translations and scaling. 
%
%
\begin{defn}[Configuration spaces for manifolds with boundary]
Let $M$ be a manifold with boundary $\de M$. For $S,T$ finite sets, we define the \emph{open configuration space} by
\begin{equation}
\mathsf{Conf}_{S,T}(M,\de M) := \{(\iota,\iota')\colon S \times T \hookrightarrow M \times \de M\}
\end{equation}
\end{defn}
\begin{defn}[Collapsed configuration on manifolds with boundary]
Let $(M, \de M)$ be a manifold with boundary. Let $S, T$ be finite sets and $\mathfrak{P}=\{S_1, \ldots, S_k\}$ a partition of $S \sqcup T$.  Then, a $\mathfrak{P}$-collapsed $(S,T)$-configuration in $M$ is a $k$-tuple of pairs $(p_\sigma,c_\sigma)$ such that
\begin{enumerate}
\item $p_\sigma \in M$ and $p_\sigma\neq p_{\sigma'}$, for all $\sigma\neq \sigma'$, 
\item $S_\sigma \cap T \neq \varnothing \Rightarrow p_\sigma \in \de M$,
\item $$c_\sigma \in \begin{cases} 
\Tilde{\mathsf{C}}_{S_\sigma}(T_{p_\sigma}M)  & p_\sigma \in M \setminus \de M \\  \Tilde{\mathsf{C}}_{S \cap S_\sigma, T \cap S_\sigma}(\mathbb{H}(T_{p_\sigma}M)) & p_\sigma \in \de M\end{cases}$$
\end{enumerate}
where $\mathbb{H}(T_{p_\sigma}M) \subset T_{p_\sigma}M$ denotes the inward half-space in $T_{p_\sigma}M$. Here, for a vector space $X$ and a half-space $H \subset X$, $\Tilde{\mathsf{C}}_{\varnothing,\{pt\}}(H) := \Tilde{\mathsf{C}}_{\{pt\},\varnothing}(H) := \{pt\}$, and for $\vert S \sqcup T\vert \geq 2$, 
$$\Tilde{\mathsf{C}}_{S,T}(H) := \coprod_{\substack{\mathfrak{P}=\{S_1,\ldots,S_k\}\\ S\sqcup T = \sqcup_\sigma S_\sigma, k \geq 2 }}\left\lbrace (v_\sigma,c_\sigma)\ \bigg|\ (v_\sigma, c_\sigma)\ \mathfrak{P}\text{-collapsed $(S,T)$-configuration in $H$} \right\rbrace \bigg/ (\de H \times \R_{>0}) $$

\end{defn}
\begin{defn}[FMAS compactification for manifolds with boundary]
We define the  \emph{compactification} $\mathsf{C}_{S,T}(M,\de M)$ of $\mathsf{Conf}_{S,T}(M,\de M)$ by 
\begin{equation}
\mathsf{C}_{S,T}(M,\de M) = 
\coprod_{\substack{ \mathfrak{P} =\{S_1,\ldots,S_k\} \\ S\sqcup T= \sqcup_\sigma S_\sigma}}\left\lbrace \left(p_\sigma , c_\sigma\right)_{1\leq \sigma \leq k}\ \bigg|\ (p_\sigma,c_\sigma)\  \mathfrak{P}\text{-collapsed $(S,T)$-configuration}\right\rbrace
\end{equation}

\end{defn}
 Again, this is a manifold with corners and is compact if $M$ is compact. We proceed to describe the strata of low codimension. Let $U= \{u_1,\ldots,u_k\}, V = \{v_1,\ldots,v_k\}.$ The codimension 0 stratum again is given by the partition $\mathfrak{P} = \{\{u_1\},\ldots,\{u_k\},\{v_1\},\ldots, \{v_\ell\}\}.$ Let us describe the strata of codimension 1. We denote by $\de^\textnormal{I}_S\mathsf{C}_{U,V}(M,\de M)$ a boundary stratum where a subset $S \subset U$ collapses in the bulk, described in the same way as above.  On manifolds with boundary, there are new boundary strata in the compactified configuration space given by the collapse of a subset of points to a point in the boundary. Concretely, given a subset $S=\{u_1,\ldots,u_{k'},v_1,\ldots,v_{\ell'}\} \subset U \sqcup V$, there is a boundary stratum $\de^{\textnormal{II}}_S\mathsf{C}_{U,V}(M,\de M)$ corresponding to the partition $\mathfrak{P}=\{S,\{u_{k'+1}\},\ldots, \{u_k\},\{v_{\ell'+1}\},\ldots,\{v_\ell\}\}$ and collapsed configurations $(p_\sigma,c_\sigma)$ with $p_{\sigma} \in \de M$ and $c_\sigma$ corresponding to the partition $\mathfrak{P'} = \{\{u_1\},\ldots,\{u_k\},\{v_1\},\ldots,\{v_\ell\}\}$. The boundary decomposes as 
\begin{equation}
\de \mathsf{C}_{U,V}(M,\de M) = \coprod_{S \subseteq U} \de^\textnormal{I}_S\mathsf{C}_{U,V}(M,\de M) \amalg\coprod_{S \subseteq U \sqcup V} \de^{\textnormal{II}}_S\mathsf{C}_{U,V}(M,\de M).
\end{equation}

All the computations are done on the level of Feynman graphs, so we want to give a definition of what a graph means for us:

\begin{defn}[Graph]
An (oriented) graph $\Gamma$ is an (ordered) pair $(V(\Gamma),E(\Gamma))$, where $V(\Gamma)$ is a finite set and $E(\Gamma)$ is a multiset over the direct product $V(\Gamma)\times V(\Gamma)$, i.e. a map $E(\Gamma)\colon V(\Gamma)\times V(\Gamma)\to \N$ (respectively a multiset over any subset $W$ of $V(\Gamma)$ with two elements, i.e. a map $E(\Gamma)\colon W\to \N$). The elements of $E(\Gamma)$ are called the \emph{edges}- and the elements of $V(\Gamma)$ are called the \emph{vertices} of $\Gamma$. We denote by $\mathcal{G}_{n}$ the set of all graphs with $n$ vertices.
\end{defn}

\begin{rem}
Note that we also allow \emph{multiple edges}, i.e. we can have the same edge $e=(v_1,v_2)$ twice in $E(\Gamma)$, with $v_1,v_2\in V(\Gamma)$. This definition would also allow \emph{tadpoles (short loops)}, i.e. edges $e=(v,v)\in E(\Gamma)$ connecting the vertex $v\in V(\Gamma)$. What this definition excludes, are graphs with \emph{legs}. Examples of these graphs are given in Figure \ref{ex_graphs}.
\end{rem}

\begin{figure}[h!]
\begingroup%
  \makeatletter%
  \providecommand\color[2][]{%
    \errmessage{(Inkscape) Color is used for the text in Inkscape, but the package 'color.sty' is not loaded}%
    \renewcommand\color[2][]{}%
  }%
  \providecommand\transparent[1]{%
    \errmessage{(Inkscape) Transparency is used (non-zero) for the text in Inkscape, but the package 'transparent.sty' is not loaded}%
    \renewcommand\transparent[1]{}%
  }%
  \providecommand\rotatebox[2]{#2}%
  \ifx\svgwidth\undefined%
    \setlength{\unitlength}{250.58172156bp}%
    \ifx\svgscale\undefined%
      \relax%
    \else%
      \setlength{\unitlength}{\unitlength * \real{\svgscale}}%
    \fi%
  \else%
    \setlength{\unitlength}{\svgwidth}%
  \fi%
  \global\let\svgwidth\undefined%
  \global\let\svgscale\undefined%
  \makeatother%
  \begin{picture}(1,0.39540534)%
    \put(0,0){\includegraphics[width=\unitlength]{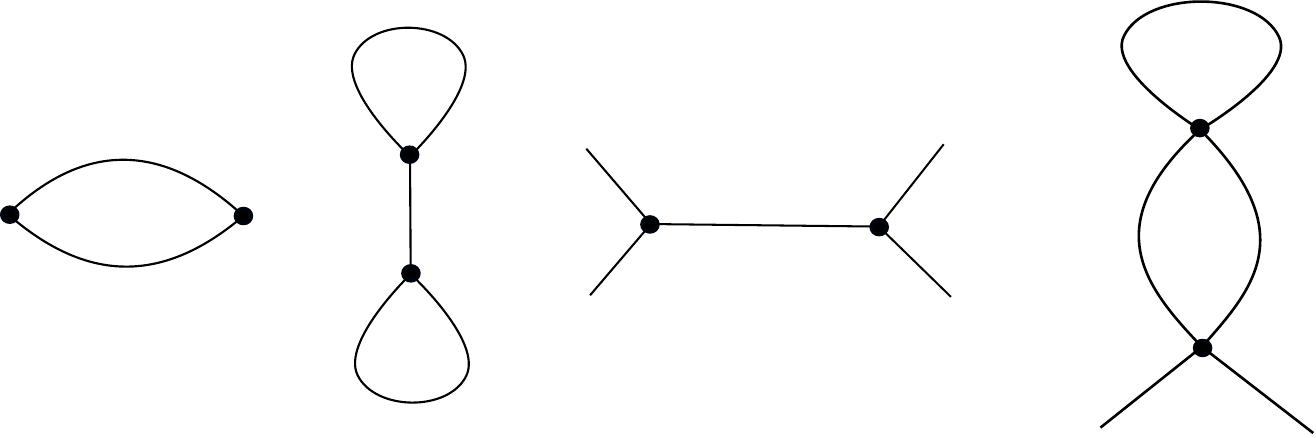}}%
  \end{picture}%
\endgroup%
\caption{Examples of different graphs. The left one is an example of a graph with multiple edges, but with no tadpoles or legs, the second one from left is an example of a graph with tadpoles, but no multiple edges and legs, the second one from right is a graph with legs, but no multiple edges and tadpoles, and the right one is a graph with multiple edges, tadpoles and legs. Note that here we do not consider any orientation.}
\label{ex_graphs}
\end{figure}


\begin{rem}
Note that in application to perturbative QFT, we are interested in \emph{Feynman graphs}, which are graphs where each vertex $v$ represents the interaction term of the given action and each edge between two such interactions represents a \emph{propagator} (a.k.a Green's function, a.k.a integral kernel) which captures the free-theory evolution of one state into another. 
We want the vertex set of a graph to be a discrete subset of a manifold (possibly with boundary). Since the propagator is singular on the diagonal we need to work with configuration spaces for Feynman graphs. 
We denote the set of graphs with $n$ vertices in the bulk and $m$ vertices on the boundary by $\calG_{n,m}$.
\end{rem}

\begin{notation}
For a manifold $M$ without boundary, we will denote the compactified configuration space of $n$ points  $\mathsf{C}_{[n]}(M)$ on $M$ by $\mathsf{C}_n(M)$ (here $[n] = \{1,\ldots, n\}$). Moreover, for a manifold $M$ with boundary, we denote the compactified configuration space  $\mathsf{C}_{n,m}(M)$ of $n$ points on the bulk of $M$ and $m$ points on the boundary $\de M$ of $M$ by $\mathsf{C}_{n,m}(M,\de M)$. We will also write $\mathsf{C}_\Gamma(M)$ for $\mathsf{C}_{n,m}(M,\de M)$, if $\Gamma$ is a graph with $n+m$ vertices, $n$ vertices in the bulk of $M$ and $m$ vertices on $\de M$. 
\end{notation}

Perturbatively, we can define the \emph{quantum state} to be given as 
\begin{equation}
\label{quantum_state}
\psi_M=\int_{\calL\subset \calY}\ee^{\frac{\I}{\hbar}\calS_M}\in\calH^\calP_{\partial M},
\end{equation}
where $\calL\subset\calY$ is a Lagrangian submanifold of $\calY$. This means that we define \eqref{quantum_state} as the formal power series expansion in $\hbar$ with coefficients given in terms of Feynman diagrams, i.e. we can write $\psi_M$ asymptotically as 
\begin{equation}
\label{expansion}
\psi_M\approx \sum_{k\geq 0}\sum_{\Gamma\in\mathcal{G}\atop \text{Feynman graph}}W(\Gamma)\hbar^k=\dotsm +W\left(\begin{gathered}\includegraphics[scale=0.9]{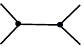}
\end{gathered}\right)\hbar^{k_1}+\dotsm +W\left(\begin{gathered}\includegraphics[scale=0.5]{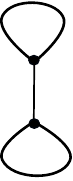} \end{gathered}\right)\hbar^{k_2}+\dotsm 
\end{equation}

where $W(\Gamma)\in \R$ denotes the \emph{weight} of the Feynman diagram $\Gamma$, which depends on the fields in $\calF_M$ (of both factors in the split), and $k_1,k_2\geq 0$ with $k_1<k_2$. Here we write $\approx$ for equality up to some constants.
In the \emph{effective low energy theory}, we can express \eqref{quantum_state} to be 
\begin{equation}
\label{effective_state}
\psi_M\approx \exp\left(\frac{\I}{\hbar}\sum_{\Gamma\in\mathcal{G}_{n,m}\atop \Gamma \text{ connected}}\hbar^{\ell(\Gamma)}\int_{\mathsf{C}_\Gamma(M)}\omega_\Gamma\right),
\end{equation}
where $\ell(\Gamma)$ denotes the number of loops of $\Gamma$, $\vert E(\Gamma)\vert$ denotes the amount of edges of $\Gamma$.
Moreover, $\omega_\Gamma$ is a differential form\footnote{This is for the case of a TQFT à la Schwarz. In general, the $\omega_\Gamma$s are distributions and their products and integration must be regularized (and renormalized).} on $\mathsf{C}_\Gamma(M)$, given as a polynomial in both field parts of the splitting \eqref{split1} (actually, only of a certain part of $\calY$ as we will see). The integration then corresponds to the weight as in \eqref{expansion}, i.e. 
\[
W(\Gamma)=\int_{\mathsf{C}_\Gamma(M)}\omega_\Gamma.
\]
We left away the usual constant coefficients in the formulae for simplicity.

We denote by $\calS^{\text{eff}}_M$ the \emph{low energy effective action}, which is given by the logarithm
\begin{equation}
\label{effective_action}
\calS^{\text{eff}}_M:\approx\sum_{\Gamma\in\mathcal{G}_{n,m}\atop \Gamma \text{ connected}}\hbar^{\ell(\Gamma)}\int_{\mathsf{C}_\Gamma(M)}\omega_\Gamma.
\end{equation}

\begin{rem}
Recall that the effective action $\calS^{\text{eff}}_M$, is given by \eqref{effective_action}, since it is defined by considering only connected Feynman diagrams in the perturbative expansion. 
\end{rem}

\subsubsection{Heuristic proof of the modified Quantum Master Equation}
We want to give a heuristic proof sketch of the Quantum Master Equation on manifolds with boundary.
Using again the splitting $\calF=\calB\oplus \calY$ of the space of fields, we can rewrite the mCME as  
\begin{align}
\delta_\calY\calS&=\iota_{Q_\calY}\omega,\\
\delta_{\calB}\calS&=-\alpha^\de.
\end{align}
where we have dropped the $\pi^*$. The two equations imply 
\begin{equation}
\label{eq01}
\frac{1}{2}(\calS,\calS)_\calY=\frac{1}{2}\iota_{Q_\calY}\iota_{Q_\calY}\omega=\calS^\de.
\end{equation}
Now assume that we have adapted Darboux coordinates $(b,p)$ on $\calF^\de$ with $b$ on $\calB$, $p$ on the leaves and $\alpha^\de=-\sum_ip_i\delta b^{i}$. Then the second equation implies 
\begin{equation}
\label{eq03}
\frac{\delta\calS}{\delta b}=p.
\end{equation}
This means that, in this splitting, $\calS$ is linear in the $b$'s. We now assume that $\calS$ also solves the equation $\Delta_\calY\calS=0$. 
\begin{rem}
Without boundary this means that we assume that $\calS$ solves both the Classical and the Quantum Master Equation. With boundary, $\Delta$ makes sense only on the $\calY$-factor. 
\end{rem}
Then we get 
$$\Delta_\calY\ee^{\frac{\I}{\hbar}\calS}=\left(\frac{\I}{\hbar}\right)^2\frac{1}{2}(\calS,\calS)_\calY\ee^{\frac{\I}{\hbar}\calS},$$
and equation \eqref{eq01} implies 
\begin{equation}
\label{eq02}
-\hbar^2\Delta_\calY\ee^{\frac{\I}{\hbar}\calS}=\calS^\de \ee^{\frac{\I}{\hbar}\calS}.
\end{equation}
Now we move to quantization. Take $\calH$ to be an appropriate space of functions on $\calB$. Equation \eqref{eq03} essentially says that 
$$\Hat{p}\calS=-\I\hbar p,\quad \text{with}\quad \Hat{p}=-\I\hbar\frac{\delta}{\delta b}.$$
\begin{rem}
Here $\calS$ is an element of $\calH$ parametrized by $\calY$. The $p$ appearing in the equation is now an element of $\calY$.
\end{rem}

If we quantize $\calS^\de$ by the Schr\"odinger prescription, we get
$$\Omega:=\calS^\de\left(b,-\I\hbar\frac{\delta}{\delta b}\right)$$
with all derivatives placed to the right, so 
\begin{equation}
\label{eq04}
\Omega\ee^{\frac{\I}{\hbar}\calS}=\calS^\de \ee^{\frac{\I}{\hbar}\calS}.
\end{equation}
Putting \eqref{eq02} and \eqref{eq04} we finally get the \emph{modifed Quantum Master Equation (mQME)}
\begin{equation}
(\hbar^2\Delta_\calY+\Omega)\ee^{\frac{\I}{\hbar}\calS}=0.
\end{equation}

\begin{rem}
The assumption $\Delta_\calY\calS=0$ is not really necessary (and is often not justified). More generally, we have 
$$\Delta_\calY\ee^{\frac{\I}{\hbar}\calS}=\left(\left(\frac{\I}{\hbar}\right)\Delta_\calY\calS+\left(\frac{\I}{\hbar}\right)^2\frac{1}{2}(\calS,\calS)_\calY\right)\ee^{\frac{\I}{\hbar}\calS}.$$
If we define 
$$\calS^\de_\hbar:=\frac{1}{2}(\calS,\calS)_\calY-\I\hbar\Delta_\calY\calS=\calS^\de+O(\hbar),$$
and $\Omega$ to be the Schr\"odinger quantization (canonical quantization) of $\calS^\de_\hbar$, we recover the mQME.
By construction we have $\Delta_\calY^2=0$ and $[\Delta_\calY,\Omega]=0$. The operator $\Omega_\calY:=\hbar^2\Delta_\calY+\Omega$ appearing in the mQME then squares to zero if and only if $\Omega^2=0$. The existence of a splitting such that this holds is a fundamental condition (absence of anomalies) which allows passing to the $\Omega_\calY$-cohomology. Cohomology in degree zero describes $\calY$-parametrized physical states.
\end{rem}

\subsubsection{$BF$-like Theories}
Let us from now on consider the case of $BF$-like theories as in the following definition.

\begin{defn}[$BF$-like theories]
\label{BF_like}
We say that a BV-BFV theory is \emph{$BF$-like} if
\begin{align*}
\calF_M &= (\Omega(M)\otimes V[1])\oplus (\Omega(M)\otimes V^*[(\dim M)-2])\ni (\mathsf{A},\mathsf{B})\\
\calS_M &= \int_M\left(\langle\mathsf{B},\dd\mathsf{A}\rangle+\calV(\mathsf{A},\mathsf{B})\right),
\end{align*}
where $V$ is a graded vector space, $\langle\enspace,\enspace\rangle$ denotes the pairing between $V^*$ and $V$, and $\calV$ denotes some density-valued function of the fields $\mathsf{A}$ and $\mathsf{B}$,
such that $\calS_M$ satisfies the Classical Master Equation for $M$ without boundary.
\end{defn}

\begin{rem}
Examples of $BF$-like theories are abelian $BF$ theory, the Poisson Sigma Model, Chern--Simons theory, or the B-model. 
\end{rem}

\begin{rem}
For $BF$-like theories we split the boundary $\de M$ into the disjoint union of two boundary components, i.e. $\de M=\de_1M\sqcup \de_2M$, where we set the $\frac{\delta}{\delta \mathsf{B}}$-polarization on $\de_1M$ and the $\frac{\delta}{\delta \mathsf{A}}$-polarization on $\de_2M$. Moreover denote the split as in \eqref{split1} of a field by 
$$(\mathsf{A},\mathsf{B})=(\mathbb{A},\mathbb{B})\oplus(\Hat{\mathsf{A}},\Hat{\mathsf{B}}).$$
\end{rem}

\subsubsection{Boundary terms}
\label{boundary_terms}
A splitting as in \eqref{split1} leads to a fiberwise version of the mCME (equation \eqref{mCME}). Thus, the exponential of the action is only $\Delta$-closed up to some boundary terms that can be summarized as the action of a differential operator $\Omega_{\partial M}^\calP$ on $\calB_{\partial M}^\calP$ that \emph{quantizes} the boundary action $\calS^\partial_{\partial M}$. We call $\Omega_{\de M}^\calP$ the \emph{BFV boundary operator}.
In an ideal situation (e.g. for abelian $BF$ theory (\cite{CMR2})) one can consider it as the \emph{standard quantization} of the boundary action. Namely, for a boundary field $b\in\calB_{\partial M}^\calP$, we can consider its canonical quantization $-\I\hbar\frac{\delta}{\delta b}$, and hence we get 
\begin{equation}
\label{BFV_op}
\Omega^\calP_{\partial M}:=\calS^\partial_{\partial M}\left(b,-\I\hbar\frac{\delta}{\delta b}\right).
\end{equation}
More concrete, 
$$\Omega^\calP_{\de M}=(-1)^{\dim M}\I\hbar\int_{\de M}\left(\dd\mathbb{A}\frac{\delta}{\delta \mathbb{A}}+\dd\mathbb{B}\frac{\delta}{\delta\mathbb{B}}\right).$$

\begin{rem}
\label{boundary_integrals}
In general this is not true. In particular, $\Omega^\calP_{\de M}$ is given by Feynman graphs collapsing on the boundary of the cmpactified configuration space as defined in \ref{FMAS} (see Figure \ref{collapsing} for an illustration). 
Let $\Gamma$ be a Feynman graph and $\omega_\Gamma$ the corresponding differential form over the compactified configuration space $\mathsf{C}_\Gamma(M)$. Consider Stokes' theorem $\int_{\mathsf{C}_\Gamma(M)}\dd \omega_\Gamma=\int_{\de \mathsf{C}_\Gamma(M)}\omega_\Gamma$. The left hand side contains terms where $\dd$ acts on $\mathbb{A}$ and $\mathbb{B}$ and terms where $\dd$ acts on the propagator. The former corresponds to the action of $\frac{1}{\I\hbar}\Omega^\calP_0$, where $\Omega^\calP_0$ denotes the standard quantization, the latter when summed over graphs $\Gamma$ assemble to the action of $-\I\hbar\Delta_{\calV_M^\calP}$ on the state. The right hand side contains three classes of terms:
\begin{itemize}
\item{Integrals over boundary components where two vertices collapse in the bulk. The combination of the Feynman diagrams in the expansion ensures that these terms cancel out when we sum over all the diagrams.\footnote{This cancellation relies on the assumption that the perturbed action satisfies the CME, which is equivalent to $\sum_i \pm \frac{\delta}{\delta \mathsf{A}_i}\calV(\mathsf{A},\mathsf{B})\cdot \frac{\delta}{\delta \mathsf{B}_i}\calV(\mathsf{A},\mathsf{B})=0$, which in turn implies a relation on contractions of pairs of vertex tensors.}
}
\item{Integrals over boundary components where more than two vertices collapse in the bulk (``hidden faces''). The usual arguments—vanishing theorems—ensure the vanishing of all these terms apart, possibly, for faces where all the vertices of a connected component of a graph collapse. In all the above mentioned theories, with the exception of Chern–Simons theory, also these terms vanish. In Chern–Simons theory, they may possibly survive, but can be compensated by a framing dependent term.
}
\item{Terms where two or more (bulk and/or boundary) vertices collapse together at the boundary or a single bulk vertex hits the boundary. The integral on such a boundary face splits into an integral over a subgraph $\Gamma'$ of $\Gamma$ corresponding to the collapsed vertices and an integral over $\Gamma/\Gamma'$, the graph obtained by identifying all the vertices in $\Gamma'$ and forgetting the edges inside $\Gamma'$. We define the action of $\frac{\I}{\hbar}\Omega_{\textnormal{pert}}^\calP$, which is the additional term appearing besides $\Omega_0^\calP$, by the sum of the boundary contributions of the $\Gamma'$s. If we now sum ovr all graphs $\Gamma$, all these terms will give $\frac{\I}{\hbar}\Omega_{\textnormal{pert}}^\calP$ applied to the state.
}
\end{itemize}
Hence the BFV boundary operator is given by $\Omega_{\de M}^\calP=\Omega_0^\calP+\Omega_{\textnormal{pert}}^\calP$.
\end{rem}

As a consequence of Remark \ref{boundary_integrals} we get 
\begin{equation}
\label{mQME1}
(\hbar^2\Delta_{\calV_M^\calP}+\Omega_{\partial M}^\calP)\ee^{\frac{\I}{\hbar}\calS_M}=0.
\end{equation}
Moreover, we assume that $\Omega_{\partial M}^\calP$ squares to zero, which corresponds to the fact that the theory does not have any anomalies. 

\begin{center}
\begin{figure}[h!]

\begingroup%
  \makeatletter%
  \providecommand\color[2][]{%
    \errmessage{(Inkscape) Color is used for the text in Inkscape, but the package 'color.sty' is not loaded}%
    \renewcommand\color[2][]{}%
  }%
  \providecommand\transparent[1]{%
    \errmessage{(Inkscape) Transparency is used (non-zero) for the text in Inkscape, but the package 'transparent.sty' is not loaded}%
    \renewcommand\transparent[1]{}%
  }%
  \providecommand\rotatebox[2]{#2}%
  \ifx\svgwidth\undefined%
    \setlength{\unitlength}{200.07926837bp}%
    \ifx\svgscale\undefined%
      \relax%
    \else%
      \setlength{\unitlength}{\unitlength * \real{\svgscale}}%
    \fi%
  \else%
    \setlength{\unitlength}{\svgwidth}%
  \fi%
  \global\let\svgwidth\undefined%
  \global\let\svgscale\undefined%
  \makeatother%
  \begin{picture}(1,0.69254713)%
    \put(0,0){\includegraphics[width=\unitlength]{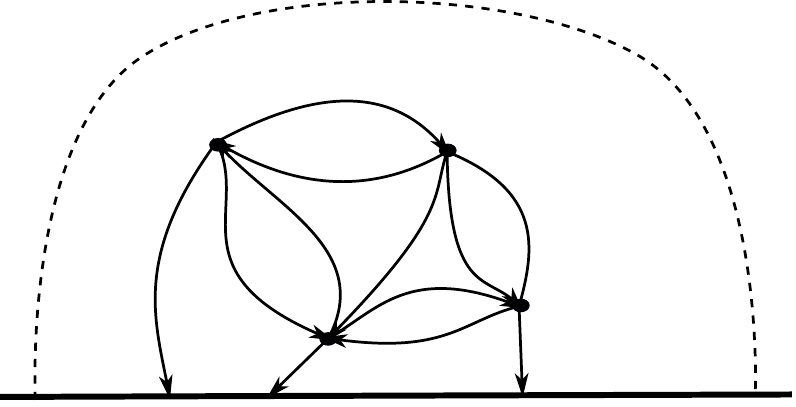}}%
    \put(0.79566662,0.0582663){\color[rgb]{0,0,0}\makebox(0,0)[lb]{\smash{$\partial M$}}}%
  \end{picture}%
\endgroup%

\caption{An example of a graph in $\mathcal{G}_{4,3}$ in the bulk collapsing to the boundary of $M$, which contributes to a term in $\Omega^\calP_{\partial M}$. The dashed semi-circle represents the collapsing of the graph.}
\label{collapsing}
\end{figure}
\end{center}

\begin{prop} 
\label{closed}
we have 
\begin{equation}
\Omega^\calP_{\partial M}\psi_M=0, 
\end{equation}
\end{prop}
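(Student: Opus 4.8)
The plan is to reduce the statement to the field-theoretic version of the Batalin--Vilkovisky theorem (Theorem~\ref{BVthm}) by trading the action of the boundary operator for the action of the bulk BV Laplacian, using the modified Quantum Master Equation \eqref{mQME1}.

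First I would observe that, by its construction in \eqref{BFV_op}, the BFV boundary operator $\Omega^\calP_{\partial M}$ is a differential operator acting solely on the boundary fields $b\in\calB^\calP_{\partial M}$, i.e. on the base of the splitting \eqref{split1}. On the other hand, the state $\psi_M$ is defined in \eqref{quantum_state} by integrating $\ee^{\frac{\I}{\hbar}\calS_M}$ over a Lagrangian submanifold $\calL$ of the fibre $\calY$. Since the base variables are independent of the fibre coordinates over which the integration is carried out, $\Omega^\calP_{\partial M}$ commutes with $\int_{\calL\subset\calY}$ and may be brought inside the integral:
\[
\Omega^\calP_{\partial M}\psi_M=\int_{\calL\subset\calY}\Omega^\calP_{\partial M}\ee^{\frac{\I}{\hbar}\calS_M}.
\]
Next I would invoke the mQME \eqref{mQME1}, which expresses the boundary operator acting on the integrand in terms of the bulk Laplacian, $\Omega^\calP_{\partial M}\ee^{\frac{\I}{\hbar}\calS_M}=-\hbar^2\Delta_{\calV_M^\calP}\ee^{\frac{\I}{\hbar}\calS_M}$, so that the claim reduces to
\[
\Omega^\calP_{\partial M}\psi_M=-\hbar^2\int_{\calL\subset\calY}\Delta_{\calV_M^\calP}\ee^{\frac{\I}{\hbar}\calS_M}.
\]

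Finally, the integrand on the right is $\Delta_{\calV_M^\calP}$-exact, where $\Delta_{\calV_M^\calP}$ is precisely the BV Laplacian on the bulk factor $\calY$. By the first part of the Batalin--Vilkovisky theorem (Theorem~\ref{BVthm}(1))---or, in the split form relevant here, by its fibrewise version Proposition~\ref{fiber_int}(1)---the integral over the Lagrangian submanifold $\calL\subset\calY$ of a $\Delta_{\calV_M^\calP}$-exact quantity vanishes, which gives $\Omega^\calP_{\partial M}\psi_M=0$. This is exactly the statement that the BV pushforward is a chain map intertwining $\Delta_{\calV_M^\calP}$ and $\Omega^\calP_{\partial M}$, as anticipated in Remark~\ref{cohom_class}.

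The hard part will be the rigorous justification of this last step in the infinite-dimensional, perturbative setting, where $\psi_M$ is only defined through its Feynman-diagram expansion \eqref{expansion} and Theorem~\ref{BVthm} does not apply verbatim. There the vanishing of $\int_{\calL}\Delta_{\calV_M^\calP}(\,\cdot\,)$ must be read off from Stokes' theorem on the compactified configuration spaces $\mathsf{C}_\Gamma(M)$: the left-hand side of $\int_{\mathsf{C}_\Gamma(M)}\dd\omega_\Gamma=\int_{\partial\mathsf{C}_\Gamma(M)}\omega_\Gamma$ assembles the action of $\hbar^2\Delta_{\calV_M^\calP}+\Omega^\calP_{\partial M}$, while the boundary contributions on the right are controlled by the three classes of faces analysed in Remark~\ref{boundary_integrals}. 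The genuinely delicate points are therefore the vanishing of the hidden-face contributions and the assumption, also needed here, that the CME holds in the bulk so that the collapse of two bulk vertices cancels upon summation over graphs; granting these (as is done in the cited theories), the formal manipulation above is vindicated.
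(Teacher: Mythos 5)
Your argument reproduces the paper's own formal sketch essentially verbatim: commute $\Omega^\calP_{\partial M}$ with the Lagrangian integral, trade it for $-\hbar^2\Delta$ via the mQME \eqref{mQME1}, and conclude that the remaining integral vanishes. The only inessential divergence is the justification of the last step, where the paper invokes the QME \eqref{QME} (i.e.\ $\Delta\,\ee^{\frac{\I}{\hbar}\calS_M}=0$ pointwise) while you appeal to Theorem \ref{BVthm}(1) and Proposition \ref{fiber_int}; both are available under the paper's standing assumptions, and your closing remarks on Stokes' theorem over the compactified configuration spaces correctly identify where the genuine analytic content lies, in agreement with Remark \ref{boundary_integrals}.
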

\begin{proof}[Formal sketch of the proof]
One can look at this formally using \eqref{expansion}:
$$\Omega^\calP_{\partial M}\psi_M=\Omega_{\partial M}^\calP\int_{\calL}\ee^{\frac{\I}{\hbar}\calS_M}=\int_{\calL}\Omega_{\partial M}^\calP\ee^{\frac{\I}{\hbar}\calS_M}=-\hbar^2\int_\calL \Delta\ee^{\frac{\I}{\hbar}\calS_M}=0,$$
where we have used \eqref{mQME1} and \eqref{QME} (i.e. $\Delta \ee^{\frac{\I}{\hbar}\calS_M}=0$).
\end{proof}

\begin{rem}
One can also check that gauge fixing (i.e. deforming the Lagrangian submanifold) changes the state $\psi_M$ by an $\Omega^\calP_{\partial M}$-exact term. Moreover, Proposition \ref{closed} tells us that $\psi_M$ defines a class in the space $H^0_{\Omega^\calP_{\partial M}}(\calH^\calP_{\partial M})$.
\end{rem}

\subsubsection{Residual fields}
The perturbative definition of the path integral is defined by perturbing around a quadratic theory, i.e. by classical solutions of the quadratic part of the action $\calS_M$. We denote by $\calV_M^\calP$ the space of critical points of the quadratic part relative to the boundary polarization $\calP$ modulo symmetries. We call $\calV_M^\calP$ the space of \emph{residual fields} (also called \emph{low energy fields}, \emph{zero modes}, etc.).

\begin{rem}
The space $\calV_M^\calP$ is a finite-dimensional space, which, for $BF$-like theories, can be computed by relative cohomology, i.e.
$$\calV_M^\calP=\bigoplus_{k\geq 0}(H^k(M,\de_1M)\otimes V[1])\oplus(H^k(M,\de_2M)\otimes V^*[(\dim M)-2]).$$
Moreover, note that $\calV_M^\calP$ depends only on the bulk fields.
\end{rem}

By the structure of a cobordism, we expect the state $\psi_M$ not only to depend on the boundary fields but also on the residual fields on the bulk, which is encoded in a certain cohomology structure, since the bulk is different for different amount of genera. This was also described in the caption of Figure \ref{higher_surface}. We assume another symplectic splitting of the bulk part (see equation \eqref{split1}) by
\begin{equation}
\label{split2}
\calY=\calV_M^\calP\oplus \calY'.
\end{equation}
Thus, the state can be defined as a BV pushforward
\begin{equation}
\label{state_final}
\psi_M=\int_{\calL\subset \calY'}\ee^{\frac{\I}{\hbar}\calS_M}\in\calH^\calP_{\partial M}\otimes \mathcal{Z}_M^\calP,
\end{equation}
where $\mathcal{Z}_M^\calP$ is the space of functions on $\calV_M^\calP$ and $\calL\subset \calY'$ is a Lagrangian submanifold of $\calY'$.
\begin{rem}
In fact, $\mathcal{Z}^\calP_M$ is given by \emph{formal half-densities} on $\calV^\calP_M$, which we denote by $\Dens^\frac{1}{2}(\calV^\calP_M)$. Thus the \emph{space of states} is given by $\calH^\calP_M:=\calH_{\partial M}^\calP\otimes \Dens^\frac{1}{2}(\calV^\calP_M)$.
\end{rem}
By Proposition \ref{fiber_int} and Remark \ref{cohom_class}, we can consider the part of the BV Laplacian, denoted by $\Delta_{\calV^\calP_M}$, which acts on the space $\mathcal{Z}^\calP_M$. Then if $\psi_M$ statisfies the QME, we see that there is a well-defined $\Delta_{\calV^\calP_M}$-cohomology of $\psi_M$, and hence gives us elements in $\calV^\calP_M$. Finally, we can summarize the condition for gauge independence by the \emph{modified Quantum Master Equation (mQME)}
\begin{equation}
\label{mQME}
(\hbar^2\Delta_{\calV^\calP_M}+\Omega_{\partial M}^\calP)\psi_M=0.
\end{equation}

\subsection{Gluing}
Assume we have two $d$-manifolds $M_1$ and $M_2$, which we want to glue together by a common boundary component $\Sigma$ to a manifold $M$, i.e. 
$$M=M_1\cup_\Sigma M_2.$$
The glued state $\psi_M$ can then be obtained from the states $\psi_{M_1}$ and $\psi_{M_2}$ by a BV pushforward 
\begin{equation}
\label{Gluing_BV}
\mathscr{P}_{BV}\colon (\psi_{M_1},\psi_{M_2})\mapsto \psi_M=\mathscr{P}_{BV}(\langle\psi_{M_1},\psi_{M_2}\rangle_{\calH_{\Sigma}^\calP}),
\end{equation}
where $\mathscr{P}_{BV}$ is the BV-pushforward with respect to the odd-symplectic fibration of residual fields $\mathscr{P}\colon \calV^\calP_{M_1}\oplus \calV_{M_2}^\calP\to \calV^\calP_M$ and where $\langle\enspace,\enspace\rangle_{\calH_{\Sigma}^\calP}$ denotes the pairing on $\calH_{\Sigma}^\calP$. This is illustrated in Figure \ref{gluing}. Hence we can describe the axiomatics of perturbtaive quantum gauge theories on manifolds with boundary by the following definiton:
\begin{defn}[Axiomatics]
The \emph{axiomatics} of a perturbative quantum gauge theory on a manifold $M$ with boundary, is given by the following data:
\begin{itemize}
\item{To each $(d-1)$-manifold $\Sigma$ (e.g. $\partial M$) we associate a \emph{chain complex} $(\calH_\Sigma^\calP,\Omega_\Sigma^\calP)$.}
\item{To each $d$-manifold $M$ we associate a finite-dimensional BV manifold $\calV_M^\calP$, the space of residual fields.
}
\item{To each $d$-manifold $M$ we associate a state $\psi_M$ satisfying the \emph{mQME}.}
\item{Gluing is given by pairing states and performing a \emph{BV pushforward} as in \eqref{Gluing_BV}.
}

\end{itemize}
\end{defn}


\section{Globalization in the BV-BFV Formalism}
In this section we focus on a particular class of topological BV theories arising from the AKSZ construction \cite{AKSZ} (see also \cite{BCM}). One can generalize the construction of the previous section for nonlinear split AKSZ theories by adding a background field (\cite{CMW4,CMW3}).

One can generalize the above construction for nonlinear split AKSZ theories by adding a background field (\cite{CMW4,CMW3}). This gives a formal globalization theory and the gauge fixing is taken care of by the globalized version of the mQME, which is called the modified ``differential'' QME (\cite{CMW4}). This is done by considering techniques of formal geometry (\cite{B,GK}). 
The globalization of Kontsevich's star product and the Poisson Sigma Model in the BV formalism can be found in \cite{BCM,CFT}.
In this subsection we want to briefly recall the main results on globalization in the BV-BFV formalism.

\subsection{AKSZ Sigma Models} 
Let us recall the definition of Hamiltonian manifolds and AKSZ Sigma Models.

\begin{defn}[Differential graded symplectic manifold]
A \emph{dg symplectic manifold} of degree $d$ is a graded manifold $\calM$ endowed with a symplectic form $\omega=\dd \alpha$ of degree $d$ and a Hamiltonian function $\Theta$ of degree $d+1$ satisfying $\{\Theta,\Theta\}=0$, where $\{\enspace,\enspace\}$ is the Poisson bracket induced by $\omega$. 
\end{defn}

\begin{rem}
This is sometimes also called a \emph{Hamiltonian manifold}.
\end{rem}

\begin{defn}[AKSZ Sigma Model]
\label{def:AKSZ}
The \emph{AKSZ Sigma Model} with target a Hamiltonian manifold $(\calM,\omega=\dd\alpha,\Theta)$ of degree $d-1$ is the BV theory, which associates to a $d$-manifold $\Sigma$ the BV manifold $(\calF_\Sigma,\omega_\Sigma,\calS_\Sigma)$, where\footnote{This is the infinite-dimensional graded manifold adjoint to the Cartesian product (internal morphisms).} $\calF_\Sigma=\Map(T[1]\Sigma,\calM)$, $\omega_\Sigma$ is of the form $\omega_\Sigma=\int_\Sigma\omega_{\mu\nu}\delta \mathsf{A}^\mu\land \delta\mathsf{A}^\nu$,
and $\calS_\Sigma[\mathsf{A}]=\int_\Sigma\alpha_\mu(\mathsf{A})\dd \mathsf{A}^\mu+\Theta(\mathsf{A})$, where $\mathsf{A}\in\calF_\Sigma$,
$\omega_{\mu\nu}$ are the components of the symplectic form $\omega$, $\alpha_\mu$ are the components of $\alpha$ and $\mathsf{A}^\mu$ are the components of $\mathsf{A}$ in local coordinates. 
\end{defn}

\begin{rem}
Note that an AKSZ Model is a $BF$-like theory as in \ref{BF_like}.
\end{rem}

\begin{rem}
In the case of AKSZ theories, the procedure of Section \ref{local_field_theory} works particularly well and very naturally \cite{CMR2}. Namely, the BFV structure associated to a boundary $\Sigma_1$ is precisely obtained as in Definition \ref{def:AKSZ} by putting $\Sigma_1$ instead of $\Sigma$. Since the dimension of $\Sigma_1$ is one less than that of $\Sigma$, the resulting symplectic form $\omega_{\Sigma_1}$ will have degree $0$ instead of $-1$. More generally, if $\Sigma_k$ is a corner of codimension $k$, its associated BF$^k$V structure is obtained by putting $\Sigma_k$ instead of $\Sigma$ in Definition \ref{def:AKSZ}. This nice structure corresponds in the BV language to the fully extended topological field theories as in \cite{BD,Lurie2009}.
\end{rem}

\begin{rem}
The AKSZ construction plays a prominent role also in derived algebraic geometry \cite{PTVV}, the generalization of the BV and BFV formalisms to the context of algebraic geometry (typically, however, only the nonpositive part of the complex is retained there, whereas the positive part, which describes the symmetry, is replaced by a stacky description). The extended structure found in \cite{CMR1}, see the above remark, can also be incorporated in this setting \cite{Cal}.
\end{rem}

We denote our fields by $(\mathsf{X},\boldsymbol{\eta})\in\calF_\Sigma$ and, assuming a split of the space of fields $$\calF_\Sigma=\calB_{\partial \Sigma}^\calP\oplus \calV_\Sigma^\calP\oplus \calY'$$ as we have seen before, we denote the according split of the fields by
\begin{equation}
(\mathsf{X},\boldsymbol{\eta})=\underbrace{(\mathbb{X},\E)}_{\in \calB^\calP_{\partial \Sigma}}\oplus \underbrace{(\mathsf{x},\mathsf{e})}_{\in\calV^\calP_\Sigma}\oplus \underbrace{(\mathscr{X},\mathscr{E})}_{\in \calY'}
\end{equation}

In \cite{CMW4}, we study the following type of AKSZ Sigma Models.

\begin{defn}[Split AKSZ Sigma Model]
We call an AKSZ Sigma Model \emph{split}, if the target is of the form 
\begin{equation}
\calM=T^*[d-1]M
\end{equation}
with canonical symplectic structure, where $M$ is a graded manifold.
\end{defn}

\subsection{Formal geometry}
We briefly recall the aspects of formal geometry which are most relevant for the constructions of \cite{CMW4}. Let $M$ be a smooth manifold.

\begin{defn}[Generalized exponential map]
A \emph{generalized exponential map} is a map $\varphi\colon U\to M$, where $U\subset TM$ is an open neighborhood of the zero section, such that $\varphi(x,0)=x$ and $\dd\varphi(x,0)=\id _{T_xM}$ for all $x\in M$.
\end{defn}

\begin{rem}
For $x\in M$ and $y\in T_xM\cap U$ we write $\varphi(x,y)=\varphi_x(y)$.
\end{rem}

\begin{defn}[Formal exponential map]
A \emph{formal exponential map} is an equivalence class of generalized exponential maps, where two generalized exponential maps are said to be equivalent if their $y$-jets agree to all orders.
\end{defn}

For a function $f\in C^\infty(M)$, we can produce a section $\sigma\in\Gamma(\Hat{Sym}T^*M)$ by defining 
\begin{equation}
\sigma_x:=\mathsf{T}\varphi_x^*f,
\end{equation}
where $\mathsf{T}$ denotes the Taylor expansion in the fiber coordinates around $y=0$ and $\Hat{\Sym}$ denotes the completed symmetric algebra. Note that we use any representative of $\varphi$ to define the pullback. We denote this section by $\mathsf{T}\varphi^* f$. Moreover, since it only depends on the jets of the representative, it is independent of the choice of representative.

\begin{defn}[Grothendieck connection]
Given a formal exponential map $\varphi$, we can define the associated \emph{Grothendieck connection} $D_\mathsf{G}$ on $\Hat{\Sym}T^*M$, given by $D_\mathsf{G}=\dd+R$, where $\dd$ is the de Rham differential and $R\in \Gamma(T^*M\otimes TM\otimes \Hat{\Sym}T^*M)$ is a $1$-form with values in derivations of $\Hat{\Sym}T^*M$, defined in local coordinates by $R_i\dd x^{i}$ with 
\begin{equation}
\label{vector_field}
R_i(x;y) = \left(\left(\frac{\partial \varphi_x}{\partial y}\right)^{-1}\right)^k_j\frac{\partial\varphi_x^{j}}{\partial x^{i}}\frac{\de }{\de y^k}=: Y^k_i(x;y)\frac{\de}{\de y^k}.
\end{equation}
\end{defn}

\begin{rem}
One can obtain that \eqref{vector_field} does not depend on the choice of coordinates. Moreover, $D_\mathsf{G}$ is a flat connection and a section $\sigma\in\Gamma(\Hat{\Sym}T^*M)$ is flat if and only if it is of the form $\sigma=\mathsf{T}\varphi^* f$ for some $f\in C^\infty(M)$.
\end{rem}

\subsection{Globalized BV-BFV Quantization}
\label{subsec:globalized_BV-BFV_quantization}
Now one can use the constructions above to formulate a globalized quantum state, which we call the full covariant state as in \cite{CMW4}. For this we need to extend the action by a formal globalization term, where we also lift the fields as the pullback of the formal exponential map at a constant field $x\colon \Sigma \to M$. This corresponds to linearizing the space of fields $\calF_\Sigma$ around these constant maps. 

\begin{defn}[Formal globalized action]
\label{formal_globalized_action}
For $(\mathsf{X},\boldsymbol{\eta})\in\calF_\Sigma$, we define the \emph{formal globalized action} by 
\begin{equation}
\label{formal_glob_action}
\Tilde{\calS}_{\Sigma,x}[(\hatX,\hateta)]=\int_\Sigma\left(\Hat{\boldsymbol{\eta}}_i\dd\Hat{\mathsf{X}}^{i}+\mathsf{T}\Tilde{\varphi}_x^*\Theta(\mathsf{X},\boldsymbol{\eta})+Y_i^j(x;\hatX)\hateta_j\dd x^{i}\right),
\end{equation}
where $\Tilde{\varphi}_x\colon \Map(T[1]\Sigma,T^*[d-1]T_xM)\to \Map(T[1]\Sigma,\calM)$ denotes the lift of the formal exponential map $\varphi_x$ for $x\in \Sigma$ and $(\hatX,\hateta)$ is the preimage of $(\mathsf{X},\boldsymbol{\eta})$ under this lift.
\end{defn}

The Feynman rules corresponding to the formal globalized action as in \eqref{formal_glob_action} are given in Figure \ref{fig:FeynmanRules}. 

\begin{figure}
\centering
\subfigure[Interaction vertex]{
\centering
\begin{tikzpicture}
\node[vertex] (o) at (0,0) {};
\node[coordinate, label=below:{$i_1$}] at (30:1) {$i_1$}
edge[fermion] (o);
\node[coordinate, label=above:{$i_2$}] at (60:1) {$i_2$}
edge[fermion] (o);
\node[coordinate, label=below:{$i_k$}] at (145:1) {$i_k$}
edge[fermion] (o);
\draw[dotted] (90:0.5) arc (90:130:0.5);
\node[coordinate, label=below:{$j_1$}] (j1) at (-30:1) {};
\node[coordinate, label=below:{$j_2$}] (j2) at (-60:1) {};
\node[coordinate, label=below:{$j_l$}] (j3) at (-145:1) {};
\draw[dotted] (-90:0.5) arc (-90:-145:0.5); 
\draw[fermion] (o) -- (j1);
\draw[fermion] (o) -- (j2);
\draw[fermion] (o) -- (j3);
\node[coordinate,label=right:{$\leadsto\quad\Theta^{j_1\ldots j_l}_{i_1 \ldots i_k}(x)$}] at (2,0) {};
\end{tikzpicture}
}
\hspace{2cm}
\subfigure[$R$ vertex]{
\centering
\begin{tikzpicture}
\node[vertex,label=left:{$R$}] (o) at (0,0) {};
\node[coordinate, label=below:{$i_1$}] at (30:1) {$i_1$}
edge[fermion] (o);
\node[coordinate, label=above:{$i_2$}] at (60:1) {$i_2$}
edge[fermion] (o);
\node[coordinate, label=below:{$i_k$}] at (145:1) {$i_k$}
edge[fermion] (o);
\draw[dotted] (90:0.5) arc (90:130:0.5);
\node[coordinate, label=below:{$j$}] (j1) at (-90:1) {};
\draw[fermion] (o) -- (j1);
\node[coordinate,label=right:{$\leadsto\quad Y^{j}_{i;i_1\ldots i_k}(x)\dd x^i$}] at (2,0) {};
\end{tikzpicture}
}

\hspace{2cm}
\subfigure[Residual fields]{
\centering
\begin{tikzpicture}
\node[circle,draw,inner sep=1pt] (x) at (0,1) {$\mathsf{x}^i$};
\node[coordinate, label=below:{$i$}] (x2) at (30:2) {$i_1$}
edge[fermion] (x);
\node[coordinate, label=below:{$j$}] (e2)at (-30:2) {$i_1$};
\node[circle,draw,inner sep=1pt] (e) at (0,-1) {$\mathsf{e}_j$}
edge[fermion] (e2);
\node[coordinate, label={${}$}] at (4,0) {};
\end{tikzpicture}
}
\hspace{3cm}
\subfigure[Boundary vertices]{
\begin{tikzpicture}
\draw (-1,1) -- (1,1);
\node[vertex, label=above:{$\mathbb{X}$}] (x) at (0,1) {};
\node[coordinate] (b1) at (0,0) {}; 
\draw[fermion] (x) -- (b1);
\draw (1,-1) -- (3,-1);
\node[vertex, label=below:{$\mathbb{E}$}] (e) at (2,-1) {};
\node[coordinate] (b2) at (2,0) {}; 
\draw[fermion] (b2) -- (e);

\end{tikzpicture}
\hspace{2cm}
}
\caption{Summary of Feynman graphs and rules}\label{fig:FeynmanRules}
\end{figure}
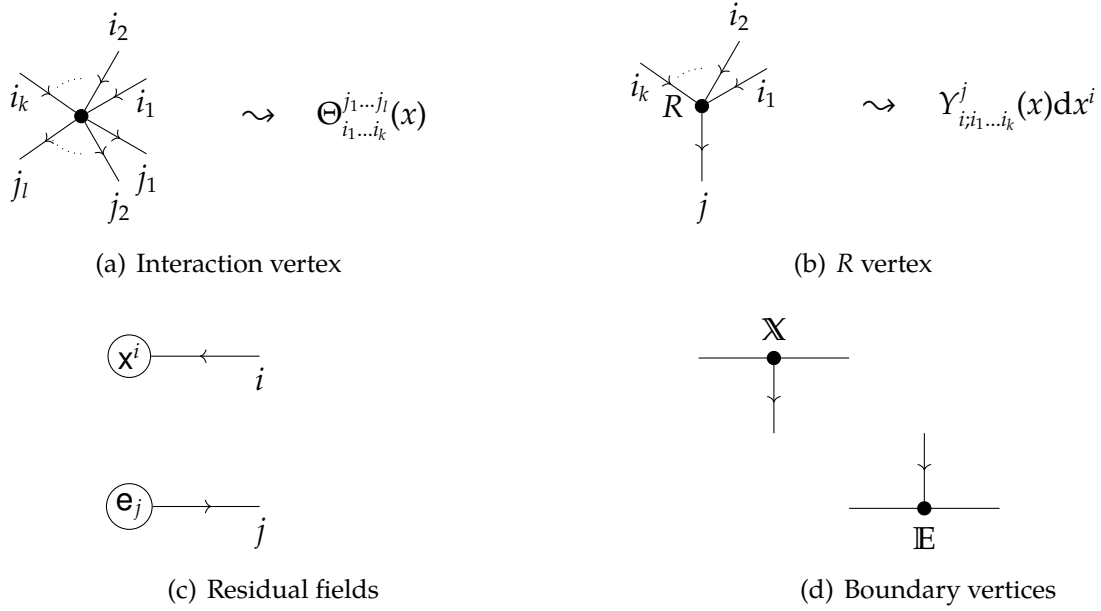

We denote the space of states at any constant field $x$ by $\Hat{\calH}^\calP_{\Sigma,x}:=\calH^\calP_{\partial \Sigma,x}\otimes \Dens^{\frac{1}{2}}(\calV_{\Sigma,x}^\calP)$, where $\Dens^{\frac{1}{2}}(\calV^\calP_{\Sigma,x})$ denotes the space of \emph{half-densities} on $\calV^\calP_{\Sigma,x}$, and view it as the fibers of a vector bundle over $\Sigma$, where the \emph{total space} is defined by $\Hat{\calH}^\calP_{\Sigma,tot}:=\bigsqcup_{x\in\Sigma}\Hat{\calH}^\calP_{\Sigma,x}$

\begin{defn}[Principal covariant state]
The \emph{principal covariant state} is defined by the integral
\begin{equation}
\Tilde{\psi}_{\Sigma,x}:\approx \exp\left(\frac{\I}{\hbar}\sum_\Gamma\hbar^{\ell(\Gamma)}\int_{\mathsf{C}_\Gamma(\Sigma)}\omega_\Gamma(\mathbb{X},\E,\sfx,\sfe,x)\right),
\end{equation}
using the Feynman rules given in Figure \ref{fig:FeynmanRules}
\end{defn}

\begin{rem}
The \emph{principal covariant state} $\Tilde{\psi}_{\Sigma,x}$ can formally be defined as the formal perturbative expansion of the BV integral
\begin{equation}
\Tilde{\psi}_{\Sigma,x}:=\int_{(\mathscr{X},\mathscr{E})\in\calL\subset \calY'}\ee^{\frac{\I}{\hbar}\Tilde{S}_{\Sigma,x}[(\hatX,\hateta)]}\in\Omega^\bullet(M,\Hat{\calH}^\calP_{\Sigma,tot}).
\end{equation}
\end{rem}

The principal state is sufficient for gluing and cutting purposes, but as in the linear case we need to introduce the full state to prove that the \emph{globalized version} of the modified Quantum Master Equation holds. Therefore, we need the concept of \emph{composite fields}, which we denote by square brackets $[\enspace]$, e.g. for a boundary field $\mathbb{A}$ we will write $[\mathbb{A}^{i_1}\dotsm \mathbb{A}^{i_k}]$. They can be understood as a \emph{regularization} of higher functional derivatives: the higher functional derivative $\frac{\delta^k}{\delta\mathbb{A}^{i_1}\dotsm \delta\mathbb{A}^{i_k}}$ gets replaced by a first order functional derivative $\frac{\delta}{\delta[\mathbb{A}^{i_1}\dotsm \mathbb{A}^{i_k}]}$. Concretely, this corresponds to introducing additional boundary vertices as in Figure \ref{fig:composite_field_vertices}.

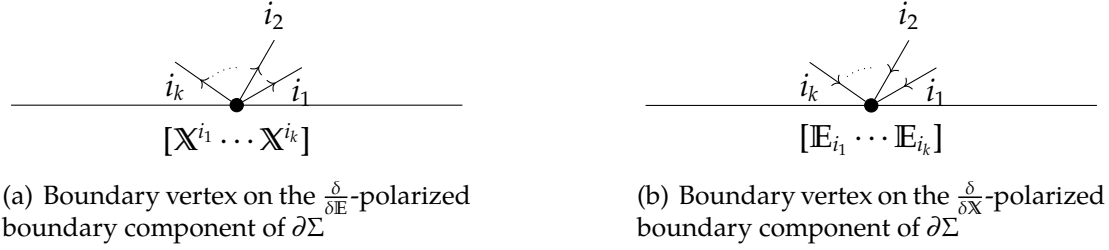
\begin{figure}[h!]
\subfigure[Boundary vertex on the $\frac{\delta}{\delta\E}$-polarized boundary component of $\de \Sigma$] {
\begin{tikzpicture}
\draw (-3,0) -- (3,0); 
\node[vertex] (o) at (0,0) {};
\node[coordinate, label=below:{$[\mathbb{X}^{i_1}\cdots\mathbb{X}^{i_k}]$}] at (o.south) {};
\node[coordinate, label=below:{$i_1$}] (b1) at (30:1) {$i_1$};
\node[coordinate, label=above:{$i_2$}] (b2) at (60:1) {$i_2$};
\node[coordinate, label=below:{$i_k$}] (b3) at (145:1) {$i_k$};
\draw[dotted] (90:0.5) arc (90:130:0.5); 
\draw[fermion] (o) -- (b1);
\draw[fermion] (o) -- (b2);
\draw[fermion] (o) -- (b3);
\end{tikzpicture}
}
\hspace{2cm}
\subfigure[Boundary vertex on the $\frac{\delta}{\delta\mathbb{X}}$-polarized boundary component of $\de \Sigma$] {
\begin{tikzpicture}
\draw (-3,0) -- (3,0); 
\node[vertex] (o) at (0,0) {};
\node[coordinate, label=below:{$[\mathbb{E}_{i_1}\cdots\mathbb{E}_{i_k}]$}] at (o.south) {};
\node[coordinate, label=below:{$i_1$}] at (30:1) {$i_1$}
edge[fermion] (o);
\node[coordinate, label=above:{$i_2$}] at (60:1) {$i_2$}
edge[fermion] (o);
\node[coordinate, label=below:{$i_k$}] at (145:1) {$i_k$}
edge[fermion] (o);
\draw[dotted] (90:0.5) arc (90:130:0.5);
\end{tikzpicture}
}
\caption{Composite field vertices.}\label{fig:composite_field_vertices}
\end{figure} 

\begin{defn}[Full covariant state]
We define the \emph{full covariant state} $\btpsi_{\Sigma,x}$ by 
\begin{equation}
\btpsi_{\Sigma,x}:\approx \exp\left(\frac{\I}{\hbar}\sum_\Gamma\hbar^{\ell(\Gamma)}\int_{\mathsf{C}_\Gamma(\Sigma)}\omega_\Gamma(\mathbb{X},\E,\sfx,\sfe,x)\right)
\end{equation}
using the Feynman rules in Figure \ref{fig:FeynmanRules} and additionally with the rules for the boundary vertices as in Figure \ref{fig:composite_field_vertices}. 
\end{defn}

\begin{rem}
Similarly we have to define the BFV boundary operator with the additional Feynman rules. We call it the \emph{full} BFV boundary operator and denote it by $\boldsymbol{\Omega}_{\de \Sigma}^\calP$. 
\end{rem}

One of the main result of \cite{CMW4} is that this state statisfies the globalized version of the modified Quantum Master Equation, which we call the modified differential Quantum Master Equation (mdQME). It is stated as the following theorem.

\begin{thm}[mdQME for split AKSZ theories \cite{CMW4}]
Consider the full covariant perturbative state $\btpsi_{\Sigma,x}$ as a quantization of an anomaly free split AKSZ theory with target $T^*[d-1]M$, where $M$ is a graded manifold. Then
\begin{equation}
\label{AKSZ_mdQME}
\left(\dr_x -\I\hbar \Delta_{\calV^\calP_{\Sigma,x}} + \frac{\I}{\hbar} \boldsymbol{\Omega}^\calP_{\partial\Sigma}\right) \btpsi_{\Sigma,x}=0,
\end{equation}
where we denote by $\dr_x$ the de Rham differential on $\Bar{M}$, the body of the graded manifold $M$.
\end{thm}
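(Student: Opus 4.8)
The plan is to derive \eqref{AKSZ_mdQME} from the defining BV-integral representation of the full covariant state, by first establishing a \emph{differential} analog of the modified Classical Master Equation for the globalized action $\Tilde\calS_{\Sigma,x}$ of \eqref{formal_glob_action} and then promoting it to the quantum level through the BV pushforward. The three operators in \eqref{AKSZ_mdQME} are to be matched to three distinct sources: $\dr_x$ to the explicit $x$-dependence entering through the formal exponential map $\varphi_x$, $\Delta_{\calV^\calP_{\Sigma,x}}$ to the residual-field part of the BV Laplacian that survives the fiber integral, and $\bOmega^\calP_{\partial\Sigma}$ to the Feynman graphs collapsing on $\partial\Sigma$. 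First I would recall that, ignoring globalization, the state already satisfies the ordinary mQME \eqref{mQME}; the genuinely new content of the theorem is the appearance of $\dr_x$, and this is precisely where formal geometry enters.

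The key classical input is the flatness of the Grothendieck connection $D_\mathsf{G}=\dd+R$ together with the fact that $\T\varphi^*\Theta$ is a $D_\mathsf{G}$-flat section (Remark following \eqref{vector_field}). Writing $\Tilde\calS_{\Sigma,x}$ as the bulk AKSZ action plus the $R$-vertex term $\int_\Sigma Y_i^j(x;\hatX)\hateta_j\dd x^i$, I would show that $\dr_x$ acting on $\T\Tilde\varphi_x^*\Theta$ is compensated by the BV bracket of the action with the $R$-term, so that $\Tilde\calS_{\Sigma,x}$ obeys a differential modified CME of the form
\[
\dr_x\Tilde\calS_{\Sigma,x}+\tfrac12(\Tilde\calS_{\Sigma,x},\Tilde\calS_{\Sigma,x})_\calY=(\text{boundary terms}),
\]
in the spirit of the dCME \eqref{DCME} with parameter space the body $\Bar M$ and $D=\dr_x$. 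Here the vanishing $\{\Theta,\Theta\}=0$ on the target and the identity $D_\mathsf{G}^2=0$ are the two ingredients that conspire to produce the covariant combination, the $R$-vertex being exactly what makes $\dr_x\Theta$ into a $D_\mathsf{G}$-exact object.

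I would then quantize. Applying the bulk BV Laplacian $\Delta_\calY$ to $\ee^{\frac{\I}{\hbar}\Tilde\calS_{\Sigma,x}}$ and using $\Delta_\calY\ee^{\frac{\I}{\hbar}\calS}=\left(\frac{\I}{\hbar}\Delta_\calY\calS+\left(\frac{\I}{\hbar}\right)^2\frac12(\calS,\calS)_\calY\right)\ee^{\frac{\I}{\hbar}\calS}$, the classical identity above turns into the differential analog of \eqref{eq02}; anomaly freeness guarantees that the quantum correction $\Delta_\calY\Tilde\calS$ does not spoil the master equation. The boundary terms, after Schr\"odinger quantization exactly as in the heuristic derivation of the mQME, assemble into the full BFV boundary operator $\bOmega^\calP_{\partial\Sigma}$. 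Finally, performing the fiber BV integral over $\calL\subset\calY'$ and invoking Proposition \ref{fiber_int} replaces $\Delta_\calY$ by $\Delta_{\calV^\calP_{\Sigma,x}}$ on the pushforward, while $\dr_x$ commutes with the fiber integration and passes through unchanged; collecting terms yields \eqref{AKSZ_mdQME}.

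The hard part is the rigorous configuration-space Stokes argument underlying the quantum step. One must identify, face by face, which boundary strata of the FMAS-compactified space $\mathsf{C}_\Gamma(\Sigma)$ feed into each operator: collapses of two bulk vertices cancel by the CME, hidden faces where more than two bulk vertices collapse vanish by the usual vanishing theorems, collapses at $\partial\Sigma$ reassemble into $\bOmega^\calP_{\partial\Sigma}$, and the faces involving the $R$-vertex must organize into $\dr_x$ through the Grothendieck connection. Controlling these contributions simultaneously---and in particular verifying that no residual anomaly survives and that $(\bOmega^\calP_{\partial\Sigma})^2=0$, so that the cohomological interpretation of the state is available---is the technical heart of the proof.
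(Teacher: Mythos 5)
Your sketch cannot be compared against a proof \emph{in this paper}, because the notes state the theorem without proof, quoting it from \cite{CMW4}; the only in-paper material bearing on it is the heuristic derivation of the mQME and Remark \ref{boundary_integrals}. Measured against that material and against the actual argument of \cite{CMW4}, your plan is essentially the intended one: establish a differential analog of the modified CME for $\Tilde{\calS}_{\Sigma,x}$ of \eqref{formal_glob_action} using $D_\mathsf{G}$-flatness and $\{\Theta,\Theta\}=0$ (the $R$-vertex compensating $\dr_x$ on $\T\Tilde{\varphi}_x^*\Theta$), quantize via $\Delta_\calY\,\ee^{\frac{\I}{\hbar}\calS}$, push forward with Proposition \ref{fiber_int}, and run the face-by-face Stokes analysis on $\mathsf{C}_\Gamma(\Sigma)$ with the cancellations you list. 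You also correctly locate the technical heart in the boundary-strata bookkeeping.

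Two points in your write-up need sharpening. First, the theorem concerns the \emph{full} covariant state, not the principal one: collapses of several bulk and boundary vertices at a point of $\partial\Sigma$ produce higher functional derivatives that are only defined after the composite-field regularization of Figure \ref{fig:composite_field_vertices}, and the paper states explicitly that the full state is introduced precisely so that the mdQME can hold; your phrase that the boundary terms ``assemble into $\bOmega^\calP_{\partial\Sigma}$'' uses this silently, and for the principal state \eqref{AKSZ_mdQME} would fail. Second, the claim that ``$\dr_x$ commutes with the fiber integration and passes through unchanged'' is too quick: the residual-field space $\calV^\calP_{\Sigma,x}$, the vertex tensors $\T\varphi_x^*\Theta$ and $Y^j_i(x;\cdot)$, and the state bundle $\Hat{\calH}^\calP_{\Sigma,x}$ all vary with $x$, so $\dr_x\btpsi_{\Sigma,x}$ must be computed graph by graph, and it is exactly these derivative-of-weight contributions that pair against the $R$-vertex faces in Stokes' theorem to yield the covariant combination in \eqref{AKSZ_mdQME}; an inert $\dr_x$ would erase the mechanism that makes the globalized statement nontrivial. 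Finally, note that $(\bOmega^\calP_{\partial\Sigma})^2=0$ is supplied by the anomaly-freeness \emph{hypothesis} (and flatness of $\qtconn$ is a separate theorem in \cite{CMW4}), so it is not something your proof needs to establish. With these corrections your outline coincides with the route taken in the cited source.
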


Note that the operator $\nabla_\mathsf{G}:=\left(\dr_x -\I\hbar \Delta_{\calV^\calP_{\Sigma,x}} + \frac{\I}{\hbar} \boldsymbol{\Omega}_{\partial\Sigma}^\calP\right)$ can be seen as a connection on the total state space $\Hat{\calH}^\calP_{\Sigma,tot}$. Influenced from the classical case, we call $\nabla_\mathsf{G}$ the \emph{quantum Grothendieck BFV (GBFV) operator}. Another main result of \cite{CMW4} is the following theorem.

\begin{thm}[\cite{CMW4}]
The connection $\nabla_\mathsf{G}$ is flat, i.e.
\begin{equation}
(\nabla_\mathsf{G})^2\equiv 0.
\end{equation}
\end{thm}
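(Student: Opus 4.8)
The plan is to compute the curvature $\qtconn^2$ directly and to show that it vanishes order by order in $\hbar$. Since $\dr_x$, $\Delta_{\calV^\calP_{\Sigma,x}}$ and $\boldsymbol{\Omega}^\calP_{\partial\Sigma}$ are all odd operators, the square $\qtconn^2=\tfrac12[\qtconn,\qtconn]$ expands into the three ``diagonal'' squares together with the three graded commutators of distinct summands. Writing this out gives
$$\qtconn^2=\dr_x^2-\hbar^2\Delta_{\calV^\calP_{\Sigma,x}}^2-\tfrac{1}{\hbar^2}(\boldsymbol{\Omega}^\calP_{\partial\Sigma})^2-\I\hbar\,[\dr_x,\Delta_{\calV^\calP_{\Sigma,x}}]+\tfrac{\I}{\hbar}[\dr_x,\boldsymbol{\Omega}^\calP_{\partial\Sigma}]+[\Delta_{\calV^\calP_{\Sigma,x}},\boldsymbol{\Omega}^\calP_{\partial\Sigma}].$$
Because each of these terms carries a distinct power of $\hbar$, they must vanish separately, so flatness is equivalent to the six identities $\dr_x^2=0$, $\Delta_{\calV^\calP_{\Sigma,x}}^2=0$, $(\boldsymbol{\Omega}^\calP_{\partial\Sigma})^2=0$, $[\dr_x,\Delta_{\calV^\calP_{\Sigma,x}}]=0$, $[\dr_x,\boldsymbol{\Omega}^\calP_{\partial\Sigma}]=0$ and $[\Delta_{\calV^\calP_{\Sigma,x}},\boldsymbol{\Omega}^\calP_{\partial\Sigma}]=0$.

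First I would dispose of the four identities that follow from structure already in place. The relation $\dr_x^2=0$ is just the nilpotency of the de Rham differential on $\Bar M$. The identity $\Delta_{\calV^\calP_{\Sigma,x}}^2=0$ is the square-zero property of the fibrewise BV Laplacian on half-densities established in the global version of the BV formalism. The vanishing $(\boldsymbol{\Omega}^\calP_{\partial\Sigma})^2=0$ is precisely the anomaly-freedom hypothesis assumed for the theory. Finally $[\Delta_{\calV^\calP_{\Sigma,x}},\boldsymbol{\Omega}^\calP_{\partial\Sigma}]=0$ holds because the two operators act on different tensor factors of $\calH^\calP_{\partial\Sigma}\otimes\Dens^{\frac12}(\calV^\calP_{\Sigma,x})$---the Laplacian on the residual (bulk) half-densities and the boundary operator on the boundary Hilbert space---exactly as in the linear statement $[\Delta_\calY,\Omega]=0$ recorded earlier.

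The genuine content lies in the two cross terms involving $\dr_x$, namely $[\dr_x,\Delta_{\calV^\calP_{\Sigma,x}}]=0$ and $[\dr_x,\boldsymbol{\Omega}^\calP_{\partial\Sigma}]=0$; here the difficulty is that both $\calV^\calP_{\Sigma,x}$ and $\boldsymbol{\Omega}^\calP_{\partial\Sigma}$ depend on the globalization point $x$, so $\dr_x$ genuinely differentiates their $x$-dependence and the naive ``different variables'' argument fails. I would handle these using formal geometry: the entire $x$-dependence is governed by the Grothendieck connection $D_\mathsf{G}$, whose flatness was established above, and by the $R$-vertex contribution $Y^j_i(x;\hatX)\hateta_j\,\dd x^i$ of the formal globalized action. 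Concretely, one rewrites each commutator in terms of the Feynman-graph expansion defining $\Delta$ and $\boldsymbol{\Omega}$ and shows that the derivative $\dr_x$ of a graph weight is cancelled by the insertion of the $R$-vertex, covariant constancy of the residual-field symplectic and density data with respect to $D_\mathsf{G}$ forcing $[\dr_x,\Delta_{\calV^\calP_{\Sigma,x}}]=0$, and the compatible action of $\dr_x$ and $\boldsymbol{\Omega}$ through the same connection forcing $[\dr_x,\boldsymbol{\Omega}^\calP_{\partial\Sigma}]=0$.

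I expect the pairing of $\dr_x$ against the full boundary operator $\boldsymbol{\Omega}^\calP_{\partial\Sigma}$ to be the main obstacle, because $\boldsymbol{\Omega}$ now incorporates the composite-field (higher functional derivative) vertices of Figure~\ref{fig:composite_field_vertices}, so one must track how $\dr_x$ interacts with these regularized insertions and confirm that no anomalous boundary contribution survives. A useful consistency check, and a partial shortcut, is to apply $\qtconn$ to the mdQME $\qtconn\btpsi_{\Sigma,x}=0$: this immediately yields $\qtconn^2\btpsi_{\Sigma,x}=0$, so the curvature annihilates the covariant state, and combined with the per-order analysis above it pins down the surviving obstructions as exactly the two $\dr_x$-cross terms, which the formal-geometric cancellation then kills.
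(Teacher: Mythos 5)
The paper itself contains no proof of this theorem --- it is quoted from \cite{CMW4} --- so your argument must be measured against the proof given there, and it diverges at the decisive step. Your expansion of $\qtconn^2$ into six graded terms is correct, and the three identities not involving $\boldsymbol{\Omega}^\calP_{\partial\Sigma}$ (namely $\dr_x^2=0$, $\Delta_{\calV^\calP_{\Sigma,x}}^2=0$, and $[\dr_x,\Delta_{\calV^\calP_{\Sigma,x}}]=0$, the last after trivializing the $x$-dependence of the residual fields, as you indicate) are handled as in \cite{CMW4}. But the claim that the six terms ``carry a distinct power of $\hbar$'' and therefore must vanish separately is invalid: $\boldsymbol{\Omega}^\calP_{\partial\Sigma}=\Omega^\calP_0+\Omega^\calP_{\textnormal{pert}}$ is itself a formal power series in $\hbar$ (through the loop order of the collapsing graphs and the quantized momenta $-\I\hbar\frac{\delta}{\delta b}$; compare $\calS^\de_\hbar=\calS^\de+O(\hbar)$ in the bulk discussion), so the six operators mix all $\hbar$-orders and no grading forces termwise vanishing. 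Worse, two of your six identities are false in general. The disposal of $[\Delta_{\calV^\calP_{\Sigma,x}},\boldsymbol{\Omega}^\calP_{\partial\Sigma}]=0$ ``because the operators act on different tensor factors'' imports the linear statement $[\Delta_\calY,\Omega]=0$ into a setting where it fails: the perturbative part of the full boundary operator is built from subgraphs collapsing at $\partial\Sigma$ whose vertices carry legs contracted with the residual fields $(\sfx,\sfe)$, so $\boldsymbol{\Omega}^\calP_{\partial\Sigma}$ acts nontrivially on the factor $\Dens^{\frac{1}{2}}(\calV^\calP_{\Sigma,x})$ as well.

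Consequently the three $\boldsymbol{\Omega}$-terms cannot be killed one at a time. In \cite{CMW4} flatness is reduced, via the three easy identities, to the single combined equation
\[
\dr_x\boldsymbol{\Omega}^\calP_{\partial\Sigma}-\I\hbar\left[\Delta_{\calV^\calP_{\Sigma,x}},\boldsymbol{\Omega}^\calP_{\partial\Sigma}\right]+\frac{\I}{\hbar}\left(\boldsymbol{\Omega}^\calP_{\partial\Sigma}\right)^2=0,
\]
which is established by the same mechanism as the mdQME itself: Stokes' theorem on the compactified configuration spaces, now over the closed manifold $\partial\Sigma$, with $\dr_x$ acting on the $x$-dependent vertex tensors $\mathsf{T}\varphi_x^*\Theta$ and on the $R$-vertex $Y^j_i(x;\cdot)\,\dd x^{i}$ controlled by the differential classical master equation, while the boundary strata of the configuration spaces assemble precisely into the $\boldsymbol{\Omega}^2$-term and the $\Delta$-commutator; the composite-field vertices are introduced exactly so that these contributions can be written as the action of a single operator. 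So your formal-geometry instinct (the $R$-vertex cancelling $\dr_x$ of graph weights) is the right ingredient, but the cancellation operates across the three terms jointly, and requiring $[\dr_x,\boldsymbol{\Omega}^\calP_{\partial\Sigma}]=0$ and $[\Delta_{\calV^\calP_{\Sigma,x}},\boldsymbol{\Omega}^\calP_{\partial\Sigma}]=0$ separately is a decomposition that simply does not hold. Finally, note that your shortcut only yields $\qtconn^2\btpsi_{\Sigma,x}=0$, i.e., that the curvature annihilates one particular state; since $\qtconn^2$ is a differential operator on the fibers of $\Hat{\calH}^\calP_{\Sigma,tot}$ rather than a multiplication operator, this does not pin it down to zero and cannot replace the operator-level argument.
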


\begin{rem}
This construction depends on different choices, such as the propagator, the choice of residual fields and the formal exponential map. In \cite{CMW4} it was shown how the full covariant state behaves under change of data. In particular, it transforms similar to a gauge transformation.
\end{rem}

\subsubsection{Obstruction theory for the Poisson Sigma Model}
Note that the construction above is only true for manifolds with a single boundary polarization. Since the Poisson Sigma Model is a topological field theory of AKSZ type, we are able to use the mdQME for a condition to gauge fix the globalized version of the Poisson Sigma Model, i.e. the model which is related to a globalized version of Kontsevich's star product (\cite{K}). This construction (see \cite{CMW2} for a detailed description for the Moyal product) uses the notion of a \emph{relational symplectic groupoid} (\cite{CC2}) and the gluing procedure in the BV-BFV formalism.

Unfortunately, we have to use a disk with alternating boundary conditions and boundary components without any polarization, so the construction above does not simply apply. We call the points on the boundary where the polarization changes ``corners'' though they are not corners in the usual sense of manifolds.

\begin{figure}[h!]
\centering
\tikzset{
particle/.style={thick,draw=black},
particle2/.style={thick,draw=blue},
avector/.style={thick,draw=black, postaction={decorate},
    decoration={markings,mark=at position 1 with {\arrow[black]{triangle 45}}}},
gluon/.style={decorate, draw=black,
    decoration={coil,aspect=0}}
 }
\begin{tikzpicture}[x=0.04\textwidth, y=0.04\textwidth]
\draw[particle] (0,0)--(2,5) node[above]{};
\draw[particle2] (2,5.5)--(4,5.5)node[above]{};
\draw[particle2] (2,5)--(4,5) node[right]{};
\draw[particle2] (0,0)--(2,0) node[right]{};
\draw[particle2] (0,-0.5)--(2,-0.5)node[below]{};
\draw[particle2] (4,-0.5)--(6,-0.5)node[below]{};
\node[](u1) at (2,0){};
\node[](u2) at (4,0){};
\draw[particle2] (4,0)--(6,0) node[right]{};
\draw[particle] (6,0)--(4,5) node[right]{};
\node[](2) at (0.7,0){};
\node[](3) at (1.4,0){};
\node[](11) at (2,5.5){};
\node[](12) at (4,5.5){};
\node[](13) at (0,-0.5){};
\node[](14) at (2,-0.5){};
\node[](15) at (4,-0.5){};
\node[](16) at (6,-0.5){};
\node[](p1) at (3,6.5){};
\node[](p2) at (1,-1.5){};
\node[](p3) at (5,-1.5){};
\node[](p4) at (3,7){$\delta_{\tilde x}$};
\draw[fill=black] (p1) circle (0.05cm);
\draw[fill=black] (p2) circle (0.05cm);
\draw[fill=black] (p3) circle (0.05cm);
\node[](f) at (1,-2){$f$};
\node[](g) at (5,-2){$g$};
\node[](x13) at (5,-0.75){$\mathbb{X}$};
\node[](x12) at (1,-0.75){$\mathbb{X}$};
\node[](eta11) at (3,5.77){$\E$};
\node[](eta1) at (0,2.5){$\Hat{\boldsymbol{\eta}}=0$};
\node[](eta2) at (6,2.5){$\Hat{\boldsymbol{\eta}}=0$};
\node[](eta3) at (3,0){$\Hat{\boldsymbol{\eta}}=0$};
\node[](e1) at (1,0.3){$\E$};
\node[](e2) at (5,0.3){$\E$};
\node[](e3) at (3,4.7){$\mathbb{X}$};
\semiloop[particle]{u1}{u2}{0};
\semiloop[particle]{11}{12}{0};
\semiloop[particle]{14}{13}{180};
\semiloop[particle]{16}{15}{180};
\end{tikzpicture}
\caption{The gluing for the star product for two smooth maps $f$ and $g$. This will produce the star product $f\star g$. Here $\delta_{\tilde x}$ is a delta function $x\mapsto \delta(\tilde x-x)$.}
\label{gluing_PSM}
\end{figure}
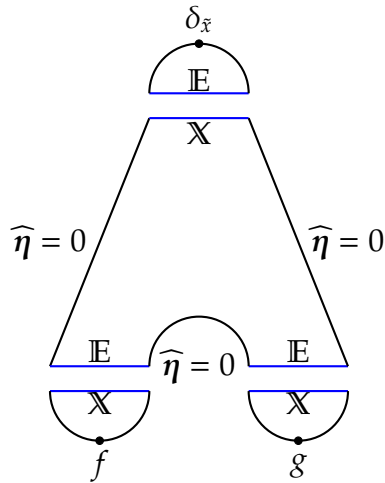

To produce Kontsevich's star product, the boundary of the disk is going to be glued together as in Figure \ref{gluing_PSM}, where the black boundary components are not polarized, but rather carry another boundary condition, namely that $\Hat{\boldsymbol{\eta}}$ vanishes. Thus there will be graphs collapsing to the boundary of the manifold in the boundary of the compactified configuration space, which cannot be taken care of by $\Omega^\calP_{\partial\Sigma}$.  

However, in \cite{CMW3} it was shown that there is way out of this by introducing another term for the action, and using Kontsevich's formality map and the concept of Fedosov quantization. This will lead to new terms appearing in the mdQME, which kill the extra terms such that the mdQME, and thus gauge independence, is indeed satisfied. This is called the \emph{twisted theory} (\cite{CMW3}).

\printbibliography

\end{document}